\def\anon{1}
\def\archiveauthor{1}
\def\preprint{1}

\PassOptionsToPackage{unicode}{hyperref}
\PassOptionsToPackage{hyphens}{url}
\PassOptionsToPackage{dvipsnames,svgnames,x11names}{xcolor}
\documentclass[12pt]{article}

\usepackage{amsmath,amssymb}
\usepackage{iftex}
\ifPDFTeX
  \usepackage[T1]{fontenc}
  \usepackage[utf8]{inputenc}
  \usepackage{textcomp}
\else
  \usepackage{unicode-math}
  \defaultfontfeatures{Scale=MatchLowercase}
  \defaultfontfeatures[\rmfamily]{Ligatures=TeX,Scale=1}
\fi
\usepackage{lmodern}
\IfFileExists{upquote.sty}{\usepackage{upquote}}{}
\IfFileExists{microtype.sty}{%
  \usepackage[]{microtype}
  \UseMicrotypeSet[protrusion]{basicmath}
}{}
\makeatletter
\@ifundefined{KOMAClassName}{%
  \IfFileExists{parskip.sty}{%
    \usepackage{parskip}
  }{%
    \setlength{\parindent}{0pt}
    \setlength{\parskip}{6pt plus 2pt minus 1pt}}
}{%
  \KOMAoptions{parskip=half}}
\makeatother
\usepackage{xcolor}
\makeatletter
\ifx\paragraph\undefined\else
  \let\oldparagraph\paragraph
  \renewcommand{\paragraph}{\@ifstar\xxxParagraphStar\xxxParagraphNoStar}
  \newcommand{\xxxParagraphStar}[1]{\oldparagraph*{#1}\mbox{}}
  \newcommand{\xxxParagraphNoStar}[1]{\oldparagraph{#1}\mbox{}}
\fi
\ifx\subparagraph\undefined\else
  \let\oldsubparagraph\subparagraph
  \renewcommand{\subparagraph}{\@ifstar\xxxSubParagraphStar\xxxSubParagraphNoStar}
  \newcommand{\xxxSubParagraphStar}[1]{\oldsubparagraph*{#1}\mbox{}}
  \newcommand{\xxxSubParagraphNoStar}[1]{\oldsubparagraph{#1}\mbox{}}
\fi
\makeatother

\usepackage{longtable,booktabs,array}
\usepackage{calc}
\usepackage{etoolbox}
\makeatletter
\patchcmd\longtable{\par}{\if@noskipsec\mbox{}\fi\par}{}{}
\makeatother
\IfFileExists{footnotehyper.sty}{\usepackage{footnotehyper}}{\usepackage{footnote}}
\makesavenoteenv{longtable}
\usepackage{graphicx}
\makeatletter
\def\maxwidth{\ifdim\Gin@nat@width>\linewidth\linewidth\else\Gin@nat@width\fi}
\def\maxheight{\ifdim\Gin@nat@height>\textheight\textheight\else\Gin@nat@height\fi}
\makeatother
\setkeys{Gin}{width=\maxwidth,height=\maxheight,keepaspectratio}
\makeatletter
\def\fps@figure{htbp}
\makeatother

\makeatletter
\@ifpackageloaded{caption}{}{\usepackage{caption}}
\AtBeginDocument{%

  }
\@ifpackageloaded{float}{}{\usepackage{float}}
\floatstyle{ruled}
\@ifundefined{c@chapter}{\newfloat{codelisting}{h}{lop}}{\newfloat{codelisting}{h}{lop}[chapter]}
\floatname{codelisting}{Listing}

\@ifpackageloaded{subcaption}{}{\usepackage{subcaption}}
\makeatother

\ifLuaTeX
  \usepackage{selnolig}
\fi
\usepackage[]{natbib}
\usepackage{bookmark}
\IfFileExists{xurl.sty}{\usepackage{xurl}}{}
\usepackage{amsthm,xr-hyper,bm,ifsym,dsfont,multirow,rotating,lscape,tikz,bbm}
\usepackage[normalem]{ulem}
\usepackage{enumerate,algorithm,algorithmic,bigstrut,epstopdf,accents,makecell}
\usepackage[graphicx]{realboxes}
\usepackage{cleveref}

\providecommand{\anon}{0}
\providecommand{\preprint}{0}
\providecommand{\preprintdate}{September 4, 2026}
\if1\anon
  \newcommand{\pdfauthorfield}{Ye Shi; Xiao Jin; Chung-Piaw Teo}
\else
  \newcommand{\pdfauthorfield}{Anonymous}
\fi

\hypersetup{
  pdftitle={Optimizer as Detector: Stochastic Gradient Descent for Latent Mixture Models},
  pdfauthor={\pdfauthorfield},
  pdfkeywords={homogeneity testing; competitive loss; diffusion approximation; latent heterogeneity},
  colorlinks=true,
  linkcolor={blue},
  filecolor={Maroon},
  citecolor={Blue},
  urlcolor={Blue},
  pdfcreator={LaTeX}}

\bibpunct[, ]{(}{)}{,}{a}{}{,}%
\theoremstyle{plain}
\newtheorem{theorem}{Theorem}
\newtheorem{proposition}{Proposition}
\newtheorem{lemma}{Lemma}
\newtheorem{corollary}{Corollary}
\theoremstyle{definition}
\newtheorem{assumption}{Assumption}

\theoremstyle{remark}
\newtheorem{remark}{Remark}

\crefname{theorem}{Theorem}{Theorems}
\Crefname{theorem}{Theorem}{Theorems}
\crefname{proposition}{Proposition}{Propositions}
\Crefname{proposition}{Proposition}{Propositions}
\crefname{lemma}{Lemma}{Lemmas}
\Crefname{lemma}{Lemma}{Lemmas}
\crefname{corollary}{Corollary}{Corollaries}
\Crefname{corollary}{Corollary}{Corollaries}
\crefname{assumption}{Assumption}{Assumptions}
\Crefname{assumption}{Assumption}{Assumptions}
\crefname{definition}{Definition}{Definitions}
\Crefname{definition}{Definition}{Definitions}
\crefname{remark}{Remark}{Remarks}
\Crefname{remark}{Remark}{Remarks}

\makeatletter
\newcommand*\bigcdot{\mathpalette\bigcdot@{.75}}
\newcommand*\bigcdot@[2]{\mathbin{\vcenter{\hbox{\scalebox{#2}{$\m@th#1\bullet$}}}}}
\makeatother

\newcommand{\R}{\mathbb{R}}
\newcommand{\E}{\mathbb{E}}
\newcommand{\Var}{\mathrm{Var}}
\newcommand{\sgn}{\mathrm{sgn}}
\newcommand{\bfe}{\bar{e}}
\newcommand{\ind}{\mathbf{1}}

\begin{document}
\def\spacingset#1{\renewcommand{\baselinestretch}%
{#1}\small\normalsize} \spacingset{1}

\providecommand{\archiveauthor}{0}
\if1\anon
{
  \title{\bf Optimizer as Detector: Stochastic Gradient Descent for Latent Mixture Models}
  \if1\preprint
    \date{\preprintdate}
  \else
    \date{}
  \fi
  \if1\archiveauthor
  \author{
    Ye Shi$^{*}$, Xiao Jin$^{+}$, and Chung-Piaw Teo$^{\dagger}$\\[0.5em]
    {\footnotesize $^{*}$School of Management, University of Science and Technology of China,}\\
    {\footnotesize $^{+}$Institute of Operations Research and Analytics,
    National University of Singapore}\\
    {\footnotesize $^{\dagger}$NUS Business School, National University of Singapore}}
  \else
  \author{
    Ye Shi\thanks{Corresponding author: Ye Shi; E-mail: hecules@ustc.edu.cn}\\
    School of Management, University of Science and Technology of China\\
    Hefei, Anhui 230026, China
    \and
    Xiao Jin\\ Institute of Operations Research and Analytics\\
    National University of Singapore, Singapore 117602
    \and
    Chung-Piaw Teo\\ NUS Business School\\
    National University of Singapore, Singapore 119245}
  \fi
  \maketitle
} \fi

\if0\anon
{
  \bigskip
  \bigskip
  \bigskip
  \begin{center}
    {\LARGE\bf Optimizer as Detector: Stochastic Gradient Descent for Latent Mixture Models}
  \end{center}
  \medskip
} \fi

\bigskip
\begin{abstract}
Pooling latent subpopulations can obscure relationships and yield misleading regression conclusions, including Simpson's paradox (SP).  We propose a detector based on the steady-state dynamics of constant-step stochastic gradient descent (SGD).  Unlike likelihood-based mixture tests and confounder-search methods, it requires neither a normal-mixture specification nor observed candidate confounders.  Two linear pieces compete under a winner-take-all squared-error loss, and their normalized terminal separation forms the test statistic.  Using diffusion approximations, we derive its asymptotic null distribution for general centered scalar covariates and for Gaussian multivariate covariates under symmetric noise.  The distribution-dependent null center reduces to the dimension-free constant $4/\pi$ when both covariates and noise are Gaussian.  We establish asymptotic size control and consistency against fixed mixture alternatives under regularity conditions.  We extend the method to intercept and partial-mixture heterogeneity and study endogeneity, heteroskedasticity, and nonlinear misspecification.  Simulations examine calibration, power, and robustness.  Finally, a three-stage Detect--Screen--Verify toolkit separates evidence of heterogeneity from its substantive explanation.  Across eight public datasets, it recovers four established SP benchmarks and identifies four cases that, to our knowledge, have not been documented previously.  The detector requires neither latent-group labels nor a prespecified number of mixture components.
\end{abstract}

\noindent{\it Keywords:} homogeneity testing; competitive loss; diffusion approximation; latent heterogeneity
\vfill

\newpage
\spacingset{1.8} 

\section{Introduction}\label{sec:intro}

Many real-world regression analyses pool observations drawn from different subpopulations.  When the subgroup labels are unobserved, a single fitted relationship can conceal meaningful differences across the underlying groups and lead to misleading conclusions.  Simpson's paradox (SP) is an extreme manifestation of this problem: the aggregate and conditional associations have opposite signs after stratification~\citep{blyth1972simpson,Pearl2009}.  For the empirical comparisons, motivated by the practical relaxation of the all-strata condition in \citet{alipourfard2018simpsons}, we record SP when a strict majority of eligible strata reverse the pooled sign.  Pooling can therefore reverse a substantive scientific or policy conclusion, a risk that continues to attract attention in statistics, economics, and public policy~\citep{mittal1991homogeneity,albers2015dutch}.

One natural response to latent heterogeneity is to estimate a mixture regression.  Statistical approaches typically rely on likelihood-based estimation~\citep{quandt1972switching}, most often implemented through the expectation-maximization (EM) algorithm or variants such as alternating minimization and online EM~\citep{dempster1977,yi2014alternating,liu2024convergence}.  In particular, \citet{kasahara2015testing} develop a formal modified penalized-EM test of $m=m_0$ against $m=m_0+1$ components in normal mixture regression.  These methods impose parametric distributional assumptions, and estimation generally conditions on a candidate number of mixture components; moreover, most theoretical guarantees focus on two-component mixtures and establish local or global convergence~\citep{balakrishnan2017statistical,kwon2019global}.  Optimization-based alternatives, including mixed-integer programming (MIP), jointly assign observations to clusters and estimate their regression parameters, but computational scalability can become a central concern~\citep{park2017algorithms,wang2019convergence}.  Several studies instead search directly for SP using exhaustive enumeration or classification-tree methods, both of which require observed candidate confounders~\citep{xu2018detecting,shmueli2018forest}.

\subsection{Our Approach}

We approach this question through the dynamics of an optimizer.  Given observations $\{(y_t,x_t)\}_{t=1}^T$, we define a nonconvex \textit{competitive loss} as the smaller of the squared residuals from two linear models with parameters $\theta_1$ and $\theta_2$.  Applying constant-step stochastic gradient descent (SGD) to this loss creates winner-take-all competition: at each round, only the better-fitting model is updated.  Our key observation is that the resulting competition encodes information about whether the data are homogeneous.  \Cref{sec:setup} states the loss, algorithm, and hypotheses formally, with $H_0$ denoting a single linear model and $H_1$ a latent mixture of distinct linear models.

\begin{figure}[ht]
	\centering
	\begin{subfigure}{0.4\textwidth}
		\centering
		\includegraphics[width=\linewidth]{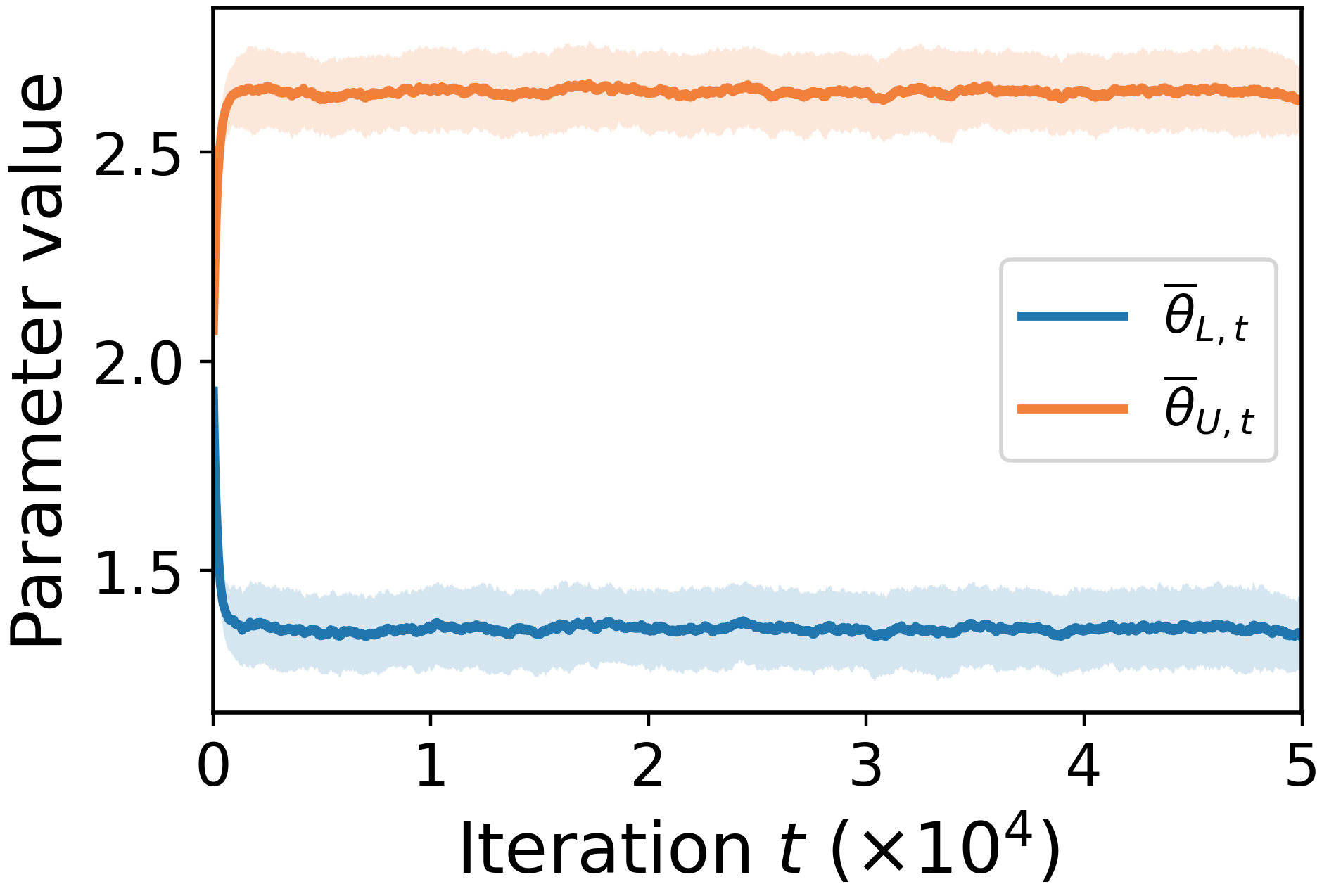}
		\caption{Mean ordered trajectories}
		\label{fig:detect-traj}
	\end{subfigure}
	\qquad
	\begin{subfigure}{0.4\textwidth}
		\centering
		\includegraphics[width=\linewidth]{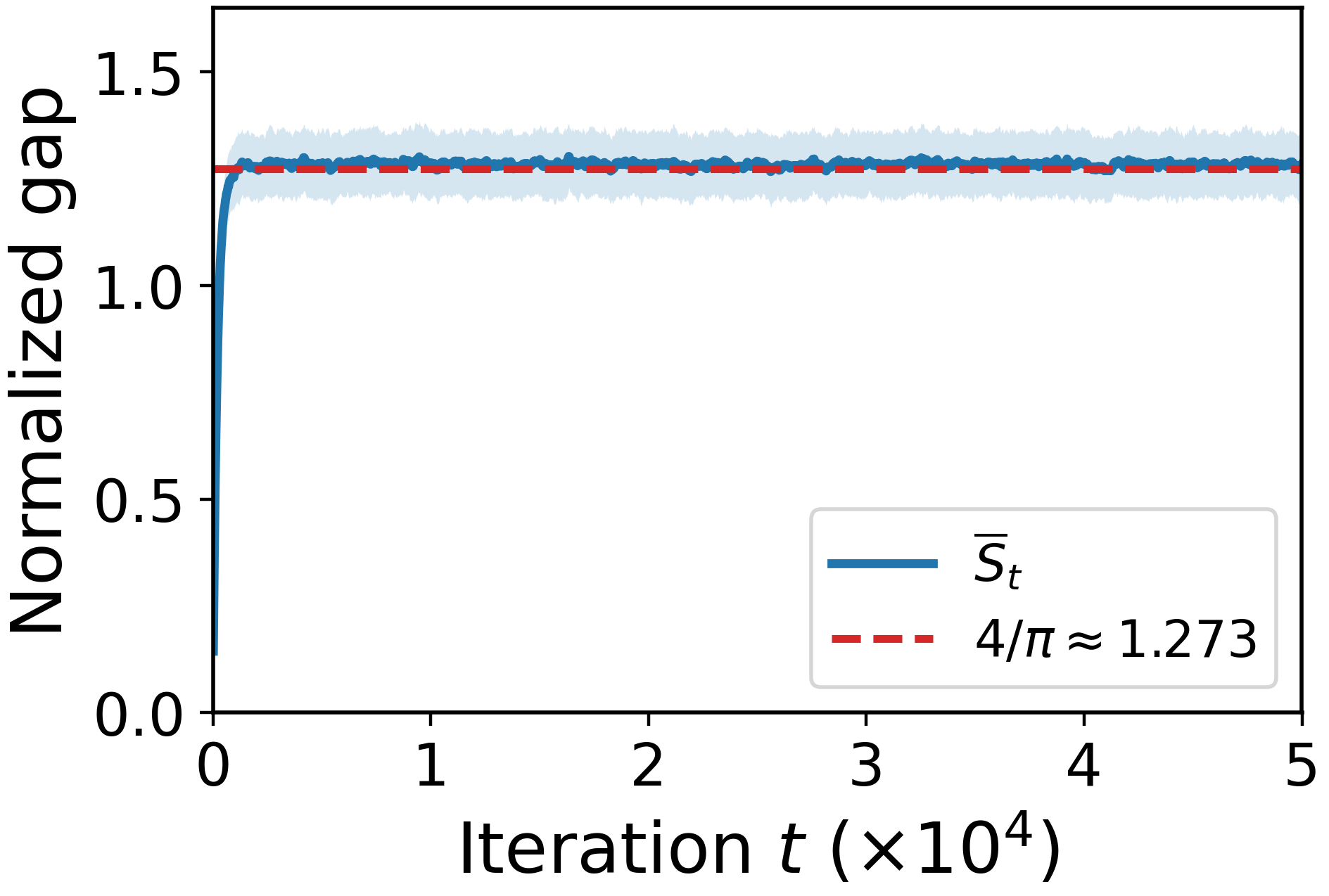}
		\caption{Mean normalized gap}
		\label{fig:detect-ST}
	\end{subfigure}
	\caption{Constant-step SGD dynamics under $H_0$.  Within each path, the iterates are ordered as lower ($L$) and upper ($U$) before averaging.  Solid curves are pointwise across-path means, and shaded bands show one across-path standard deviation.  The red dashed line in panel~(b) marks the later-derived Gaussian null benchmark $4/\pi$.}
	\label{fig:detection-example}
\end{figure}

\Cref{fig:detection-example} illustrates the mechanism in a homogeneous scalar example under $H_0$: $y_t=x_t^\top\theta^{\mathrm{true}}+\varepsilon_t$, where $\theta^{\mathrm{true}}=2$, $x_t,\varepsilon_t\sim N(0,1)$, and $\eta=0.01$.  The figure averages 200 independent paths after ordering the two iterates within each path.  Although the data contain only one regression model, the mean lower and upper iterates do not collapse to the same value.  Instead, they \textit{spontaneously separate} and approach a stable, symmetric long-run pattern around $\theta^{\mathrm{true}}$ (panel~(a)), while their mean normalized gap settles near the later-derived Gaussian benchmark $4/\pi\approx1.273$ (panel~(b)).  Constant-step updates also generate persistent fluctuations around this pattern, as shown by the shaded bands.  Thus, even homogeneous data produce a positive and random terminal gap.

This positive gap under homogeneity motivates our test statistic.  Rather than asking whether the two SGD pieces differ, we compare their \textit{normalized gap} at the terminal iteration with a fitted upper null critical value.  A gap that is unusually large relative to this benchmark provides evidence against a homogeneous regression.  In this way, the optimizer itself supplies the test statistic without fitting a finite-mixture likelihood, specifying the true number of latent groups, or observing their labels.  Making this rule inferentially valid, however, requires us to characterize the null dynamics, calibrate their persistent stochastic fluctuations, and establish that latent mixtures produce a distinguishable signal.

 \subsection{Contributions}

First, we provide the theoretical foundation for our method through a diffusion-approximation analysis \citep{fan2018statistical,hu2019diffusion,wang2020asymptotic,liu2021diffusion}.  The analysis has two layers.  A deterministic ODE describes the mean-field dynamics and identifies where the competing models settle, whereas a local SDE characterizes their stochastic fluctuations.  Under the null hypothesis $H_0$, the ODE has stable antisymmetric equilibria: with a scalar covariate, the two estimator locations are symmetrically positioned around the homogeneous coefficient under general centered covariates and symmetric noise; with Gaussian multivariate covariates, the equilibria form a Mahalanobis ellipsoid.  Under the alternative $H_1$, by contrast, no coincident state is stable, and the separated equilibria covered by our analysis have component-noise-normalized gaps strictly larger than their matched null references.  The SDE analysis then yields an asymptotically Gaussian law for the null gap, whose distribution-dependent center reduces to the dimension-free constant $4/\pi$ when both covariates and noise are Gaussian.  This contrast between the null and alternative dynamics establishes consistency under explicit terminal-concentration and calibration conditions.

Second, we translate this dynamical contrast into a feasible testing procedure.  Symmetric initialization directs the iterates toward the desired null regime, and a fitted upper null quantile yields asymptotic Type~I error control.  Because power depends on standardizing by the noise within mixture components rather than by the pooled residual variation, we develop a variance function regression that estimates the required component-noise scale under both hypotheses when its quadratic terms are identified.  We also extend the procedure to intercept heterogeneity and partial mixture structure.  Finally, we examine robustness to endogeneity, heteroskedasticity, and nonlinear misspecification.

Third, we demonstrate the empirical behavior of the detector.  Synthetic experiments examine its null calibration, power, and robustness across a range of data-generating processes.  We then embed the test in a three-stage Detect--Screen--Verify (DSV) toolkit that deliberately separates the existence of heterogeneity from its substantive explanation.  Across eight datasets spanning several disciplines, DSV reproduces four established SP benchmarks and uncovers four additional cases that, to our knowledge, have not been documented previously.  

\paragraph*{Literature Review.}
Our work connects two streams of literature: mixture regression and heterogeneity detection, and stochastic gradient descent and diffusion approximation.  To conserve space, we provide a detailed review in Section~\ref{app:literature-review} of the Supplementary Materials.

\paragraph*{Organization and Notation.}
\Cref{sec:setup} introduces the model and procedure, \Cref{sec:detection} develops the theory, \Cref{sec:extensions} studies extensions and robustness, and \Cref{sec:framework} presents simulations and applications; concluding remarks are shown in \Cref{sec:conclusion} .
Throughout this paper, $Z\sim N(0,1)$ denotes a standard normal random variable, and $x^\top y$ denotes the inner product.  The Mahalanobis norm is $\|e\|_{\Sigma}=\sqrt{e^\top\Sigma\, e}$, $\|A\|_{\mathrm{op}}$ denotes the spectral (operator) norm of a matrix~$A$, the indicator function is $\ind\{\cdot\}$, and $\dot{f} = df/d\tau$ denotes the time derivative.  We write $[K] = \{1,\ldots,K\}$.  We use $X_n = O_p(a_n)$ to mean $X_n/a_n$ is bounded in probability, i.e., for every $\epsilon > 0$ there exists $M$ such that $\Pr(|X_n/a_n| > M) < \epsilon$ for all~$n$.  Similarly, $X_n = o_p(a_n)$ means $X_n/a_n \to 0$ in probability.  We write $a_n \asymp b_n$ if $a_n/b_n$ is bounded away from $0$ and $\infty$.



\section{Problem Setup}\label{sec:setup}


Consider a setting in which a data analyst observes regression data $\{(y_t, x_t)\}_{t\in [T]}$, where $y_t\in \R$ is the outcome and $x_t\in \mathcal{X} \subseteq \R^d$ is the covariate vector.
We formalize the detection question as a hypothesis test.  Under the null hypothesis, the data arise from a homogeneous linear model:
\[
	H_0:\; y_t = x_t^\top\theta^{\mathrm{true}} + \varepsilon_t \quad \text{for all } t,
	\qquad \theta^{\mathrm{true}}\in\R^d.
\]
Under the alternative, the triples $(x_t,j_t,\varepsilon_t)$ are i.i.d., where the latent group $j_t\in[K^*]$ ($K^*\ge 2$) has positive marginal probabilities $\pi_z=\Pr(j_t=z)>0$.  The outcome follows a latent mixture model, also known as mixture regression:
\begin{equation}\label{eq:dgp}
	H_1:\; y_t = x_t^\top\theta_{j_t}^{\mathrm{true}} + \varepsilon_t,
	\qquad \theta_i^{\mathrm{true}}\neq\theta_j^{\mathrm{true}} \text{ for some } i\neq j,
\end{equation}
where $\theta_1^{\mathrm{true}},\ldots,\theta_{K^*}^{\mathrm{true}}\in\R^d$ are the true group-specific parameters and $\varepsilon_t$ is additive noise independent of $(x_t,j_t)$.

Throughout, the analyst does not know $K^*$, $\pi$, or any confounder or group labels $j_t$ (under $H_1$) to carry out the test; the detection procedure requires only the regression data.

We now describe the algorithmic primitive that underlies the test.  The analyst maintains two parameter vectors $\theta=(\theta_1,\theta_2)\in\R^{2d}$ and minimizes the \textit{competitive} loss
\begin{equation}\label{eq:competitive-loss}
		f_t(\theta)=\min_{k\in\{1,2\}}\{g_{k,t}(\theta)\}, t\in[T],
\end{equation}
where 	$g_{k,t}(\theta) =(y_t-\theta_k^\top x_t)^2$ for $k\in\{1,2\}$.   We call \eqref{eq:competitive-loss} the \textit{competitive loss} because the $\min$ operator selects the better-fitting of the two models at each round, creating a winner-take-all competition between the two pieces; it also renders the composite loss non-convex.  

\begin{algorithm}[ht]
	\caption{SGD with the Competitive Loss} \footnotesize
	\label{alg:sgd}
	\begin{algorithmic}[1]
		\REQUIRE Data $\{(y_t, x_t)\}_{t\in [T]}$; step size $\eta  > 0$; initial estimates $\theta_{1,0}, \theta_{2,0}$
		\FOR{$t = 1, \dots, T$}
		\STATE $r_{k,t} \leftarrow y_t - \theta_{k,t}^\top x_t$ for $k = 1, 2$
		\STATE $w_t \leftarrow \arg\min_{k \in \{1,2\}}\{ r_{k,t}^2\}$ (break ties uniformly at random)
		\STATE $\theta_{w_t,t+1} \leftarrow \theta_{w_t,t} + \eta \, r_{w_t,t}\, x_t$;\quad $\theta_{k,t+1} \leftarrow \theta_{k,t}$ for $k \neq w_t$
		\ENDFOR
	\end{algorithmic}
\end{algorithm}

The data analyst minimizes this loss via SGD, as shown in \Cref{alg:sgd}.  Let $r_{k,t} = y_t - \theta_{k,t}^\top x_t$ denote the residual for piece~$k\in\{1,2\}$ at round~$t$, and let $w_t$ be a single minimizer of $r_{k,t}^2$, chosen uniformly at random when there is a tie.  The update uses the constant residual-update coefficient $\eta>0$:
\begin{equation}\label{eq:sgd-update}
	\theta_{k,t+1} = \theta_{k,t} + \eta\, \ind\{k=w_t\}\, r_{k,t}\, x_t,
	\qquad k \in \{1, 2\}.
\end{equation}
At each round, only the winning piece $w_t$ takes a gradient step and the other is left unchanged.  Because $\eta$ is held constant and the data $(x_t,\varepsilon_t)$ are i.i.d., the iterate sequence $(\theta_{1,t},\theta_{2,t})_{t\ge1}$ is a time-homogeneous \textit{Markov chain}.  Unlike decaying-step SGD, its contracting coordinates retain fluctuations of order $\sqrt\eta$ after the chain enters a locally attracting separated regime.  The null law in \Cref{prop:null-distribution} is the equilibrium diffusion law of the gap functional in that localized regime.


Because the competitive loss is nonconvex and nonsmooth, we analyze~\eqref{eq:sgd-update} using the weak-convergence framework for constant-step stochastic approximation~\citep{kushner2003}.  The resulting ODE identifies the mean-field equilibria and their stability, while a local SDE describes the $\sqrt\eta$ fluctuations around a selected equilibrium.  

Under $H_0$, let $e_{k,t}=\theta^{\mathrm{true}}-\theta_{k,t}$ denote the error of piece~$k$; thus $e_{1,t}=-e_{2,t}$ means that the two estimators are symmetric around $\theta^{\mathrm{true}}$.  The error recursion is
\begin{equation}\label{eq:error-recursion}
	e_{k,t+1} = e_{k,t} - \eta\,\ind\{k=w_t\}\,r_{k,t}\,x_t, \qquad k \in\{1,2\},
\end{equation}
where $r_{k,t} = \varepsilon_t + e_{k,t}^\top x_t$.  Write $I_{k,t}=\ind\{k=w_t\}$.  Given the current errors, decompose the increment $\Delta e_{k,t}=e_{k,t+1}-e_{k,t}$ into its conditional mean and a martingale difference:
\begin{equation}\label{eq:decomposition}
	\Delta e_{k,t} = \eta\,\mu_k(e_{1,t}, e_{2,t}) + \eta\,\xi_{k,t}.
\end{equation}
For a generic state $(e_1,e_2)$ and fresh observation $(x,\varepsilon)$, let $I_k(e_1,e_2;x,\varepsilon)$ denote the winner indicator under the same uniform tie rule.  The conditional mean
\[
	\mu_k(e_1,e_2)=-\E\!\left[I_k(e_1,e_2;x,\varepsilon)(\varepsilon+e_k^\top x)x\right]
\]
is the \textit{drift}.  Define the martingale-difference term
$
	\xi_{k,t}:=-I_{k,t}r_{k,t}x_t-\mu_k(e_{1,t},e_{2,t}),
$
which has conditional mean zero given the current errors.  Let $\Gamma(e)\Gamma(e)^\top$ denote the full joint covariance of the two-piece SGD noise vector $(\xi_1,\xi_2)$ at $e=(e_1,e_2)$.

On the rescaled time $\tau=\eta t$, the mean-field limit is
\begin{equation}\label{eq:mean-ode}
	\dot{e}_k = \mu_k(e_1, e_2), \quad k \in \{1, 2\},
\end{equation}
whose stable equilibria satisfy $\mu_k(e_1,e_2)=0$.  Retaining the leading stochastic term gives
\begin{equation}\label{eq:sde}
	de = \mu(e)\,d\tau + \sqrt{\eta}\,\Gamma(e)\,dW_\tau,
\end{equation}
where $\mu=(\mu_1,\mu_2)$ and $W_\tau$ is standard Brownian motion.  Linearizing this diffusion at a separated attracting equilibrium yields an Ornstein--Uhlenbeck process.  Under the iteration regime in \Cref{prop:null-distribution}, its stationary fluctuations in the contracting directions determine the null law of the gap statistic.  Technical justification and verification of this localized approximation are provided in Section~\ref{app:diffusion-approximation} of the Supplementary Materials.

We impose the following regularity conditions for the detection analysis.  Under $H_0$, $\varepsilon_t=y_t-x_t^\top\theta^{\mathrm{true}}$; under $H_1$, $\varepsilon_t=y_t-x_t^\top\theta_{j_t}^{\mathrm{true}}$ as in~\eqref{eq:dgp}.

\begin{assumption}\label{ass:diffusion}
	Under the hypothesis being considered:		
	(a)~The pairs $(x_t,\varepsilon_t)$ are i.i.d.\ under $H_0$, and the triples $(x_t,j_t,\varepsilon_t)$ are i.i.d.\ under $H_1$.  Moreover, $\E[x_t]=0$ and $\Sigma:=\E[x_tx_t^\top]\succ0$.  Under $H_0$, $\varepsilon_t$ is independent of $x_t$.  Under $H_1$, $\varepsilon_t$ is independent of $(x_t,j_t)$.	
	(b)~The noise $\varepsilon_t$ has a distribution symmetric about zero with a bounded density $f_\varepsilon$ that is continuous at zero, $f_\varepsilon(0)>0$, and $\E[\varepsilon_t^2]=\sigma_\varepsilon^2<\infty$.	
	(c)~$\E[\varepsilon_t^2 \|x_t\|^2] < \infty$ and $\E[\|x_t\|^4] < \infty$.	
	(d)~Under $H_1$, $\inf_{b\in\R^d}\E[\{x_t^\top(\theta_{j_t}^{\mathrm{true}}-b)\}^2]>0$.
	(e)~Under either $H_0$ or $H_1$, the plug-in nuisance estimators $(\hat\sigma_\varepsilon,\hat\Sigma)$ of $(\sigma_\varepsilon,\Sigma)$ satisfy $\hat\sigma_\varepsilon\xrightarrow{p}\sigma_\varepsilon$ and $\|\hat\Sigma-\Sigma\|_{\mathrm{op}}\xrightarrow{p}0$.
\end{assumption}

In \Cref{ass:diffusion}, Part~(a) centers the marginal covariate distribution while allowing dependence between $x_t$ and $j_t$; accordingly, the conditional assignment probabilities $\pi_z(x):=\Pr(j_t=z\mid x_t=x)$ may vary with $x$.  The analysis is stated for the centered covariates used by the detector, while intercept heterogeneity is treated separately below.  Parts~(b)--(c) support the explicit null equilibrium and the localized diffusion argument, as verified in the Supplementary Materials.  Part~(d) requires the mixture predictors to retain a positive mean-square separation from every homogeneous linear predictor on the realized covariate distribution.  The condition is analogous to the support condition in \citet{kasahara2015testing}, which requires distinct regression parameters to yield different linear predictors with positive probability.  Part~(e) requires consistency of the plug-in nuisance estimators; their identification under $H_1$ is discussed with \Cref{alg:test} and \Cref{sec:scale-estimation}.



\section{ Theoretical Results}\label{sec:detection}

In this section, we analyze SGD with the competitive loss and formalize observations from \Cref{fig:detection-example}.  We first characterize the stable equilibrium under $H_0$, derive the null distribution of the gap statistic, and establish the asymptotic validity of the resulting test with proper initialization.  We then collect the dynamics and power analysis under $H_1$.  We present the main results and relegate proofs to the Supplementary Materials.

 Define $c_0=\E[|\varepsilon|]\sqrt{2/\pi}/\sigma_\varepsilon$, $\alpha_\varepsilon=2f_\varepsilon(0)\E[|\varepsilon|]$, and $\alpha_x=\E[|x|]\E[|x|^3]/(\E[x^2])^2$ for a scalar covariate. We first present the equilibrium results for the ODE.

\begin{theorem}\label{thm:eqm}
Under $H_0$ and Parts~(a)--(c) of \Cref{ass:diffusion}:

	(i) In the scalar case, the antisymmetric point $e_1^* = -e_2^* =e^*$, where
	$
	 e^*=  {\E[|\varepsilon|]\,\E[|x|]}/{\E[x^2]},
	$
	is an equilibrium.  If $\alpha_\varepsilon\alpha_x<1$, it is locally exponentially stable.

	(ii) In the multivariate case with $x \sim N(0, \Sigma)$, any antisymmetric point $e_1^* = -e_2^* = e^*$ with $e^* \in \mathcal{E}_2$ is an equilibrium, where
	$
		\mathcal{E}_2 = \{e \in \R^d : \|e\|_\Sigma = c_0\,\sigma_\varepsilon\}
	$
	is a Mahalanobis ellipsoid.  If $\alpha_\varepsilon < \pi/4$, the full-state equilibrium set $\mathcal M_2=\{(e,-e):e\in\mathcal E_2\}$ is locally exponentially stable.
\end{theorem}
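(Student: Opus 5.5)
The plan is to compute the drift $\mu=(\mu_1,\mu_2)$ in closed form near antisymmetric states and then linearize. Throughout I would use the coordinates $s=e_1+e_2$ and $d=e_1-e_2$. Since $r_k=\varepsilon+e_k^\top x$, piece~1 wins exactly when $(\varepsilon+e_1^\top x)^2<(\varepsilon+e_2^\top x)^2$, that is, when
\[
(d^\top x)\,(2\varepsilon+s^\top x)<0 .
\]
Ties have probability zero because $\varepsilon$ has a density.

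\emph{Step 1: equilibrium, scalar case.} At $s=0$ and $d=2e$ with $e>0$, piece~1 wins iff $\varepsilon x<0$. Independence and the symmetry of $\varepsilon$ give
\[
\E[\ind\{\varepsilon x<0\}\varepsilon x]=-\tfrac12\E|\varepsilon|\,\E|x|,\qquad \E[\ind\{\varepsilon x<0\}x^2]=\tfrac12\E[x^2].
\]
Hence $\mu_1=\tfrac12(\E|\varepsilon|\E|x|-e\,\E[x^2])$, which vanishes exactly at $e=e^*$. The symmetric computation gives $\mu_2=0$.

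\emph{Step 1: equilibrium, Gaussian case.} At $s=0$ and $e_1=e$, I would condition on $u=e^\top x\sim N(0,\|e\|_\Sigma^2)$ and use $\E[x\mid u]=\Sigma e\,u/\|e\|_\Sigma^2$. This reduces everything to the scalar computation:
\[
\mu_1=\Sigma e\Big[\tfrac{\E|\varepsilon|\sqrt{2/\pi}}{2\|e\|_\Sigma}-\tfrac12\Big].
\]
It vanishes iff $\|e\|_\Sigma=\E|\varepsilon|\sqrt{2/\pi}=c_0\sigma_\varepsilon$. By antisymmetry, $\mu_2=-\mu_1$ there. Assumption~\ref{ass:diffusion}(c) justifies all the moments used.

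\emph{Step 2: Jacobian.} I differentiate $\mu_k$ and split each derivative into two parts.

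The interior part comes from differentiating the integrand $(\varepsilon+e_k^\top x)x$. It gives $-\E[I_k xx^\top]=-\Sigma/2$ at the equilibrium.

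The boundary part comes from moving the winning region. In the $s$-direction the boundary is $\varepsilon=-s^\top x/2$. Its derivative brings in the density of $\varepsilon$ at that point, which is $f_\varepsilon(0)$ at $s=0$ by continuity (Assumption~\ref{ass:diffusion}(b)). It multiplies the integrand evaluated on the boundary, which there equals $\tfrac12(d^\top x)x$. Perturbations in $d$ that keep the sign of $d^\top x$ fixed do not move the boundary.

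In the scalar case this decouples the system. The $d$-direction contracts at rate $\E[x^2]/2$. The $s$-direction satisfies
\[
\dot s=\big(-\tfrac12\E[x^2]+f_\varepsilon(0)\,e^*\,\E|x|^3\big)s .
\]
Substituting $e^*$, the coefficient is negative iff $2f_\varepsilon(0)\E|\varepsilon|\,\E|x|\E|x|^3/(\E x^2)^2<1$, i.e.\ $\alpha_\varepsilon\alpha_x<1$. Both eigenvalues are then negative, which gives local exponential stability.

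\emph{Step 3: Gaussian stability (main obstacle).} Here the equilibria form a $(d-1)$-dimensional manifold $\mathcal M_2$, so I need normal hyperbolicity rather than isolated stability.

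First I would whiten, $x=\Sigma^{1/2}z$, and use the rotational invariance of the drift in whitened coordinates. This shows that the $d-1$ tangent directions along $\mathcal E_2$ have zero eigenvalue.

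The radial antisymmetric direction is then handled by the one-dimensional reduction of Step~1 and contracts.

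The hard part is the $d$-dimensional symmetric block. It equals $-\Sigma/2+f_\varepsilon(0)\,\E[|e^{*\top}x|\,xx^\top]$. I would compute this Gaussian moment explicitly in whitened coordinates: along the direction of $e^*$ it involves $\E|Z|^3=2\sqrt{2/\pi}$, and orthogonally it involves $\E|Z|=\sqrt{2/\pi}$. I would then use $\|e^*\|_\Sigma=\E|\varepsilon|\sqrt{2/\pi}$ and check that the largest eigenvalue is negative iff $\alpha_\varepsilon<\pi/4$. The binding constraint comes from the $e^*$-direction, where $f_\varepsilon(0)\cdot 2\|e^*\|_\Sigma\sqrt{2/\pi}<1/2$.

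Finally, I would invoke the standard result that a normally hyperbolic attracting manifold of equilibria is locally exponentially stable as a set. I expect the bookkeeping of the boundary (Leibniz) terms to need the most care, especially the cross terms between the $s$ and $d$ blocks. I would check that these cross terms vanish at $s=0$ by the symmetry of $\varepsilon$, so that the Jacobian is block-triangular.
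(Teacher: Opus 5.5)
Your proposal is correct and is essentially the paper's proof. It shares the closed-form antisymmetric drift $\tfrac12(c_0\sigma_\varepsilon/\|e\|_\Sigma-1)\Sigma e$, a Jacobian made block-diagonal in midpoint--half-gap coordinates by the symmetry of $\varepsilon$ (the paper gets this from exchange and negation symmetry of $\mu$ rather than checking cross terms directly), the whitened symmetric block with binding eigenvalue $-\tfrac12(1-\tfrac4\pi\alpha_\varepsilon)$ along $\Sigma^{1/2}e^*$, and a rank-one antisymmetric block whose kernel is the tangent space of $\mathcal E_2$. The real difference is the last step, where the paper avoids normal-hyperbolicity theory by using the explicit Lyapunov function $\|\bfe\|^2+(\|\delta\|_\Sigma-c_0\sigma_\varepsilon)^2$ with uniform Fr\'echet expansions along $\mathcal M_2$; that theory would require the drift to be $C^1$ near $\mathcal M_2$, which you would have to check since the assumptions give continuity of $f_\varepsilon$ only at zero. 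Also, the tangent zero eigenvalues follow simply because $\mathcal M_2$ consists of equilibria, not from rotational invariance, which fails for the whitened drift when $\Sigma$ is anisotropic (only its zero set is rotation-invariant).
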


\Cref{thm:eqm} characterizes the locally stable equilibria in error coordinates. Specifically, when the covariate is scalar, \Cref{thm:eqm}(i) gives the equilibrium $e_1^*=-e_2^*=e^*$, where $e^*={\E[|\varepsilon|]\,\E[|x|]}/{\E[x^2]}$.  Gaussian data satisfy the stability condition, with $\alpha_\varepsilon\alpha_x=8/\pi^2\approx0.81$.  When the covariates are multivariate, \Cref{thm:eqm}(ii) reveals a remarkable structure: the equilibrium set $\mathcal{E}_2$ is a Mahalanobis ellipsoid $\|e\|_\Sigma = c_0\,\sigma_\varepsilon$ rather than a single point.  The particular equilibrium reached may depend on initialization, but every point in the set has the same Mahalanobis radius.

Based on \Cref{thm:eqm}, we return to the SGD estimators through $\theta_k=\theta^{\mathrm{true}}-e_k$.  Any error equilibrium $(e^*,-e^*)$ corresponds to the estimator locations $\theta_1^*=\theta^{\mathrm{true}}-e^*$ and $\theta_2^*=\theta^{\mathrm{true}}+e^*$.  Thus, the constant-step iterates fluctuate around two locations symmetric about $\theta^{\mathrm{true}}$.  Their equilibrium gap is $\theta_2^*-\theta_1^*=2e^*$ in the scalar case, while the multivariate normalized gap is $\|\theta_1^*-\theta_2^*\|_\Sigma/\sigma_\varepsilon=2\|e^*\|_\Sigma/\sigma_\varepsilon=2c_0$ for every $e^*\in\mathcal E_2$.  This explains the separation shown in \Cref{fig:detection-example}.
The stability has a simple force-balance interpretation.  Winner-take-all competition creates an outward push: once the two pieces differ slightly, each preferentially wins observations whose residual shocks favor its current side, and these selective updates reinforce the separation.  Ordinary regression creates an inward pull toward the common true parameter $\theta^{\mathrm{true}}$.  When the pieces are closer than the equilibrium radius, the competitive push dominates and drives them apart; when they are farther away, the regression pull dominates and draws them back.  At the equilibrium radius, the two forces balance and sustain a stable, nonzero separation.


\begin{remark}
	The coincident state $e_1=e_2=0$ is also an equilibrium, but it is unstable; see the proof of \Cref{thm:eqm} in the Supplementary Materials.  The detector uses the separated locally stable equilibria characterized above.
\end{remark}

Based on \Cref{thm:eqm}, define the normalized gap statistic as
\begin{equation}\label{eq:gap-statistic}
	S_T:=\frac{\|\theta_{1,T}-\theta_{2,T}\|_\Sigma}{\sigma_\varepsilon},
\end{equation}
which measures the separation between the two SGD estimators, $\theta_{1,T}$ and $\theta_{2,T}$, after $T$ iterations.  Because each SGD update depends on random covariates and errors, $S_T$ is a random variable that fluctuates around its equilibrium value from ex ante perspective.  We next characterize its asymptotic distribution under $H_0$.  Define $\kappa_\varepsilon=\E[|\varepsilon|]/\sigma_\varepsilon$ and, for a scalar covariate, $\kappa_x=\E[|x|]/\sqrt{\E[x^2]}$.  

\begin{theorem}\label{prop:null-distribution}
	Under $H_0$ and Parts~(a)--(c) of \Cref{ass:diffusion}, suppose the initialization is constructed independently of the detection stream and converges to a point $(0,-c)$ in midpoint--half-gap error coordinates for a fixed $c\ne0$, as verified for the proposed initialization in \Cref{prop:init}.  Let $S_\infty^{H_0}:=\|\theta_1^*-\theta_2^*\|_\Sigma/\sigma_\varepsilon$ denote the normalized gap at the separated equilibrium selected by the unique mean-ODE trajectory.  As $\eta\to0^+$, let $T\asymp\eta^{-p}$ for some fixed $p\in(1,2)$, so that $\eta T/\log(1/\eta)\to\infty$ and $\eta^2T\to0$.  Then:

	(i) In the scalar case with $\alpha_\varepsilon\alpha_x<1$, $S_\infty^{H_0}=2\kappa_\varepsilon\kappa_x$ and $(S_T-S_\infty^{H_0})/\sqrt{2\eta D/\sigma_\varepsilon^2}\xrightarrow{d}N(0,1)$, where $D=\E[(\varepsilon-e^*x)^2x^2\ind\{\varepsilon x\ge0\}]$ and $e^*$ is given in \Cref{thm:eqm}(i).
 
	(ii) In the multivariate case with $x\sim N(0,\Sigma)$ and $\alpha_\varepsilon<\pi/4$, $S_\infty^{H_0}=2\kappa_\varepsilon\sqrt{2/\pi}$ and $(S_T-S_\infty^{H_0})/\sqrt{\eta V(e^*)}\xrightarrow{d}N(0,1)$, where $e^*=e^*(c)$ is the deterministic equilibrium direction selected by the mean-ODE trajectory from the fixed initialization direction $c$, and $V(e^*)>0$ is the radial variance coefficient given in \eqref{eq:multi-radial-variance} of the Supplementary Materials.  This convergence is unconditional.
	
	(iii) Consequently, in both cases, $S_T\xrightarrow{p}S_\infty^{H_0}$.
\end{theorem}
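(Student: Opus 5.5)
I would work in midpoint--half-gap coordinates $m=(e_1+e_2)/2$ and $h=(e_1-e_2)/2$. In these coordinates $S_T=2\|h_T\|_\Sigma/\sigma_\varepsilon$, and \Cref{thm:eqm} places the target equilibria at $m=0$ with $h=e^*$ (scalar) or $h\in\mathcal E_2$ (Gaussian multivariate). The proof has three stages: convergence of the mean-field trajectory, a local Ornstein--Uhlenbeck (OU) limit for the fluctuations, and a delta-method step for the gap functional.

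\textbf{Stage 1: the mean-field trajectory.} The initialization converges to $(0,-c)$ and is independent of the detection stream. I would therefore condition on it and apply the constant-step weak-convergence theorem of \citet{kushner2003} to the piecewise-constant interpolation on $\tau=\eta t$.

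First I verify the required regularity. The winner indicator is discontinuous only on the tie set $\{|\varepsilon+e_1^\top x|=|\varepsilon+e_2^\top x|\}$. This set has probability zero when $e_1\ne e_2$, because $f_\varepsilon$ is bounded. Hence $\mu$ is Lipschitz, indeed $C^1$, on a neighbourhood of the separated region. The moment conditions in \Cref{ass:diffusion}(c) give uniform integrability of the increments.

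Next I show that the ODE started at $(0,-c)$ stays in the invariant antisymmetric subspace $m=0$. This follows from symmetry: swapping $e\mapsto-e$ and $\varepsilon\mapsto-\varepsilon$ exchanges the roles of the two pieces. On that subspace the radial dynamics drive $h$ to $e^*$, and I would check this directly from the drift formula. In the scalar case it is a one-dimensional ODE whose zero is $e^*$. In the Gaussian case, rotational invariance in $\Sigma^{1/2}$-coordinates shows that the drift of $h$ is radial. Direction is then conserved, and this gives the selected point $e^*(c)$ with $\|e^*(c)\|_\Sigma=c_0\sigma_\varepsilon$. The normalized value of the gap follows:
\[
S_\infty^{H_0}=2c_0=2\kappa_\varepsilon\sqrt{2/\pi}\ \text{(Gaussian case)},\qquad S_\infty^{H_0}=2\kappa_\varepsilon\kappa_x\ \text{(scalar case, dividing $2e^*$ by $\sigma_\varepsilon$ and using $\sqrt{\E[x^2]}$)}.
\]
Local exponential stability from \Cref{thm:eqm}, which transversally to $\mathcal M_2$ in the multivariate case, implies that the trajectory reaches an $O(\eta^{1/2})$ neighbourhood by time $\tau\approx C\log(1/\eta)$. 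This is exactly what $\eta T/\log(1/\eta)\to\infty$ guarantees.

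\textbf{Stage 2: local OU limit.} Set $U_t=(e_t-e^\star_{\rm full})/\sqrt\eta$ and linearize the drift at the equilibrium, with Jacobian $A$. The rescaled process converges weakly to the OU process
\[
dU=AU\,d\tau+\Gamma(e^*)\,dW.
\]
Its stationary covariance $P$ on the contracting directions solves $AP+PA^\top+\Gamma\Gamma^\top=0$.

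\emph{Scalar case.} The only direction relevant to the gap is $h$. Its linearized drift is $-\lambda_h$, obtained by differentiating the drift and using stability. Its noise variance is computed from the per-piece second moment at equilibrium. This reduces to $D=\E[(\varepsilon-e^*x)^2x^2\ind\{\varepsilon x\ge0\}]$, the event on which a given piece wins. I would verify that the resulting stationary variance of $2h$ is $2\eta D$, with the $\lambda$ factors cancelling against the stability computation, so that the stated normalization follows.

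\emph{Multivariate case.} The tangent directions of $\mathcal M_2$ are neutral, with zero eigenvalue, and the wandering along the sphere is diffusive. Over horizon $T$ that wandering has size $\sqrt{\eta\cdot\eta T}=\eta\sqrt T$, which tends to $0$ because $\eta^2T\to0$. So $e^*(c)$ remains the relevant point. Moreover, the radial functional $\|h\|_\Sigma$ is first-order insensitive to tangent motion, and its variance coefficient is $V(e^*)=4\,n^\top Pn/\sigma_\varepsilon^2$ with $n=\Sigma e^*/\|e^*\|_\Sigma$.

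\textbf{Stage 3: delta method and conclusion.} Applying the delta method to $\|\cdot\|_\Sigma$ gives the stated Gaussian limits. The fluctuations are $O_p(\sqrt\eta)$, which gives (iii) immediately. Because the limit does not depend on the conditioning value of the initialization beyond $c$, the convergence is unconditional.

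\textbf{Main obstacle.} The hard step is the justification of Stage 2 at the finite time $T$ rather than at $\tau=\infty$. It requires three things: that the chain does not escape the basin, which follows from exponential stability plus Freidlin--Wentzell-type exit bounds over horizon $\eta T\ll\eta^{-1}$; that the nonsmooth drift is $C^1$ near the equilibrium; and that the tangent drift on $\mathcal M_2$ is controlled. I would attempt the last via a Lyapunov function built from the radial distance, together with a martingale bound on the angular component.
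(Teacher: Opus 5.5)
Your architecture matches the paper's proof (\Cref{lem:local-diffusion-regularity}, \Cref{prop:localized-diffusion}, and the radial OU computation). You condition on the independent initialization, follow the mean ODE along the invariant subspace $\bfe=0$ into the separated attractor, linearize to an OU process, use $\eta^2T\to0$ to freeze the phase on $\mathcal M_2$, and finish with a delta method. Two steps, however, are wrong as written.

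\textbf{The multivariate half-gap field is not radial.} By \eqref{eq:multi-drift}, on $\bfe=0$ it equals $s(\delta)\Sigma\delta$ with scalar $s(\delta)=\tfrac12(c_0\sigma_\varepsilon/\|\delta\|_\Sigma-1)$. In whitened coordinates $\tilde\delta=\Sigma^{1/2}\delta$ it is parallel to $\Sigma\tilde\delta$, not to $\tilde\delta$. Rotational invariance of $\Sigma^{-1/2}x$ makes only the $\Sigma^{-1/2}x$-weighted expectation radial. The SGD update is weighted by $x$, and this reintroduces $\Sigma$.

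So the direction is not conserved unless $\Sigma\propto I$ or $c$ is an eigenvector of $\Sigma$. In particular, $e^*(c)$ is not $-c_0\sigma_\varepsilon c/\|c\|_\Sigma$. The correct argument is that the trajectory is $\delta(\tau)=\exp\{\Lambda(\tau)\Sigma\}\delta_0$ with $\Lambda(\tau)=\int_0^\tau s(\delta(u))\,du$. Because $q=\|\delta\|_\Sigma$ converges monotonically and exponentially fast, $\Lambda(\tau)$ converges. Hence $e^*(c)$ is the unique point where the curve $\lambda\mapsto-\exp(\lambda\Sigma)c$ meets $\mathcal E_2$; uniqueness holds because $\lambda\mapsto\|\exp(\lambda\Sigma)c\|_\Sigma$ is strictly increasing. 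This is exactly why $V(e^*(c))$ depends on $c$ under anisotropic $\Sigma$.

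\textbf{The delta method cannot be taken around the fixed point $e^*$.} As you note, the tangent wandering over the horizon is of order $\eta\sqrt T$. But $\eta\sqrt T/\sqrt\eta=\sqrt{\eta T}\to\infty$, so $U_t=(e_t-e^\star_{\mathrm{full}})/\sqrt\eta$ has no limit in the tangent directions. A Taylor expansion of $\|\cdot\|_\Sigma$ at $e^*$ then leaves a curvature term of order $\eta^2T$, which is not $o(\sqrt\eta)$ once $p>3/2$.

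``First-order insensitive'' is therefore not enough. You need that $\|\cdot\|_\Sigma$ is exactly constant on $\mathcal E_2$, and you must expand around the projection $\pi(\delta_T)\in\mathcal E_2$, as the paper does. Use $\delta_T-\pi(\delta_T)=O_p(\sqrt\eta)$ for the fluctuation, and use $\pi(\delta_T)-e^*=o_p(1)$ only to freeze coefficients. Equivalently, treat $q_t-c_0\sigma_\varepsilon$ itself as the normal coordinate.

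\textbf{Smaller repairs.}
\begin{enumerate}
\item The drift need not be $C^1$ near the equilibrium, since \Cref{ass:diffusion}(b) gives continuity of $f_\varepsilon$ only at zero. You only need local Lipschitzness plus Fr\'echet differentiability at the equilibrium, uniform along $\mathcal M_2$.
\item Lipschitzness in $\delta$ does not follow from the tie set being null. It needs a bound of the form $\E[(1+\|x\|^2)\ind\{\sgn(\delta_1^\top x)\ne\sgn(\delta_2^\top x)\}]\le C\|\delta_1-\delta_2\|$.
\item Freidlin--Wentzell exit bounds need exponential moments that \Cref{ass:diffusion}(c) does not supply. A Lyapunov drift inequality plus Doob's maximal inequality suffices instead, with accumulated quadratic variation $O(\eta^2T)=o(1)$.
\item In the scalar case you must check $\partial\dot\delta/\partial\bfe=0$ at $(0,e^*)$ (\Cref{lem:jacobian-block}) so that the half-gap OU is closed. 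You must also use that the two piece noises are uncorrelated at equilibrium, so the half-gap noise variance is $D/2$.
\item With that in place, the stationary variance of $\delta_T$ is $\eta D/(2\E[x^2])$. It is $2\sqrt{\E[x^2]}\,\delta_T$, not $2\delta_T$, whose variance is $2\eta D$.
\end{enumerate}
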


\Cref{prop:null-distribution}(i) gives the following asymptotic representation in the scalar case:
$
	S_T=S_\infty^{H_0}
	+\sqrt{ {2\eta D}/{\sigma_\varepsilon^2}}\,Z+o_p(\sqrt\eta).
$
The null equilibrium gap is $S_\infty^{H_0}=2\kappa_\varepsilon\kappa_x$, where $\kappa_x=\E|x|/\sqrt{\E[x^2]}$, and the variance is $2\eta D/\sigma_\varepsilon^2+o(\eta)$.  Thus the stochastic fluctuation is of order $\sqrt\eta$: a smaller step size yields tighter concentration but requires more iterations to reach the equilibrium regime. In the multivariate case, \Cref{prop:null-distribution}(ii) gives, for the deterministic equilibrium direction $e^*(c)\in\mathcal E_2$ selected by the fixed initialization,
$
	S_T=S_\infty^{H_0}
	+\sqrt{\eta \,V(e^*(c))}\,Z+o_p(\sqrt\eta),
$
The leading center is common to all points in $\mathcal E_2$, whereas $V(e^*(c))$ may depend on the deterministic direction selected by the fixed initialization.  If $\Sigma=cI$ is isotropic, this variance coefficient is independent of the initialization direction.

Particularly, when both the covariates and noise are Gaussian, $\kappa_\varepsilon=\E|Z|=\sqrt{2/\pi}$, and the relevant standardized covariate projection has the same absolute mean.  Therefore the leading center in both the scalar and multivariate cases becomes the \textit{competitive splitting constant}
\[
	\frac{4}{\pi}=2\!\left(\E|Z|\right)^2.
\]
This value is invariant to $d$, $\Sigma$, and $\sigma_\varepsilon$ within the Gaussian model.  However, it is not distribution-free: outside the Gaussian setting, the leading center generally depends on the standardized absolute moments of the noise and covariates.

\begin{remark} 
	For a general centered multivariate covariate distribution, $S_\infty^{H_0}$ need not have a direction-independent closed-form expression.  Any antisymmetric equilibrium $e^*$ must satisfy
$
		\|e^*\|_\Sigma^2=\E|\varepsilon|\,\E|{e^*}^\top x|,
$
	and hence the corresponding normalized equilibrium gap is
$
	\frac{2\|e^*\|_\Sigma}{\sigma_\varepsilon}
	=2\kappa_\varepsilon
	\frac{\E|{e^*}^\top x|}{\|e^*\|_\Sigma}.
$
	In general, this value depends on both the covariate distribution and the selected equilibrium direction.  Gaussian covariates are special because the standardized absolute projection moment is the same in every direction.
\end{remark}

\Cref{thm:eqm} characterizes the antisymmetric equilibria, but whether the SGD iterates approach one of them depends on the initialization, since the competitive loss is nonconvex. The following result shows that a suitable initialization places the iterates in the basin of an antisymmetric equilibrium, so the regime of \Cref{thm:eqm} is reached.

\begin{proposition} \label{prop:init}
	Under $H_0$ and Parts~(a)--(c) of \Cref{ass:diffusion}, let $\hat\theta_0$ be computed from a preliminary sample independent of the detection stream and satisfy
	$\hat\theta_0\xrightarrow{p}\theta^{\mathrm{true}}$, and set
	$\theta_{1,0}=\hat\theta_0+c$ and
	$\theta_{2,0}=\hat\theta_0-c$ for a fixed
	$c\in\R^d\setminus\{0\}$.  With probability tending to one, this starting point lies in the basin of an antisymmetric equilibrium of \Cref{thm:eqm}: in the scalar case, one of the two antisymmetric equilibrium points in part~(i) when $\alpha_\varepsilon\alpha_x<1$; in the multivariate case, the set $\mathcal M_2$ of part~(ii) when $x\sim N(0,\Sigma)$ and $\alpha_\varepsilon<\pi/4$.
\end{proposition}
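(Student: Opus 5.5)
The plan is to work in midpoint--half-gap error coordinates $m=(e_1+e_2)/2$, $h=(e_1-e_2)/2$ and to show that the deterministic starting point $(m,h)=(0,-c)$ lies in the basin of attraction of the mean ODE \eqref{eq:mean-ode}. A continuity argument then transfers this to the random starting point. By construction, $e_{1,0}=\theta^{\mathrm{true}}-\hat\theta_0-c$ and $e_{2,0}=\theta^{\mathrm{true}}-\hat\theta_0+c$, so $m_0=\theta^{\mathrm{true}}-\hat\theta_0\xrightarrow{p}0$ and $h_0=-c$. It therefore suffices to prove two things. First, $(0,-c)$ lies in the open basin $\mathcal B$ of the relevant stable equilibrium (or of the set $\mathcal M_2$). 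Second, $\mathcal B$ contains a neighborhood of $(0,-c)$. Then $\Pr\{(m_0,h_0)\in\mathcal B\}\to1$. Independence of the preliminary sample from the detection stream lets us condition on $\hat\theta_0$ and treat the start as fixed.

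\textbf{Step 1: the ODE restricted to the symmetric subspace.} Under symmetric noise and centered covariates, the map $(\varepsilon,x)\mapsto(-\varepsilon,-x)$ leaves the data law invariant. I would use this, together with the relabelling symmetry of the two pieces, to show that the antisymmetric subspace $\{m=0\}$ is invariant under the drift $\mu$. Explicitly, at $e_1=-e_2=h$ the winner indicator is $\ind\{(\varepsilon+h^\top x)^2<(\varepsilon-h^\top x)^2\}=\ind\{\varepsilon h^\top x<0\}$, and one computes $\mu_1(h,-h)=-\mu_2(h,-h)$. The half-gap then follows the reduced ODE
\[
\dot h=-\E\!\left[\ind\{\varepsilon h^\top x<0\}(\varepsilon+h^\top x)x\right]=\E|\varepsilon|\,\E\!\left[\sgn(h^\top x)\,x\right]/1-\ \text{(a term)}\ \Sigma h\cdot\tfrac12\cdot 2,
\]
which reduces, after conditioning, to $\dot h=\tfrac12\E|\varepsilon|\E[\sgn(h^\top x)x]-\tfrac12\Sigma h$. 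Here I use symmetry of $\varepsilon$: $\E[\varepsilon\ind\{\varepsilon s<0\}]=-\sgn(s)\E|\varepsilon|/2$, and $\Pr(\varepsilon s<0)=1/2$.

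\textbf{Step 2: global convergence on the subspace.} In the scalar case this is a one-dimensional ODE $\dot h=\tfrac12(\E|\varepsilon|\E|x|\,\sgn h-\E[x^2]h)$. Every $h\neq0$ flows monotonically to $\sgn(h)e^*$ (up to the sign convention for $h$), so $-c$ converges to one of the two equilibria of \Cref{thm:eqm}(i). In the Gaussian multivariate case, $\E[\sgn(h^\top x)x]=\sqrt{2/\pi}\,\Sigma h/\|h\|_\Sigma$. Hence $\dot h=\tfrac12\Sigma h\,(c_0\sigma_\varepsilon/\|h\|_\Sigma-1)$, using $c_0\sigma_\varepsilon=\E|\varepsilon|\sqrt{2/\pi}$. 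Along this flow, $\|h\|_\Sigma$ satisfies a scalar monotone ODE converging to $c_0\sigma_\varepsilon$. The direction stays in a bounded region, and the trajectory converges to a point of $\mathcal E_2$, i.e.\ into $\mathcal M_2$.

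\textbf{Step 3: transversal stability and openness.} This is the step I expect to be the main obstacle. The limit of the trajectory from $(0,-c)$ is locally exponentially stable by \Cref{thm:eqm}, under $\alpha_\varepsilon\alpha_x<1$ or $\alpha_\varepsilon<\pi/4$ respectively. Its basin is therefore an open set containing the whole compact trajectory from $(0,-c)$. The difficulty is that $\mu$ is only continuous, not smooth, because of the indicator; nonsmoothness bites mainly near $h=0$. Since $h$ stays bounded away from $0$ along this trajectory (its norm moves monotonically toward a positive radius), the drift is locally Lipschitz on a neighborhood of the trajectory. This relies on the bounded density $f_\varepsilon$ and on the moments in \Cref{ass:diffusion}(b)--(c). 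Continuous dependence on initial conditions over the finite time needed to enter a stability neighborhood then yields an open neighborhood $U\ni(0,-c)$ contained in the basin. For $\mathcal M_2$, the trajectory converges to a normally hyperbolic manifold of equilibria; I would invoke the local exponential stability of the set from \Cref{thm:eqm}(ii) in the same way. Finally, $m_0\xrightarrow{p}0$ gives $\Pr\{(m_0,-c)\in U\}\to1$, completing the argument.
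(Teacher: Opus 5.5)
Your proposal is correct and follows essentially the same route as the paper. You pass to midpoint--half-gap coordinates and use invariance of $\{\bar e=0\}$. You then use either the scalar sign-branch ODE or the radial ODE for $\|\delta\|_\Sigma$ to show that the trajectory from $(0,-c)$ stays in a compact tube separated from the coincident set and reaches the locally exponentially attracting neighborhood of \Cref{thm:eqm}; local Lipschitzness on that tube plus continuous dependence gives an open neighborhood of $(0,-c)$ in the basin, and $(\bar e_0,\delta_0)\xrightarrow{p}(0,-c)$ finishes. One small slip: the useful symmetry is $(\varepsilon,x)\mapsto(-\varepsilon,x)$, not $(\varepsilon,x)\mapsto(-\varepsilon,-x)$, since centered $x$ need not be symmetric in the scalar case. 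Nothing breaks, because your explicit conditional computation of $\mu_1(h,-h)=-\mu_2(h,-h)$ uses only the symmetry of $\varepsilon$ and its independence from $x$.
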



\begin{algorithm}[ht]
	\caption{Detection Test} \footnotesize
	\label{alg:test}
	\begin{algorithmic}[1]
		\REQUIRE Independent preliminary sample $\mathcal D_0$ of size $n_0\asymp T$; detection stream $\{(y_t,x_t)\}_{t\in[T]}$; step size $\eta>0$; significance level $\alpha$
		\STATE Using $\mathcal D_0$, estimate the means and center the detection stream; compute $\hat\theta_0$, $\hat\Sigma$, preliminary residuals $\hat\varepsilon_s$, and $\hat\sigma_\varepsilon$
		\STATE Initialize $\theta_{1,0} = \hat\theta_0 + c$, $\theta_{2,0} = \hat\theta_0 - c$ for a fixed $c\in\R^d\setminus\{0\}$
		\STATE Run \Cref{alg:sgd} to obtain $\theta_{1,T}, \theta_{2,T}$; compute plug-in statistic $\hat S_T$ shown in \eqref{eq:hst}
		\STATE Form the fitted upper null quantile $\hat q_{1-\alpha,\eta}$ using the calibration procedure in \Cref{sec:scale-estimation}
		\STATE \textbf{Reject} $H_0$ if $\hat S_T > \hat q_{1-\alpha,\eta}$
	\end{algorithmic}
\end{algorithm}

Together, \Cref{thm:eqm} and \Cref{prop:init} motivate a one-sided test based on the observed gap between the two fitted pieces: under $H_0$ the iterates settle at an antisymmetric equilibrium with a calibrated separation, so a systematically larger gap is evidence against homogeneity.  

\Cref{alg:test} operationalizes this logic, and its main steps correspond directly to the preceding results.  From an independent preliminary sample $\mathcal D_0$ of size $n_0\asymp T$, it first computes a consistent preliminary estimator $\hat\theta_0$ together with the plug-in covariance $\hat\Sigma$ and noise scale $\hat\sigma_\varepsilon$ (Step~1); it then forms the antisymmetric initialization $\theta_{1,0}=\hat\theta_0+c$ and $\theta_{2,0}=\hat\theta_0-c$ of \Cref{prop:init}, which places the null ODE in the basin of an antisymmetric equilibrium (Step~2); and it runs the SGD algorithm in \Cref{alg:sgd} on the detection stream, whose null terminal distribution is characterized in \Cref{prop:null-distribution} (Step~3).  The test rejects for large values of the plug-in gap statistic
\begin{equation}\label{eq:hst}
\hat S_T = \frac{\|\theta_{1,T} - \theta_{2,T}\|_{\hat\Sigma}}{\hat\sigma_\varepsilon},
\end{equation}
the empirical counterpart of the normalized equilibrium gap.  To turn $\hat S_T$ into a calibrated decision, we first estimate the upper null quantile $q_{1-\alpha,\eta}$ by $\hat q_{1-\alpha,\eta}$ (Step~4; see \Cref{sec:scale-estimation}).  We then reject $H_0$ when $\hat S_T$ exceeds $\hat q_{1-\alpha,\eta}$ (Step~5).  The following result establishes asymptotic size control for this test.

\begin{proposition}\label{cor:CI}
	Consider the test in \Cref{alg:test} under $H_0$, initialized as in \Cref{prop:init}, and suppose the conditions of \Cref{prop:null-distribution} hold as $\eta\to0^+$, with $T\asymp\eta^{-p}$ for some fixed $p\in(1,2)$.  Let $\hat\mu_x$ and $\hat\mu_y$ be the preliminary-sample means used to center the detection stream, let $\hat\sigma_\varepsilon$ and $\hat\Sigma$ be plug-in estimators of $\sigma_\varepsilon$ and $\Sigma$, and let $\hat q_{1-\alpha,\eta}$ be the fitted counterpart of the unconditional oracle $(1-\alpha)$ null quantile $q_{1-\alpha,\eta}$ of $S_T$.  Suppose $\|\hat\mu_x\|$, $|\hat\mu_y|$, $\hat\sigma_\varepsilon-\sigma_\varepsilon$, $\|\hat\Sigma-\Sigma\|_{\mathrm{op}}$, and $\hat q_{1-\alpha,\eta}-q_{1-\alpha,\eta}$ are all $o_p(\sqrt\eta)$, and additionally
	$(\|\hat\mu_x\|+|\hat\mu_y|)\eta T=o_p(1)$.  If the null law of $S_T$ is continuous, then $\Pr(\textup{reject}\mid H_0)\to\alpha$.
\end{proposition}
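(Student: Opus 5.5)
The plan is to reduce everything to the oracle statistic $S_T$ computed on uncentered, exactly centered data. By \Cref{prop:null-distribution}, $(S_T-S_\infty^{H_0})/s_\eta\xrightarrow{d}N(0,1)$, with $s_\eta\asymp\sqrt\eta$ equal to $\sqrt{2\eta D}/\sigma_\varepsilon$ or $\sqrt{\eta V(e^*)}$. It therefore suffices to show $\hat S_T-S_T=o_p(\sqrt\eta)$ and $\hat q_{1-\alpha,\eta}-q_{1-\alpha,\eta}=o_p(\sqrt\eta)$. Then
\[
\Pr(\hat S_T>\hat q_{1-\alpha,\eta})=\Pr\bigl((S_T-q_{1-\alpha,\eta})/s_\eta>o_p(1)\bigr),
\]
and we conclude by a Slutsky-type argument. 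The oracle quantile satisfies $(q_{1-\alpha,\eta}-S_\infty^{H_0})/s_\eta\to z_{1-\alpha}$, and the limit law is continuous, so this probability tends to $\alpha$. The continuity assumption guarantees $\Pr(S_T>q_{1-\alpha,\eta})=\alpha$ exactly at each $\eta$, and uniform convergence (Pólya) lets the $o_p(1)$ perturbation be absorbed.

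\textbf{Step 1: plug-in normalization.} Write $G_T=\theta_{1,T}-\theta_{2,T}$. By \Cref{prop:null-distribution}(iii), $\|G_T\|=O_p(1)$. Then
\[
\bigl|\|G_T\|_{\hat\Sigma}^2-\|G_T\|_\Sigma^2\bigr|\le\|\hat\Sigma-\Sigma\|_{\mathrm{op}}\|G_T\|^2=o_p(\sqrt\eta),
\]
and $\|G_T\|_\Sigma$ is bounded away from zero with probability tending to one, since $S_\infty^{H_0}>0$. Hence $\|G_T\|_{\hat\Sigma}-\|G_T\|_\Sigma=o_p(\sqrt\eta)$. The same ratio argument, using $\hat\sigma_\varepsilon-\sigma_\varepsilon=o_p(\sqrt\eta)$, handles the denominator.

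\textbf{Step 2: centering error.} This is the main obstacle. The detection stream is centered with the preliminary means, so the SGD actually runs on $\tilde x_t=x_t-\hat\mu_x$ and $\tilde y_t=y_t-\hat\mu_y$. Because $\mathcal D_0$ is independent of the stream, I condition on $(\hat\mu_x,\hat\mu_y)$. The centered data then still follow a linear model, with a perturbed noise $\tilde\varepsilon_t=\varepsilon_t-\hat\mu_y+\hat\mu_x^\top\theta^{\mathrm{true}}$ and a perturbed covariate. I couple the perturbed chain with the oracle chain driven by the same $(x_t,\varepsilon_t)$ and the same initialization.

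Each step of the coupled difference $\Delta_t$ has two contributions:
\begin{itemize}
\item a bias term of size $O(\eta(\|\hat\mu_x\|+|\hat\mu_y|))$ per step;
\item winner-disagreement events, which occur only when $|r_{1,t}^2-r_{2,t}^2|$ is of order $\|\hat\mu_x\|+|\hat\mu_y|+\|\Delta_t\|$. The bounded density of $\varepsilon$ makes their probability linear in that size.
\end{itemize}
The increments of $\Delta_t$ thus have conditional mean $O(\eta(\|\hat\mu\|+\|\Delta_t\|))$ and conditional second moment of the same order times $\eta$. A Gronwall argument on $\E\|\Delta_t\|$ over $T$ steps then gives $\sup_{t\le T}\E\|\Delta_t\|\lesssim(\|\hat\mu_x\|+|\hat\mu_y|)\eta T\,e^{L\eta T}$.

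The exponential factor is the obstacle, because $\eta T\to\infty$. I would remove it by splitting time into two phases:
\begin{itemize}
\item On the transient phase of length $O(\log(1/\eta)/\eta)$, Gronwall costs only a polynomial factor in $1/\eta$. This is harmless because the centering error is $O_p(T^{-1/2})$, so the product stays small.
\item After entry into the attracting regime, the exponential stability from \Cref{thm:eqm} makes the linearized difference dynamics contracting. In the multivariate case the contraction holds transversally to $\mathcal M_2$, and the radial gap functional is insensitive to the tangential drift. Accumulated bias then stays $O(\|\hat\mu\|\,\eta T)$ without exponential growth.
\end{itemize}
This yields $\|\Delta_T\|=O_p((\|\hat\mu_x\|+|\hat\mu_y|)\eta T)=o_p(1)$ by hypothesis. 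The remaining gap is to upgrade this to $o_p(\sqrt\eta)$. I would try to obtain this from the contraction giving stationary-level control $O(\|\hat\mu\|)=o_p(\sqrt\eta)$ on the radial coordinate. If that fails, I would treat the shift as a perturbation of the equilibrium center of size $O(\|\hat\mu\|)=o_p(\sqrt\eta)$, via the implicit function theorem applied to the perturbed drift.

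\textbf{Step 3: quantile and conclusion.} With Steps 1–2, $\hat S_T=S_T+o_p(\sqrt\eta)$. Combining this with the quantile hypothesis and the Gaussian limit gives $\Pr(\text{reject}\mid H_0)\to1-\Phi(z_{1-\alpha})=\alpha$.
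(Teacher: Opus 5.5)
Your outer argument matches the paper's: the plug-in bound (your Step~1), the oracle-quantile expansion $(q_{1-\alpha,\eta}-S_\infty^{H_0})/\sqrt\eta\to\sigma_*z_{1-\alpha}$, and the Slutsky/continuity finish. The gap is Step~2, which is the only nontrivial part of the proposition. You stop at $\|\Delta_T\|=o_p(1)$, but the test operates at the $\sqrt\eta$ scale, and the mechanism you sketch cannot deliver the upgrade as written.

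\emph{Why the $L^1$ coupling stalls.} Put $r_\eta:=\|\hat\mu_x\|+|\hat\mu_y|$. A winner-disagreement event has probability of order $r_\eta+\|\Delta_t\|$ and moves $\Delta$ by $O(\eta)$ in an uncontrolled direction. In your $L^1$ recursion for $\E\|\Delta_t\|$ it therefore contributes $O(\eta(r_\eta+\|\Delta_t\|))$ per step. That is the same order as the linearized contraction $-c\eta\|\Delta_t\|$ from \Cref{thm:eqm}, so contraction of the mean field does not make the coupled difference contract in $L^1$. It does contract in a second-moment (Lyapunov) recursion, where the centered part of these jumps, a martingale increment, enters only at order $\eta^2(r_\eta+\|\Delta_t\|)$.

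\emph{Why your two fallbacks are not enough as stated.} The $L^2$ recursion also shows that your target stationary bound $O(r_\eta)$ on the radial difference is too optimistic. The disagreement noise, which persists because the tangent separation is only $o_p(1)$, leaves a normal difference of order $r_\eta+\sqrt{\eta(r_\eta+\|\Delta\|)}$. This is $o_p(\sqrt\eta)$, but only because of your $o_p(1)$ bound. In the tangent directions there is no contraction at all, so your $O_p(r_\eta\eta T)$ bound there cannot be improved. The upgrade can only target the gap functional, which is constant on $\mathcal M_2$. Your IFT fallback has the matching problem: in the multivariate case the Jacobian has the $(d-1)$-dimensional kernel $T_{e^*}\mathcal E_2$, so the implicit function theorem is available only in normal coordinates. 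Separately, your transient Gronwall factor $\eta^{-LC}$ is not beaten by $r_\eta$ under the stated hypotheses. These give neither $r_\eta=O_p(T^{-1/2})$ nor any control of $LC$. Only a fixed diffusion time is needed to reach a fixed contracting neighborhood, so that factor should be $O(1)$.

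\emph{How the paper closes the step.} The paper never couples paths. Conditional on $\mathcal D_0$, the centered stream is just another SGD chain. Its drift and innovation covariance differ from the oracle ones by $O_p(r_\eta)$ on the compact attracting tube of \Cref{prop:localized-diffusion}.
\begin{enumerate}
\item Exponential normal attraction shifts the normal equilibrium by $O_p(r_\eta)=o_p(\sqrt\eta)$. This is your fallback restricted to normal coordinates.
\item Continuity of the innovation covariance leaves the stationary OU law unchanged.
\item In the multivariate case the centering-induced tangent drift accumulates only $O_p(r_\eta\eta T)=o_p(1)$; this is where the extra hypothesis enters. The tangent martingale is $o_p(1)$ because $\eta^2T\to0$.
\end{enumerate}
The variance coefficient is therefore frozen at the oracle phase $e^*(c)$. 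Hence $(\hat S_T-S_\infty^{H_0})/\sqrt\eta$ has the same $N(0,\sigma_*^2)$ limit as the oracle, with $\sigma_*^2=2D/\sigma_\varepsilon^2$ or $V(e^*(c))$.

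This is a statement about laws, not the pathwise $\hat S_T-S_T=o_p(\sqrt\eta)$ you aimed for, and it is all Slutsky needs. The reason is that $q_{1-\alpha,\eta}$ is deterministic, $\hat q_{1-\alpha,\eta}-q_{1-\alpha,\eta}=o_p(\sqrt\eta)$, and the limit law is continuous at the cutoff. If you want to keep the coupling, replace the $L^1$ Gronwall by an $L^2$ Lyapunov recursion on the normal coordinates, and compare gap functionals rather than full states.
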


\Cref{cor:CI} establishes asymptotic size control under explicit estimator-rate and distributional conditions.  Specifically, root-$n_0$-consistent sample means and estimators of $\sigma_\varepsilon$, $\Sigma$, and the other nuisance moment functionals satisfy the required $o_p(\sqrt\eta)$ rates because $n_0\asymp T$ and $T\eta\to\infty$ imply $n_0^{-1/2}=o(\sqrt\eta)$.  They also satisfy the accumulated-centering condition because $n_0\asymp T$ and $p<2$ give $n_0^{-1/2}\eta T\asymp\eta\sqrt T\to0$.  The required rate for $\hat q_{1-\alpha,\eta}$ depends on the chosen calibration method and is discussed in \Cref{sec:scale-estimation}.  Continuity of the finite-$\eta$ null law is imposed separately in the proposition.  The proof shows that the scalar and multivariate limits in \Cref{prop:null-distribution} imply the corresponding oracle-quantile expansion directly; in the multivariate case, the fixed initialization selects the deterministic direction $e^*(c)$, so no conditional-to-unconditional mixture argument is needed.

The preceding results characterize the null dynamics and calibration.  We now study the test under $H_1$, where the relevant long-run regime is a \textit{separated attracting equilibrium} whose midpoint is not fixed a priori.  For $a^*=(\theta_1^*,\theta_2^*)$, define the \textit{terminal-concentration condition}, as $\eta\to0^+$ and $T\asymp\eta^{-p}$ for fixed $p\in(1,2)$, by
\begin{equation}\label{eq:terminal-concentration}
 \bigl\|(\theta_{1,T}-\theta_1^*,\theta_{2,T}-\theta_2^*)\bigr\|=o_p(1),
\end{equation}
where the norm is the Euclidean product norm.  Condition~\eqref{eq:terminal-concentration} connects the selected attracting equilibrium to the terminal statistic.  Standard local stochastic-approximation theory yields this convergence when $a^*$ is isolated and locally exponentially attracting, the initialization lies in its ODE basin, the drift and noise satisfy local regularity and moment conditions, and a Lyapunov or confinement condition keeps the iterates in that basin with probability tending to one~\citep{kushner2003}.  The regime $\eta T/\log(1/\eta)\to\infty$ removes the deterministic transient, and the constant-step fluctuations are $O_p(\sqrt\eta)=o_p(1)$.  We now compare the alternative gap with its null reference.

\begin{proposition}\label{prop:H1-gap}
	Under $H_1$ and \Cref{ass:diffusion}, no coincident point $(\theta,\theta)$ is a stable equilibrium.  Suppose there is an attracting set $\mathcal A$ whose ODE basin contains the initialization.  At any separated ODE equilibrium $(\theta_1,\theta_2)\in\mathcal A$, define the normalized gap
	$
		S_\infty^{H_1}
		:=\frac{\|\theta_1-\theta_2\|_\Sigma}{\sigma_\varepsilon}.
$
	Then:

	(i) In the scalar case,
	$
		S_\infty^{H_1}
		>S_\infty^{H_0}=2\kappa_\varepsilon\kappa_x.
	$

	(ii) In the multivariate case,
	$S_\infty^{H_1}
	>S_\infty^{H_0}=2\kappa_\varepsilon\sqrt{2/\pi}$ when $x\sim N(0,\Sigma)$.

	(iii) In either setting of parts~(i)--(ii), let $a^*=(\theta_1^*,\theta_2^*)$ be the separated attracting equilibrium selected by the initialization and suppose that the terminal-concentration condition \eqref{eq:terminal-concentration} holds at $a^*$.  Suppose in addition that the fitted null critical value satisfies
	$\hat q_{1-\alpha,\eta}\xrightarrow{p}S_\infty^{H_0}$.  Then
	$
		\Pr(\textup{reject}\mid H_1)\to1.
	$
  
\end{proposition}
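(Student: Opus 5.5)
The statement has three parts, and I would prove them in order: (a) instability of coincident states, (b) the strict gap comparisons (i)--(ii), and (c) consistency (iii), which follows from (b).

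\textbf{Instability of coincident states.} At a coincident point $(\theta,\theta)$, ties occur with probability one. Under the uniform tie rule, each piece therefore receives half of the full least-squares drift. An equilibrium thus requires $\theta$ to be the pooled least-squares coefficient $b^\dagger=\Sigma^{-1}\E[x y]$.

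I would perturb antisymmetrically, $(\theta_1,\theta_2)=(b^\dagger+\delta v,\,b^\dagger-\delta v)$. The pooled residual is $u=y-x^\top b^\dagger=\varepsilon+x^\top(\theta^{\mathrm{true}}_{j}-b^\dagger)$. Piece~1 wins exactly when $u\,x^\top v>0$. To first order in $\delta$, the drift of the half-gap $h=(\theta_1-\theta_2)/2$ along $v$ is therefore
\[
  v^\top\mu_h \;=\; \E\bigl[|u|\,|x^\top v|\bigr] - \tfrac{\delta}{2}\,\E\bigl[(x^\top v)^2\bigr] + o(\delta).
\]
The leading term is $\E[|u||x^\top v|]>0$, because $u$ has a density and $\Sigma\succ0$. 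This term is an $O(1)$ outward push at displacement $O(\delta)$, so the separation grows. Hence no coincident state is stable. Strictly speaking, the drift is discontinuous at the diagonal, so I would phrase this as ``repelling'' in the sense that trajectories leave every neighborhood of the diagonal. Assumption~(d) guarantees that $u\neq\varepsilon$ in mean square, and this will matter below.

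\textbf{Gap comparison.} At a separated equilibrium, write $m=(\theta_1+\theta_2)/2$ and $h=(\theta_1-\theta_2)/2$. Piece~$k$ wins when $(u_m\mp x^\top h)^2$ is smaller, where $u_m=y-x^\top m$. This is equivalent to $\sgn(u_m\,x^\top h)$. The antisymmetric combination of the two equilibrium equations $\mu_1=\mu_2=0$ then gives the exact identity
\[
  \Sigma h = \E\bigl[|u_m|\,\sgn(x^\top h)\,x\bigr].
\]
Taking the inner product with $h$ yields $\|h\|_\Sigma^2=\E\bigl[|u_m|\,|x^\top h|\bigr]$. This mirrors the null Remark, with $\varepsilon$ replaced by $u_m=\varepsilon+x^\top(\theta^{\mathrm{true}}_j-m)$.

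\emph{Scalar case.} The identity reads $h\,\E[x^2]=\E\bigl[|u_m||x|\bigr]$. By independence of $\varepsilon$ from $(x,j)$ and its symmetry, $\E\bigl[|\varepsilon+a|\,\big|\,x,j\bigr]\ge\E|\varepsilon|$, with strict inequality whenever $a\ne0$; this holds because $f_\varepsilon(0)>0$ and the function $a\mapsto\E|\varepsilon+a|$ is convex and symmetric. Hence $h>e^*$ unless $x(\theta^{\mathrm{true}}_j-m)=0$ almost surely, and Assumption~(d) excludes that case. Dividing by $\sigma_\varepsilon$ gives $S^{H_1}_\infty=2h/\sigma_\varepsilon$ strictly above $2\kappa_\varepsilon\kappa_x$ times the $\sqrt{\E x^2}$ normalization.

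\emph{Multivariate Gaussian case.} The same inequality gives
\[
  \|h\|_\Sigma^2 \;>\; \E|\varepsilon|\,\E|x^\top h| \;=\; \E|\varepsilon|\sqrt{2/\pi}\,\|h\|_\Sigma ,
\]
so $2\|h\|_\Sigma/\sigma_\varepsilon>2\kappa_\varepsilon\sqrt{2/\pi}$. One caveat: the conditional inequality must be taken given $(x,j)$, and this is legitimate even though $\pi_z(x)$ depends on $x$.

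\textbf{Consistency.} By the terminal-concentration condition \eqref{eq:terminal-concentration}, $\theta_{k,T}\to\theta_k^*$ in probability. Assumption~(e) gives $\hat\Sigma\to\Sigma$ and $\hat\sigma_\varepsilon\to\sigma_\varepsilon$ in probability. By the continuous mapping theorem, $\hat S_T\xrightarrow{p}S^{H_1}_\infty$. Since $\hat q_{1-\alpha,\eta}\xrightarrow{p}S^{H_0}_\infty<S^{H_1}_\infty$, fix $\epsilon$ equal to half the gap $S^{H_1}_\infty-S^{H_0}_\infty$. Then $\Pr(\hat S_T>\hat q)\ge1-\Pr(|\hat S_T-S^{H_1}_\infty|>\epsilon)-\Pr(|\hat q-S^{H_0}_\infty|>\epsilon)\to1$.

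\textbf{Main obstacle.} I expect the hardest step to be the strictness in the gap comparison: showing that $u_m\ne\varepsilon$ on a set of positive measure at every separated equilibrium, not only for the pooled fit. Assumption~(d) applies to every $b$, so it covers $b=m$ directly, and I would rely on it for exactly this purpose.
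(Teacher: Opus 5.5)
Your proposal is correct and follows the paper's proof essentially step for step: the exact half-gap equilibrium identity $\Sigma h=\E[|u_m|\,\sgn(x^\top h)\,x]$ (the paper's \eqref{eq:half-gap-H1}), the conditional bound $\E[|\varepsilon+a|\mid x,j]\ge\E|\varepsilon|$ with strictness from Assumption~(d) at $b=m$, the Gaussian identity $\E|x^\top h|=\sqrt{2/\pi}\,\|h\|_\Sigma$, and a continuous-mapping argument for (iii). The remaining differences are minor: the first term of your instability drift should carry a factor $\tfrac12$ and is exact rather than $o(\delta)$; your ``$O(1)$ outward push'' needs a lower bound on $\mathcal R(m,v)$ that holds uniformly for midpoints near the coincident point, which the paper obtains by continuity and compactness and which controls the midpoint's motion; and in the Gaussian case strictness also uses $\Pr(x^\top h=0)=0$.
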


Under $H_0$, the normalized gap concentrates at $S_\infty^{H_0}$, which equals $4/\pi$ under Gaussianity, and \Cref{cor:CI} gives asymptotic size $\alpha$.  Under $H_1$, covariate-dependent group assignment changes the distribution of observations across components, while the predictor-separation condition in \Cref{ass:diffusion}(d) makes the oracle component-scale gap strictly exceed its null reference.  At the attracting equilibrium selected by the dynamics, terminal concentration, consistent component-scale estimation, and consistent calibration then give power approaching one.


\subsection{Intercept Heterogeneity Test}
The above results test slope heterogeneity using demeaned data.  However, a slope-only statistic can miss pure intercept heterogeneity: if subgroups share the same slope and differ only in intercept, demeaning removes only the marginal intercept and does not create slope separation.  We therefore consider $H_0^{(\alpha)}:y_t=x_t^\top\beta^{\mathrm{true}}+\alpha^*+\varepsilon_t$ against $H_1^{(\alpha)}:y_t=x_t^\top\beta^{\mathrm{true}}+\alpha_{j_t}^{\mathrm{true}}+\varepsilon_t$, where at least two latent intercepts differ.  Given partial residuals $r_t = y_t - \hat\beta_{\mathrm{OLS}}^\top x_t$, run SGD with the competitive loss on scalar intercepts~$\alpha_k$ and form the feasible intercept-gap statistic
\begin{equation}\label{eq:intercept-stat}
	\hat S_T^{(\alpha)}:=|\alpha_{1,T}-\alpha_{2,T}|/\hat\sigma_\varepsilon.
\end{equation}
This is a fixed-design scalar gap calculation with predictor one, so $\kappa_x=1$ and the null level is $2\kappa_\varepsilon$.  We reject at level $\alpha_{\mathrm I}$ when $\hat S_T^{(\alpha)}$ exceeds the one-sided critical value
\begin{equation}\label{eq:intercept-crit}
	\hat q_{1-\alpha_{\mathrm I},\eta}^{(\alpha)}:=2\hat\kappa_\varepsilon+z_{1-\alpha_{\mathrm I}}\sqrt{\eta(1-\hat\kappa_\varepsilon^2)},
\end{equation}
where $2\hat\kappa_\varepsilon$ is the plug-in null level and $\sqrt{\eta(1-\hat\kappa_\varepsilon^2)}$ is the plug-in diffusion scale from the constant-design calculation.

\begin{corollary}\label{prop:intercept}
	Suppose Parts~(a)--(c) of \Cref{ass:diffusion} hold for the original covariate--noise pair, the distribution of $x_t$ is symmetric about zero, $f_\varepsilon$ is uniformly continuous, and $2f_\varepsilon(0)\E|\varepsilon|<1$.  The centering clause applies to $x_t$, whereas the intercept recursion uses the constant predictor one.  Let the preliminary slope estimator be computed from an independent sample and satisfy $\|\hat\beta_{\mathrm{OLS}}-\beta^{\mathrm{true}}\|=o_p(\sqrt\eta)$.  Conditional on this sample, put $a=\beta^{\mathrm{true}}-\hat\beta_{\mathrm{OLS}}$, $\varepsilon_a=\varepsilon+a^\top x$, $\sigma_a^2=\E(\varepsilon_a^2\mid a)$, and $\kappa_a=\E(|\varepsilon_a|\mid a)/\sigma_a$.  Suppose the feasible residual estimators satisfy $\hat\sigma_\varepsilon-\sigma_a=o_p(\sqrt\eta)$ and $\hat\kappa_\varepsilon-\kappa_a=o_p(\sqrt\eta)$.  As $\eta\to0^+$ with $T\asymp\eta^{-p}$ for fixed $p\in(1,2)$, the test that rejects when $\hat S_T^{(\alpha)}>\hat q_{1-\alpha_{\mathrm I},\eta}^{(\alpha)}$ satisfies $\Pr(\textup{reject}\mid H_0^{(\alpha)})\to\alpha_{\mathrm I}$.  Under $H_1^{(\alpha)}$, if $\hat\sigma_\varepsilon\xrightarrow{p}\sigma_\varepsilon$, $\hat q_{1-\alpha_{\mathrm I},\eta}^{(\alpha)}\xrightarrow{p}2\kappa_\varepsilon$, and \eqref{eq:terminal-concentration} holds at a separated attracting equilibrium, then $\Pr(\textup{reject}\mid H_1^{(\alpha)})\to1$.
\end{corollary}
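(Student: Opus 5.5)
The plan is to reduce \Cref{prop:intercept} to the scalar case of \Cref{prop:null-distribution}, \Cref{cor:CI}, and \Cref{prop:H1-gap}(i), with the constant predictor $x\equiv1$ playing the role of the covariate. The price is that the effective noise is $\varepsilon_a=\varepsilon+a^\top x$ rather than $\varepsilon$, and that the constant design is not centered.

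\emph{Null effective noise.} Condition on the independent preliminary sample, so that $a=\beta^{\mathrm{true}}-\hat\beta_{\mathrm{OLS}}$ is fixed. Under $H_0^{(\alpha)}$ the partial residuals are $r_t=\alpha^*+\varepsilon_{a,t}$, where the $\varepsilon_{a,t}$ are i.i.d. Because $x$ is symmetric and independent of the symmetric $\varepsilon$, the law of $\varepsilon_a$ is symmetric about zero. Its density is the convolution $f_{\varepsilon_a}(u)=\E f_\varepsilon(u-a^\top x)$, which is bounded. Uniform continuity of $f_\varepsilon$ gives $f_{\varepsilon_a}(0)\to f_\varepsilon(0)$, and likewise $\E|\varepsilon_a|\to\E|\varepsilon|$ and $\sigma_a\to\sigma_\varepsilon$, as $a\to0$. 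Hence $\alpha_{\varepsilon_a}=2f_{\varepsilon_a}(0)\E|\varepsilon_a|<1$ with probability tending to one, using the strict inequality $2f_\varepsilon(0)\E|\varepsilon|<1$.

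\emph{Null limit law.} With $x\equiv1$ we have $\alpha_x=1$ and $\kappa_x=1$. The scalar drift computation of \Cref{thm:eqm}(i) never actually used $\E x=0$; it used only symmetry of the noise and the winner rule. So the error equilibrium is $e^*=\E|\varepsilon_a|$, it is stable, and the normalized gap is $2\kappa_a$. Applying \Cref{prop:null-distribution}(i) conditionally on $a$, with the initialization of \Cref{prop:init}, yields the scaling $\sqrt{2\eta D/\sigma_a^2}$. In the constant design,
\[
D=\E[(\varepsilon_a-e^*)^2\ind\{\varepsilon_a\ge0\}]=\tfrac12\sigma_a^2-e^*\E|\varepsilon_a|+\tfrac12 e^{*2}=\tfrac12(\sigma_a^2-\E^2|\varepsilon_a|).
\]
Therefore $2\eta D/\sigma_a^2=\eta(1-\kappa_a^2)$, which matches \eqref{eq:intercept-crit}.

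\emph{Size control.} Write $S_T^{(a)}=|\alpha_{1,T}-\alpha_{2,T}|/\sigma_a$. Conditionally on $a$,
\[
\bigl(S_T^{(a)}-2\kappa_a\bigr)\big/\sqrt{\eta(1-\kappa_a^2)}\xrightarrow{d}N(0,1).
\]
I would pass from conditional to unconditional convergence by dominated convergence over $a$, which requires the conditional CLT to hold uniformly for $a$ in shrinking neighborhoods of zero. Next, the plug-in errors are handled as in \Cref{cor:CI}. The difference $\hat S_T^{(\alpha)}-S_T^{(a)}$ equals $S_T^{(a)}(\sigma_a/\hat\sigma_\varepsilon-1)=o_p(\sqrt\eta)$, and $\hat q^{(\alpha)}-(2\kappa_a+z\sqrt{\eta(1-\kappa_a^2)})=o_p(\sqrt\eta)$ by the assumed rates. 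Slutsky's lemma then gives rejection probability tending to $1-\Phi(z_{1-\alpha_{\mathrm I}})=\alpha_{\mathrm I}$.

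\emph{Consistency.} Under $H_1^{(\alpha)}$, the residual is $r_t=\alpha_{j_t}+\varepsilon_{a,t}$ with $a\to0$. This is a scalar mixture whose component intercepts differ, so \Cref{ass:diffusion}(d) holds with predictor one. \Cref{prop:H1-gap}(i) then gives a separated equilibrium gap strictly exceeding $2\kappa_\varepsilon$. Terminal concentration \eqref{eq:terminal-concentration}, together with $\hat\sigma_\varepsilon\xrightarrow{p}\sigma_\varepsilon$ and $\hat q\xrightarrow{p}2\kappa_\varepsilon$, then makes the probability of rejection tend to one. A small perturbation argument in $a$ is needed here too, since the mixture noise is $\varepsilon_a$ rather than $\varepsilon$.

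\emph{Main obstacle.} The hardest step is the uniformity in $a$. \Cref{prop:null-distribution} is stated for a fixed noise law, whereas here the law of $\varepsilon_a$ drifts with the preliminary sample. I would try to show that the drift, its Jacobian at the equilibrium, and the noise covariance are continuous in $a$; uniform continuity of $f_\varepsilon$ gives continuity of the winner probabilities. The OU/diffusion approximation constants then remain uniform on $\{\|a\|\le\delta\}$.
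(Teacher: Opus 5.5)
Your null-hypothesis argument is essentially the paper's proof. Like the paper, you condition on the preliminary sample and use the exact representation $r_t=\alpha^{\mathrm{true}}_{j_t}+\varepsilon_{a,t}$. You check that $\varepsilon_a$ is symmetric with a bounded, uniformly continuous density and that $2f_a(0)\E(|\varepsilon_a|\mid a)<1$ with probability tending to one. You then rerun the scalar terminal CLT for the constant design; the paper also re-runs that argument rather than citing \Cref{prop:null-distribution} verbatim. Your computation $D=\tfrac12\sigma_a^2(1-\kappa_a^2)$ matches, and you finish, as the paper does, with conditional Slutsky and bounded convergence over the preliminary sample.

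The gap is in the consistency step. You cannot simply invoke \Cref{prop:H1-gap}(i) for the feasible model. That result assumes the noise is independent of the latent label, but the feasible noise $\varepsilon_a=\varepsilon+a^\top x$ involves $x$, and \Cref{ass:diffusion}(a) explicitly allows $x$ and $j$ to be dependent.

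Even when $x$ and $j$ are independent, the proposition gives only a pointwise strict inequality $2h/\sigma_a>2\kappa_a$ at each equilibrium. What you need is that the terminal statistic exceeds $2\kappa_\varepsilon$, the limit of the critical value, by a margin that does not shrink as $a$, and with it the selected equilibrium, moves with the preliminary sample. Your unspecified ``small perturbation argument in $a$'' is exactly the missing content. Perturbing from oracle equilibria would also require controlling how the selected equilibrium depends on $a$, which you have not done.

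The paper supplies a margin that is uniform in the fitted midpoint. Take $g_0(v)=\E|v+\varepsilon|$, built from the original noise, which is independent of $(x,j)$, and put $\Delta_0=\inf_{c\in\R}\{\E g_0(\alpha^{\mathrm{true}}_j-c)-g_0(0)\}$. This is strictly positive because $g_0$ is continuous, coercive and uniquely minimized at zero, and the intercepts are not all equal with positive probabilities.

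The half-gap equilibrium equation $h_{\alpha,a}=\E(|\alpha^{\mathrm{true}}_j-c_{\alpha,a}+\varepsilon+a^\top x|\mid a)$ comes from the algebraic winner identity, so it needs no independence between noise and label. The inequality $\bigl||u+v|-|u|\bigr|\le|v|$ bounds the effect of $a^\top x$, on both this quantity and $\E(|\varepsilon_a|\mid a)$, by $\E(|a^\top x|\mid a)=o_p(1)$, uniformly in $c$. Hence $h_{\alpha,a}-\E(|\varepsilon_a|\mid a)\ge\Delta_0/2$ with probability tending to one, at whatever separated equilibrium is selected. It is this fixed margin, combined with terminal concentration, $\hat\sigma_\varepsilon\xrightarrow{p}\sigma_\varepsilon$ and $\hat q^{(\alpha)}_{1-\alpha_{\mathrm I},\eta}\xrightarrow{p}2\kappa_\varepsilon$, that yields power tending to one.
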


\Cref{prop:intercept} is the intercept-only counterpart of \Cref{prop:null-distribution}(i) and \Cref{cor:CI}.  Its proof derives the constant-design CLT directly and applies it conditionally to the exact effective-noise representation of the feasible residuals.  Symmetry of the covariate distribution is required only for this intercept extension and is weaker than Gaussianity.  The power statement is conditional on consistent component-noise calibration, which generally requires additional identification under a pure intercept mixture.  The combined procedure sets $\alpha_{\mathrm I}=\alpha/2$ for both the slope and intercept tests, controlling family-wise error at level $\alpha$ by Bonferroni.

\subsection{Estimating the Null Law and Component-Noise Scale}\label{sec:scale-estimation}

To effectively implement \Cref{alg:test}, we need to estimate the null law used for calibration and the component-noise scale used to standardize the gap statistic.  Although the scale also enters the null law, its construction merits separate treatment because it must remain valid under $H_1$ to preserve power.  We discuss these two tasks as follows.

\paragraph*{Estimating the null law and its upper quantile.}
Step~4 of \Cref{alg:test} requires the fitted upper null quantile $\hat q_{1-\alpha,\eta}$.  In the scalar case, we obtain it by substituting preliminary-sample estimates of $\kappa_\varepsilon$, $\kappa_x$, $D$, and $\sigma_\varepsilon$ into the center and variance in \Cref{prop:null-distribution}(i).  This plug-in normal approximation yields a valid fitted quantile when the resulting center error is $o_p(\sqrt\eta)$ and the diffusion-scale estimator is ratio-consistent.  In the multivariate case, the radial variance can depend on the equilibrium direction selected by the fixed initialization and can be difficult to estimate analytically.  We therefore use an empirical-covariate fitted-null bootstrap: draw independent preliminary and detection resamples under the fitted homogeneous model and rerun the complete preliminary-fit--center--initialize--SGD pipeline.  The empirical $(1-\alpha)$ quantile of the resulting gap statistics is $\hat q_{1-\alpha,\eta}$.  This procedure preserves the direction-dependent projection ratio and finite-step radial effects.  The number of bootstrap replications is chosen so that the Monte Carlo quantile error is negligible relative to $\sqrt\eta$.  A parametric Gaussian Monte Carlo calibration is valid only when a Gaussian covariate model is justified.

\paragraph*{Estimating the component-noise scale.}
The OLS residual variance consistently estimates $\sigma_\varepsilon^2$ under $H_0$; under $H_1$, it also captures between-component variation.  We isolate the component noise through a variance function regression of squared residuals on the covariates \citep{davidian1987variance,chaganty2013spectral}. 
The construction has three steps.  First, cross-fit the pooled regression, whose population coefficient under the general alternative is $\theta_{\mathrm{pool}}=\Sigma^{-1}\E[xy]=\Sigma^{-1}\E[x x^\top\theta_j^{\mathrm{true}}]$, and form the cross-fitted residuals $\hat R_t=y_t-\hat\theta_{\mathrm{pool}}^{(-t)\top}x_t$, where $\hat\theta_{\mathrm{pool}}^{(-t)}$ is computed without observation~$t$.  Second, fit the conditional second moment of these residuals.  The quadratic specification used under constant assignment probabilities is
\begin{equation}\label{eq:quadratic-variance-regression}
	\hat R_t^2=a+\langle B,x_tx_t^\top\rangle+u_t.
\end{equation}
Third, take the fitted intercept as the estimator, $\hat\sigma_\varepsilon^2:=\hat a$; we refer to $\hat\sigma_\varepsilon$ as the \textit{component-scale estimator}.  Writing $R=y-\theta_{\mathrm{pool}}^\top x$ for the population pooled residual, covariate-dependent assignment gives the variance function
\begin{equation}\label{eq:dependent-variance-function}
\E[R^2\mid x]
=\sigma_\varepsilon^2
+\sum_{z=1}^{K^*}\pi_z(x)
\{x^\top(\theta_z^{\mathrm{true}}-\theta_{\mathrm{pool}})\}^2.
\end{equation}
The component-noise variance is the constant term in~\eqref{eq:dependent-variance-function}.  When the assignment probabilities are constant in $x$, the second term reduces to $x^\top\Omega_\theta x$, where $\Omega_\theta=\operatorname{Var}(\theta_j^{\mathrm{true}})$, and the quadratic regression in~\eqref{eq:quadratic-variance-regression} is exact.  With covariate-dependent assignment, the second term is a weighted quadratic function that vanishes at $x=0$.  A flexible variance-function estimator identifies $\sigma_\varepsilon^2$ from this constant under continuity and local-support conditions at zero, providing the consistent estimator in \Cref{ass:diffusion}(e) used for power.  Under $H_0$, the variance function is constant and the ordinary root-$T$ residual-scale estimator has error $o_p(\sqrt\eta)$ when $T\eta\to\infty$, as required by \Cref{cor:CI}.

\paragraph*{Pooled-residual fallback.}
When this identification condition fails, so that the variance function
regression is degenerate and $\hat\sigma_\varepsilon^2$ is unavailable, we
fall back to an alternative statistic built directly from the pooled
residuals.  Define
$\theta_{\mathrm{pool}}:=\Sigma^{-1}\E[xy]$ and
$\sigma_{\mathrm{pool}}^2:=\E[(y-\theta_{\mathrm{pool}}^\top x)^2]$, and form
the pooled-residual statistic
\begin{equation}\label{eq:pooled-stat}
	\hat S_T^{\mathrm{pool}}
	:=\frac{\|\theta_{1,T}-\theta_{2,T}\|_{\hat\Sigma}}{\hat\sigma_{\mathrm{pool}}},
\end{equation}
where $\hat\sigma_{\mathrm{pool}}$ is the ordinary OLS-residual estimator.  We
reject at level $\alpha$ when $\hat S_T^{\mathrm{pool}}$ exceeds the fitted
pooled-residual critical value $\hat q_{1-\alpha,\eta}^{\mathrm{pool}}$, the
counterpart of \Cref{cor:CI} with $\sigma_\varepsilon$ replaced by the pooled
scale $\sigma_{\mathrm{pool}}$.  Under $H_0$,
$\sigma_{\mathrm{pool}}=\sigma_\varepsilon$, so this statistic preserves the
size result of \Cref{cor:CI}; under $H_1$, the pooled scale may be inflated,
so power is no longer automatic and instead requires the strict-margin
condition of the following corollary.

\begin{corollary}\label{cor:pooled-power}
	Suppose Parts~(a)--(d) of \Cref{ass:diffusion} hold under $H_1$, and the terminal-concentration
	condition \eqref{eq:terminal-concentration} holds at the separated attracting
	equilibrium $(\theta_1^*,\theta_2^*)$ selected by the initialization.  Define
	$S_\infty^{\mathrm{pool}}:=\|\theta_1^*-\theta_2^*\|_\Sigma/\sigma_{\mathrm{pool}}$,
	and suppose $\hat\sigma_{\mathrm{pool}}\xrightarrow{p}\sigma_{\mathrm{pool}}$,
	$\hat\Sigma\xrightarrow{p}\Sigma$, and
	$\hat q_{1-\alpha,\eta}^{\mathrm{pool}}\xrightarrow{p}q_\infty^{\mathrm{pool}}$.
	If $S_\infty^{\mathrm{pool}}>q_\infty^{\mathrm{pool}}$, then the test that
	rejects when $\hat S_T^{\mathrm{pool}}>\hat q_{1-\alpha,\eta}^{\mathrm{pool}}$
	satisfies $\Pr(\textup{reject}\mid H_1)\to1$.
\end{corollary}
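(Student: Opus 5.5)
The argument is a continuous-mapping step followed by a strict-margin comparison; no diffusion analysis is needed. Write
\[
\hat S_T^{\mathrm{pool}}=\frac{\|\theta_{1,T}-\theta_{2,T}\|_{\hat\Sigma}}{\hat\sigma_{\mathrm{pool}}}.
\]
I will show that $\hat S_T^{\mathrm{pool}}\xrightarrow{p}S_\infty^{\mathrm{pool}}$. The rejection event is then controlled by $\hat q^{\mathrm{pool}}_{1-\alpha,\eta}\xrightarrow{p}q_\infty^{\mathrm{pool}}$ and the margin $\delta:=S_\infty^{\mathrm{pool}}-q_\infty^{\mathrm{pool}}>0$.

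\emph{Step 1 (numerator).} By the terminal-concentration condition \eqref{eq:terminal-concentration}, $\Delta_T:=\theta_{1,T}-\theta_{2,T}\xrightarrow{p}\Delta^*:=\theta_1^*-\theta_2^*$ in Euclidean norm. Decompose
\[
\bigl|\|\Delta_T\|_{\hat\Sigma}^2-\|\Delta^*\|_\Sigma^2\bigr|\le\|\hat\Sigma-\Sigma\|_{\mathrm{op}}\|\Delta_T\|^2+\bigl|\|\Delta_T\|_\Sigma^2-\|\Delta^*\|_\Sigma^2\bigr|.
\]
The first term is $o_p(1)\cdot O_p(1)$, because $\|\Delta_T\|$ is tight. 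The second term is $o_p(1)$ by continuity of $v\mapsto v^\top\Sigma v$. Since $\hat\Sigma$ is positive semidefinite with probability tending to one, continuity of the square root gives $\|\Delta_T\|_{\hat\Sigma}\xrightarrow{p}\|\Delta^*\|_\Sigma$.

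\emph{Step 2 (denominator).} We need $\sigma_{\mathrm{pool}}>0$ so that the ratio map is continuous. This holds because $\sigma_{\mathrm{pool}}^2\ge\E[\varepsilon^2]=\sigma_\varepsilon^2>0$. The inequality follows from independence of $\varepsilon$ from $(x,j)$: it gives $\sigma_{\mathrm{pool}}^2=\sigma_\varepsilon^2+\E[\{x^\top(\theta_j^{\mathrm{true}}-\theta_{\mathrm{pool}})\}^2]$. Positivity of $\sigma_\varepsilon^2$ in turn follows from \Cref{ass:diffusion}(b), since a bounded density forces nondegeneracy. Part (d) even makes the second term strictly positive, although that is not needed here. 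Slutsky's lemma with $\hat\sigma_{\mathrm{pool}}\xrightarrow{p}\sigma_{\mathrm{pool}}$ now yields $\hat S_T^{\mathrm{pool}}\xrightarrow{p}S_\infty^{\mathrm{pool}}$.

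\emph{Step 3 (power).} On the event
\[
E_T:=\bigl\{|\hat S_T^{\mathrm{pool}}-S_\infty^{\mathrm{pool}}|<\delta/2\bigr\}\cap\bigl\{|\hat q^{\mathrm{pool}}_{1-\alpha,\eta}-q^{\mathrm{pool}}_\infty|<\delta/2\bigr\},
\]
we have $\hat S_T^{\mathrm{pool}}>S_\infty^{\mathrm{pool}}-\delta/2=q_\infty^{\mathrm{pool}}+\delta/2>\hat q^{\mathrm{pool}}_{1-\alpha,\eta}$, so the test rejects. Since $\Pr(E_T)\to1$, it follows that $\Pr(\text{reject}\mid H_1)\to1$.

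\emph{Main obstacle.} The only delicate point is Step 2, namely ensuring that the limiting denominator is bounded away from zero and that the random weighting $\hat\Sigma$ does not break continuity. Both are handled as above. Everything else is routine continuous-mapping reasoning. The substantive difficulty, whether the margin $S_\infty^{\mathrm{pool}}>q_\infty^{\mathrm{pool}}$ actually holds, is assumed rather than proved here.
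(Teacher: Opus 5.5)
Your proposal is correct and follows the same route as the paper: continuous mapping gives $\hat S_T^{\mathrm{pool}}=S_\infty^{\mathrm{pool}}+o_p(1)$ and $\hat q^{\mathrm{pool}}_{1-\alpha,\eta}=q^{\mathrm{pool}}_\infty+o_p(1)$, and the strict margin then forces rejection with probability tending to one. Your explicit checks that $\sigma_{\mathrm{pool}}^2\ge\sigma_\varepsilon^2>0$ and that the random weighting by $\hat\Sigma$ is harmless fill in details that the paper's short proof leaves implicit.
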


For the pooled-residual statistic, consistency requires the strict margin in
\Cref{cor:pooled-power}.  For a given application, this margin can be assessed
by estimating the selected pooled-normalized equilibrium and its
fitted-null benchmark, with a row bootstrap used to check that the lower
confidence bound for their difference is positive.  This verification is a
power certificate; null validity continues to follow from the consistency of
the pooled-residual scale under $H_0$.

\paragraph*{Impact of Higher-Order Competitive Loss.}
The detection theory above uses two competing pieces.  Adding a third piece does not uniformly improve power.  In the scalar Gaussian setting, the $K=3$ dynamics converge under $H_0$ to a symmetric equilibrium with normalized outer gap $2c_3$, where $c_3\approx1.320$.  For all sufficiently small positive balanced separations, the selected locally attracting alternative equilibrium instead has a strictly smaller outer gap, so a correctly calibrated outer-gap test is asymptotically blind in this local regime (\Cref{prop:K3-scalar} in the Supplementary Materials).  This structural failure persists even when $\sigma_\varepsilon$ is known and justifies retaining the two-piece loss in \Cref{alg:test}; \Cref{sec:K-general} of the Supplementary Materials gives the proof, simulations, and a $K=4$ comparison.

\section{Extensions and Robustness}\label{sec:extensions}

\subsection{Detection for Partial Mixture}
In many applications, domain knowledge localizes where heterogeneity can reside: most covariates are believed to affect the response uniformly, and only one focal covariate, typically the exposure or treatment variable of interest, may carry a group-dependent coefficient driven by an unobserved confounder. Next, we show that our approach can be modified to handle detection in this setting.  Fix a focal coordinate $\ell\in[d]$, partition $x_t=(x_{-\ell,t}^\top,x_{\ell,t})^\top$, and consider the partial mixture model
\begin{equation}\label{eq:partial-mixture}
y_t=\psi^\top x_{-\ell,t}+\theta_{j_t}\,x_{\ell,t}+\varepsilon_t,
\end{equation}
where $\psi\in\R^{d-1}$ is common to all latent groups and only the coefficient of $x_{\ell,t}$ may be mixed.  Let
$B_\ell:=\E(x_{-\ell}x_\ell)/\E(x_\ell^2)$ and $z_{-\ell}:=x_{-\ell}-B_\ell x_\ell$.  Write $\vartheta_j:=\theta_j+\psi^\top B_\ell$ and let $\bar\vartheta_w:=\E(x_\ell^2\vartheta_j)/\E(x_\ell^2)$ denote the focal coefficient targeted by the transformed regression.  On an independent preliminary sample, estimate $B_\ell$, use the resulting $\hat z_{-\ell}$ to regress $y$ jointly on $(\hat z_{-\ell},x_\ell)$ and estimate the common coefficient $\psi$ and focal center, and freeze these estimates.  On the detection stream, form $\hat z_{-\ell,t}=x_{-\ell,t}-\hat B_\ell x_{\ell,t}$ and $\tilde y_t=y_t-\hat\psi^\top\hat z_{-\ell,t}$, then apply the scalar test to $(x_{\ell,t},\tilde y_t)$.  
The following result establishes the effectiveness of this detection stream:

\begin{proposition} \label{prop:partial}
Suppose the partial-mixture model in~\eqref{eq:partial-mixture} satisfies \Cref{ass:diffusion}, with $f_\varepsilon$ uniformly continuous.  Suppose $z_{-\ell}$ is independent of $(x_\ell,j)$, has a distribution symmetric about zero, and satisfies $\E\|z_{-\ell}\|^4<\infty$.  From an independent sample of size $n_0\asymp T$, suppose $\|\hat\psi-\psi\|=o_p(\sqrt\eta)$, $\|\hat B_\ell-B_\ell\|=o_p(1)$, and the preliminary focal estimator is $o_p(\sqrt\eta)$-consistent for $\bar\vartheta_w-\psi^\top(B_\ell-\hat B_\ell)$, the focal coefficient in the regression using $\hat z_{-\ell}$.  As $\eta\to0^+$ with $T\asymp\eta^{-p}$ for fixed $p\in(1,2)$:

	(i) Under $H_0$, if the conditions of \Cref{cor:CI} hold for the resulting feasible scalar model, then $\Pr(\textup{reject}\mid H_0)\to\alpha$.

	(ii) Under $H_1$, if the scalar gap condition of \Cref{prop:H1-gap}(i), terminal concentration \eqref{eq:terminal-concentration} for the feasible recursion, and consistent matched-null calibration hold, then $\Pr(\textup{reject}\mid H_1)\to1$.
\end{proposition}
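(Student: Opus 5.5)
The plan is to reduce the feasible detection stream $(x_{\ell,t},\tilde y_t)$ to an exact scalar mixture model with a perturbed effective noise, and then invoke \Cref{cor:CI} and \Cref{prop:H1-gap}, arguing conditionally on the independent preliminary sample, in the same way as the proof of \Cref{prop:intercept}.

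\textbf{Step 1: the effective scalar model.} Substituting $x_{-\ell}=z_{-\ell}+B_\ell x_\ell$ into \eqref{eq:partial-mixture} gives $y=\psi^\top z_{-\ell}+\vartheta_j x_\ell+\varepsilon$. With $\hat z_{-\ell}=z_{-\ell}+(B_\ell-\hat B_\ell)x_\ell$, we obtain
\[
\tilde y_t=\{\vartheta_{j_t}-\hat\psi^\top(B_\ell-\hat B_\ell)\}x_{\ell,t}+\tilde\varepsilon_t,
\qquad \tilde\varepsilon_t:=\varepsilon_t+(\psi-\hat\psi)^\top z_{-\ell,t}.
\]
Conditional on the preliminary sample, we verify that the triples $(x_{\ell,t},j_t,\tilde\varepsilon_t)$ satisfy \Cref{ass:diffusion}(a)--(c) for the scalar model. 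The argument runs as follows.
\begin{itemize}
\item Since $z_{-\ell}$ is independent of $(x_\ell,j)$ and $\varepsilon$ is independent of everything, $\tilde\varepsilon$ is independent of $(x_\ell,j)$.
\item The noise $\tilde\varepsilon$ is symmetric, because it is a sum of independent symmetric variables.
\item Its density is the convolution $f_{\tilde\varepsilon}=f_\varepsilon * \text{law}(a^\top z_{-\ell})$ with $a=\psi-\hat\psi$. It is therefore bounded by $\sup f_\varepsilon$, and continuous at zero by the uniform continuity of $f_\varepsilon$.
\item $f_{\tilde\varepsilon}(0)\to f_\varepsilon(0)>0$ as $a\to0$, by dominated convergence.
\item The moment conditions follow from $\E\|z_{-\ell}\|^4<\infty$.
\end{itemize}
Under $H_0$ the effective coefficient is common, equal to $\vartheta-\hat\psi^\top(B_\ell-\hat B_\ell)$. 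The preliminary focal estimator is $o_p(\sqrt\eta)$-consistent for exactly this target, so the centered initialization $\hat\theta_0\pm c$ meets the hypotheses of \Cref{prop:init}.

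\textbf{Step 2: size.} The effective noise moments satisfy $\sigma_{\tilde\varepsilon}-\sigma_\varepsilon=O(\|a\|)$ and $\kappa_{\tilde\varepsilon}-\kappa_\varepsilon=O(\|a\|)$, and $\|a\|=o_p(\sqrt\eta)$. Hence the conditional scalar null center $2\kappa_{\tilde\varepsilon}\kappa_{x_\ell}$ and the diffusion scale $D$ differ from their unperturbed values by $o_p(\sqrt\eta)$ and $o_p(1)$ respectively.

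Conditionally on the preliminary sample, \Cref{prop:null-distribution}(i) applies to the effective model, because the stability condition $\alpha_{\tilde\varepsilon}\alpha_{x}<1$ is inherited for small $a$ by continuity. The conditions of \Cref{cor:CI} for the feasible model then give conditional rejection probability $\to\alpha$ on events of probability tending to one. Bounded convergence turns this into the unconditional statement.

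\textbf{Step 3: power.} Under $H_1$ the effective model is a scalar mixture with coefficients $\vartheta_j-\hat\psi^\top(B_\ell-\hat B_\ell)$. These are distinct whenever the $\theta_j$ are, since the shift is common to all groups. \Cref{ass:diffusion}(d) for the focal covariate follows from (d) for the original model, using the orthogonal decomposition $x=(z_{-\ell}+B_\ell x_\ell,x_\ell)$.

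By hypothesis, the scalar gap condition of \Cref{prop:H1-gap}(i) holds, together with terminal concentration and matched-null calibration $\hat q\xrightarrow{p}S^{H_0}_\infty$. The argument of \Cref{prop:H1-gap}(iii) then applies verbatim: $\hat S_T\xrightarrow{p}S^{H_1}_\infty>S^{H_0}_\infty$, using consistency of $\hat\Sigma$ and $\hat\sigma$, so rejection has probability tending to one.

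\textbf{Main obstacle.} The hardest part is Step 2. \Cref{prop:null-distribution} is stated for a fixed data law, whereas here the law depends on the preliminary sample through $a$ and $\hat B_\ell$. I would need the CLT and the oracle-quantile expansion to hold uniformly over a shrinking neighborhood of noise laws, checking in particular that the ODE equilibrium $e^*$ and $D$ depend continuously on $a$. My plan is to argue along subsequences: for deterministic $a_\eta=o(\sqrt\eta)$, a triangular-array version of the diffusion limit holds, because the drift and covariance are continuous in the noise law under uniform continuity of $f_\varepsilon$. A standard almost-sure-representation argument then transfers this to the random preliminary estimates.
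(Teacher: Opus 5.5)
Your proposal follows the paper's proof. The paper uses the same exact reduction to a scalar mixture with common shifted slope $\vartheta_j-\hat\psi^\top(B_\ell-\hat B_\ell)$ and effective noise $\varepsilon+(\psi-\hat\psi)^\top z_{-\ell}$, verifies the scalar assumptions conditionally on the preliminary sample, applies the null results conditionally and then bounded convergence, and uses the orthogonal decomposition to transfer predictor separation under $H_1$.

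One small correction: the preliminary focal estimator targets $\bar\vartheta_w-\psi^\top(B_\ell-\hat B_\ell)$, not "exactly" the effective slope. The two differ by $(\psi-\hat\psi)^\top(B_\ell-\hat B_\ell)=o_p(\sqrt\eta)$, and this bound is what the paper uses to verify the initialization condition.
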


\subsection{Robustness to Endogeneity}
The preceding analysis assumes $\varepsilon_t$ is independent of $x_t$.  Here we allow linear endogeneity, $\mathrm{Cov}(x_t,\varepsilon_t)=\gamma\ne0$, under joint Gaussianity.  Specifically, consider $y_t=x_t^\top\theta_{j_t}^{\mathrm{true}}+\varepsilon_t$ for general $d\ge1$, where the component coefficients are identical under $H_0$ and at least two are distinct under $H_1$.  Retain \Cref{ass:diffusion} except for its noise--covariate independence clause, which is replaced by $x_t\sim N(0,\Sigma)$, $\Sigma\succ0$, $\varepsilon_t\sim N(0,\sigma_\varepsilon^2)$, and $\mathrm{Cov}(x_t,\varepsilon_t)=\gamma$, where $\gamma^\top\Sigma^{-1}\gamma<\sigma_\varepsilon^2$; under $H_1$, the latent group remains independent of $(x_t,\varepsilon_t)$.  Define the projected residual $\tilde\varepsilon_t:=\varepsilon_t-\gamma^\top\Sigma^{-1}x_t$ and its variance $\tilde\sigma_\varepsilon^2:=\sigma_\varepsilon^2-\gamma^\top\Sigma^{-1}\gamma$.

\begin{proposition}
\label{prop:endogeneity}
	Under the endogeneity setup above:

	(i) Under $H_0$, the conclusions of \Cref{prop:null-distribution} and \Cref{cor:CI} continue to hold with $(\varepsilon,\sigma_\varepsilon)$ replaced by $(\tilde\varepsilon,\tilde\sigma_\varepsilon)$.  In particular, the Gaussian null center remains $4/\pi$ and $\Pr(\textup{reject}\mid H_0)\to\alpha$.

	(ii) Under $H_1$, if the conditions of \Cref{prop:H1-gap}(iii) hold with $(\varepsilon,\sigma_\varepsilon)$ replaced by $(\tilde\varepsilon,\tilde\sigma_\varepsilon)$, then $\Pr(\textup{reject}\mid H_1)\to1$.
\end{proposition}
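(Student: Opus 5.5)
The plan is to reduce the endogenous model to an exogenous one by a linear reparametrization, and then invoke the earlier results verbatim. Under joint Gaussianity, set $\delta:=\Sigma^{-1}\gamma$ and $\tilde\varepsilon_t=\varepsilon_t-\delta^\top x_t$. Then $\mathrm{Cov}(x_t,\tilde\varepsilon_t)=\gamma-\Sigma\Sigma^{-1}\gamma=0$. Because $(x_t,\tilde\varepsilon_t)$ is jointly Gaussian, this zero covariance gives full independence, and $\tilde\varepsilon_t\sim N(0,\tilde\sigma_\varepsilon^2)$ with $\tilde\sigma_\varepsilon^2=\sigma_\varepsilon^2-\gamma^\top\Sigma^{-1}\gamma>0$.

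The model is then rewritten as $y_t=x_t^\top\tilde\theta_{j_t}+\tilde\varepsilon_t$, where $\tilde\theta_j:=\theta_j^{\mathrm{true}}+\delta$. Under $H_1$, the latent group $j_t$ is independent of $(x_t,\varepsilon_t)$ and hence of $(x_t,\tilde\varepsilon_t)$. This is an exogenous homogeneous model under $H_0$ and an exogenous mixture under $H_1$, because adding the common shift $\delta$ preserves $\tilde\theta_i\neq\tilde\theta_j$. \Cref{alg:sgd} sees only $(y_t,x_t)$, so its iterates are literally identical under the two parametrizations; only the labels of the error coordinates change, via $e_k=\tilde\theta-\theta_k$. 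The gap $\theta_{1,T}-\theta_{2,T}$ does not depend on the reference point at all.

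Next I check \Cref{ass:diffusion}(a)--(c) for $(x,\tilde\varepsilon)$. Centering and $\Sigma\succ0$ are inherited from the original model. Gaussian noise is symmetric, with a bounded density that is continuous and positive at zero. All moments are finite.

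The stability conditions also hold. For Gaussian noise, $\alpha_{\tilde\varepsilon}=2\cdot\frac{1}{\sqrt{2\pi}\tilde\sigma}\cdot\tilde\sigma\sqrt{2/\pi}=2/\pi<\pi/4$. In the scalar case, $\alpha_\varepsilon\alpha_x=8/\pi^2<1$.

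For part (i), \Cref{thm:eqm}, \Cref{prop:init}, and \Cref{prop:null-distribution} then apply with $(\tilde\varepsilon,\tilde\sigma_\varepsilon)$. Since $\kappa_{\tilde\varepsilon}=\sqrt{2/\pi}$, the center is $2(2/\pi)=4/\pi$.

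For the initialization step, I must ensure that $\hat\theta_0$ converges to the relevant target, which is now $\tilde\theta=\theta^{\mathrm{true}}+\Sigma^{-1}\gamma$. This is exactly the OLS probability limit $\Sigma^{-1}\E[xy]$, so a pooled OLS preliminary estimator remains consistent for the right point. Likewise, the OLS residual variance estimates $\E[(y-\tilde\theta^\top x)^2]=\tilde\sigma_\varepsilon^2$ at root-$n_0$ rate. This meets the $o_p(\sqrt\eta)$ requirements of \Cref{cor:CI}, and \Cref{cor:CI} then gives size $\alpha$.

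For part (ii), I transfer the hypotheses of \Cref{prop:H1-gap}(iii) stated in tilde terms. Assumption (d) is invariant under the common shift, because the infimum over $b$ absorbs $\delta$. \Cref{prop:H1-gap}(ii) then gives $S_\infty^{H_1}>4/\pi$ in normalized tilde units, and terminal concentration with $\hat q\to 4/\pi$ yields power one.

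The main obstacle is conceptual rather than computational: confirming that every nuisance quantity the procedure estimates automatically targets the tilde quantities rather than the structural ones. This covers $\hat\theta_0$, $\hat\sigma_\varepsilon$, the component-scale regression under $H_1$, and the bootstrap null. For the variance function regression, the pooled residual under $H_1$ is $\tilde\varepsilon+x^\top(\tilde\theta_j-\tilde\theta_{\mathrm{pool}})$, so its intercept is $\tilde\sigma_\varepsilon^2$. I would verify this identity explicitly and otherwise simply cite the exogenous results.
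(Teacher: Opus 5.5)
Your proposal is correct and follows the paper's proof essentially step for step. Both arguments absorb the common shift $\Sigma^{-1}\gamma$ into the component coefficients to obtain an exogenous Gaussian model with noise $\tilde\varepsilon$, verify \Cref{ass:diffusion}(a)--(c) for $(x,\tilde\varepsilon)$, transfer the null and power results, and use $\E[R^2\mid x]=\tilde\sigma_\varepsilon^2+x^\top\Omega_\theta x$ to show that the nuisance estimators target the projected scale. Your explicit checks of the stability constants ($2/\pi<\pi/4$ and $8/\pi^2<1$) and of the shift-invariance of \Cref{ass:diffusion}(d) are small details that the paper leaves implicit.
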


\Cref{prop:endogeneity} follows from the common shift
$
	y_t=x_t^\top\{\theta_{j_t}^{\mathrm{true}}+\Sigma^{-1}\gamma\}
	+\tilde\varepsilon_t.
$
Joint Gaussianity makes $\tilde\varepsilon_t$ independent of $x_t$, while the shift $\Sigma^{-1}\gamma$ is shared by every component and therefore leaves their contrasts unchanged.  The relevant scale is thus $\tilde\sigma_\varepsilon$, not the marginal standard deviation $\sigma_\varepsilon$.  If $\bar\theta=\E[\theta_j^{\mathrm{true}}]$ and $\Omega_\theta=\Var(\theta_j^{\mathrm{true}})$, the residual from the pooled BLP satisfies
$
	\E[R^2\mid x]=\tilde\sigma_\varepsilon^2+x^\top\Omega_\theta x.
$
Thus the component-scale estimator continues to isolate the component-noise variance: under $H_0$, $\Omega_\theta=0$, while under $H_1$ the quadratic term absorbs the between-component variation.  Joint Gaussian endogeneity therefore changes only the common coefficient and the relevant noise scale, leaving the normalized null calibration and conditional fixed-alternative consistency unchanged.  As $\gamma^\top\Sigma^{-1}\gamma \to \sigma_\varepsilon^2$, the projected noise variance approaches zero, so the diffusion approximation becomes nearly degenerate and finite-sample calibration may become unstable, although the stated reduction remains valid whenever $\gamma^\top\Sigma^{-1}\gamma < \sigma_\varepsilon^2$.  The synthetic experiment in \Cref{sec:synthetic} finds power $1.00$ in all nine endogeneity settings considered.


\subsection{Effects of Heteroskedasticity}
We next characterize how heteroskedasticity affects the null calibration of the gap statistic $\hat S_T$ in \Cref{prop:null-distribution}(i).  For analytical tractability, consider the scalar model $y_t=x_t\theta_{j_t}+\varepsilon_t$.  Retain \Cref{ass:diffusion} except for its noise--covariate independence clause: under $H_1$, $j_t$ is independent of $(x_t,\varepsilon_t)$, and under both hypotheses $\varepsilon_t\mid x_t$ has mean zero and is symmetric about zero, while its variance may depend on $x_t$.  Write $\sigma_x^2:=\E[x^2]>0$, $\sigma_\varepsilon^2:=\E[\varepsilon^2]>0$, $\kappa_\varepsilon:=\E|\varepsilon|/\sigma_\varepsilon$, and $\kappa_x:=\E|x|/\sigma_x$.

\begin{assumption}\label{ass:heterosk}
	The conditional densities $f_{\varepsilon\mid x}(\cdot\mid x)$ are uniformly bounded and continuous at zero for $P_x$-almost every $x$, and $\mu\mapsto\E[|\mu+\varepsilon|\mid x]$ is uniquely minimized at zero.
\end{assumption}

\begin{proposition}\label{prop:heterosk}
	Under the heteroskedastic scalar model above and \Cref{ass:heterosk}, let $\eta\to0^+$ with $T\asymp\eta^{-p}$ for some fixed $p\in(1,2)$.

	(i) Under $H_0$, suppose \eqref{eq:terminal-concentration} holds at the attracting null equilibrium and the homoskedastic reference is consistently fitted.  If $\mathrm{Cov}(|\varepsilon|,|x|)>0$ by a fixed amount, then $\Pr(\textup{reject}\mid H_0)\to1$; if the covariance is negative by a fixed amount, then $\Pr(\textup{reject}\mid H_0)\to0$.

	(ii) Under $H_1$, suppose \eqref{eq:terminal-concentration} holds at a separated attracting equilibrium and the fitted heteroskedastic-null critical value converges to the corresponding null equilibrium gap.  Then $\Pr(\textup{reject}\mid H_1)\to1$.
\end{proposition}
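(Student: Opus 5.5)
The plan is to redo the scalar equilibrium computation of \Cref{thm:eqm}(i) with the noise distribution allowed to depend on $x$, and then compare the heteroskedastic gap with the homoskedastic reference the test uses.

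\textbf{Null equilibrium.} Under $H_0$, write the scalar drift for an antisymmetric state $e_1=-e_2=e>0$. Piece $1$ wins exactly when $|\varepsilon+ex|<|\varepsilon-ex|$, which is the event $\varepsilon x>0$ up to ties of probability zero. So
\[
\mu_1(e,-e)=-\E\bigl[(\varepsilon+ex)x\,\ind\{\varepsilon x>0\}\bigr]=-\E\bigl[|\varepsilon||x|\ind\{\varepsilon x>0\}\bigr]-e\,\E\bigl[x^2\ind\{\varepsilon x>0\}\bigr].
\]
Symmetry of $\varepsilon\mid x$ gives $\Pr(\varepsilon x>0\mid x)=1/2$. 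Because $\varepsilon$ and $x$ are no longer independent, I will not factor the expectations but condition on $x$ instead. This gives $\E[|\varepsilon||x|\ind\{\varepsilon x>0\}]=\tfrac12\E[|\varepsilon||x|]$ and $\E[x^2\ind\{\cdot\}]=\tfrac12\sigma_x^2$, up to the orientation convention for which piece sits on which side. The equilibrium radius is therefore $e^*_{\mathrm h}=\E[|\varepsilon||x|]/\sigma_x^2$. Local stability follows by differentiating the drift with respect to perturbations, as in the proof of \Cref{thm:eqm}(i), using the conditional densities $f_{\varepsilon\mid x}$. Their uniform boundedness and continuity at zero (\Cref{ass:heterosk}) make the boundary terms well defined. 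I will take local stability as part of the maintained terminal-concentration hypothesis, with any needed smallness analogous to $\alpha_\varepsilon\alpha_x<1$.

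\textbf{Size comparison.} The normalized heteroskedastic null gap is
\[
S^{\mathrm h}_\infty=\frac{2\,\E[|\varepsilon||x|]}{\sigma_x\sigma_\varepsilon}=2\kappa_\varepsilon\kappa_x+\frac{2\,\mathrm{Cov}(|\varepsilon|,|x|)}{\sigma_x\sigma_\varepsilon}.
\]
The fitted homoskedastic reference converges to $2\kappa_\varepsilon\kappa_x$, because the plug-in moments $\hat\kappa_\varepsilon,\hat\kappa_x$ are marginal and consistent, and its critical value is this center plus $O_p(\sqrt\eta)$. By \eqref{eq:terminal-concentration}, $\hat S_T\xrightarrow{p}S^{\mathrm h}_\infty$. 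The two claims in (i) then follow because the covariance is fixed and nonzero, so the limits differ by a fixed margin. If the covariance is positive, $\Pr(\hat S_T>\hat q)\to1$; if it is negative, the probability tends to $0$.

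\textbf{Alternative.} Under $H_1$, assume the critical value converges to the heteroskedastic null gap $S^{\mathrm h}_\infty$. It then suffices to show that at any separated equilibrium the gap exceeds $S^{\mathrm h}_\infty$ strictly; the conclusion follows from \eqref{eq:terminal-concentration} exactly as in \Cref{prop:H1-gap}(iii). To get the strict inequality, I will adapt the scalar argument of \Cref{prop:H1-gap}(i) conditionally on $x$.

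\textbf{Main obstacle.} The hard part is this conditional adaptation. At an equilibrium $(\theta_1,\theta_2)$ with midpoint $m$ and half-gap $h$, the first-order conditions express $h\sigma_x^2$ as $\E[|x|\,\cdot\,|\text{effective residual}|]$ summed over regions. Here the effective residual is $\varepsilon+x(\theta_j-m)$, a mixture shifted by a component contrast. The uniqueness of the minimizer in $\mu\mapsto\E[|\mu+\varepsilon|\mid x]$ gives, pointwise in $x$, $\E[|\mu+\varepsilon|\mid x]\ge\E[|\varepsilon|\mid x]$, with strict inequality when $\mu\ne0$. Assumption~(d) then ensures that $\mu=x(\theta_j-m)\ne0$ on a set of positive probability for some component, however $m$ is chosen. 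Integrating against $|x|$ yields $h>e^*_{\mathrm h}$.

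The delicate point is that the winner regions under $H_1$ are not simply $\{\varepsilon x>0\}$. I would handle this by rewriting the equilibrium condition as the stationarity of a median-type objective in $m$ and $h$, mirroring the proof of \Cref{prop:H1-gap}(i), and then checking that every step conditions on $x$ before taking expectations.
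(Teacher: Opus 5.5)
Your plan is essentially the paper's proof. For part~(i), you condition on $x$ to get $e^*=\E[|\varepsilon|\,|x|]/\sigma_x^2$ and split $\E[|\varepsilon|\,|x|]=\E|\varepsilon|\,\E|x|+\mathrm{Cov}(|\varepsilon|,|x|)$. For part~(ii), you combine the half-gap equation $h\sigma_x^2=\sum_z\pi_z\E_z[|x|\,|\alpha_zx+\varepsilon|]$, with $\alpha_z=\theta_z^{\mathrm{true}}-m$, with the strict inequality $\E[|\mu+\varepsilon|\mid x]>\E[|\varepsilon|\mid x]$ for $\mu\neq0$, conditioning on $(x,j)$; this is where independence of $j$ from $(x,\varepsilon)$ is used.

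Two small corrections. First, with $e_1=e>0$, piece~1 wins on $\{\varepsilon x<0\}$, not $\{\varepsilon x>0\}$; your display as written gives a strictly negative drift and so has no positive root. Second, the $H_1$ winner-region issue you flag is not an obstacle. The identity $p_1-p_2=\sgn\{hx(\alpha_jx+\varepsilon)\}$ gives $(p_1-p_2)(\alpha_jx+\varepsilon)=\sgn(hx)\,|\alpha_jx+\varepsilon|$ pointwise for any midpoint $m$. This is exactly how \eqref{eq:half-gap-H1} is obtained, and it uses no independence between $\varepsilon$ and $x$. Equivalently, $\min\{(R-hx)^2,(R+hx)^2\}=R^2+h^2x^2-2|hxR|$, so your median-type-objective route also works.
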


\Cref{prop:heterosk} shows conditional consistency under fixed alternatives when $j_t$ is independent of $(x_t,\varepsilon_t)$, the conditional noise is centered and symmetric with a unique median at zero, \eqref{eq:terminal-concentration} holds at a separated attracting equilibrium, and the critical value consistently estimates the matched heteroskedastic-null quantile.
The original homoskedastic calibration may instead be anti-conservative or conservative; even $\operatorname{Cov}(|\varepsilon|,|x|)=0$ aligns only the leading centers and does not establish exact size.
Moreover, under unrestricted heteroskedasticity the variance function regression in \Cref{sec:scale-estimation} does not by itself identify the marginal noise scale under $H_1$, because $\E[R^2\mid x]=\sigma_\varepsilon^2(x)+x^2\Omega_\theta$, where $\sigma_\varepsilon^2(x):=\Var(\varepsilon\mid x)$.  Feasible matched calibration therefore requires additional structure or a separately justified estimator of the heteroskedastic null law.

A related calibration issue arises under quadratic model misspecification: when the data follow a quadratic nonlinearity, the linear model optimized by SGD leaves an $x$-dependent residual.  After demeaning ($\tilde{x} = x - \bar{x}$), an omitted even term $\beta\tilde{x}^2$ is orthogonal to $\tilde{x}$ under symmetric covariates (since $\E[\tilde{x}^3]=0$), so it need not bias the linear OLS slope, but it can still alter the gap statistic through residual variation tied to $x$.  The same orthogonality extends to Gaussian multivariate covariates because all third central moments vanish ($\E[\tilde{x}_i\tilde{x}_j\tilde{x}_k]=0$).  Unlike \Cref{prop:heterosk}, the omitted even term induces an $x$-dependent conditional \textit{mean} rather than pure conditional heteroskedasticity, so it is not a literal special case; the precise equilibrium shift is characterized here only for the scalar case.

 In applications, Breusch--Pagan and Ramsey RESET diagnostics flag these departures when interpreting the test result.

\section{Experiment and Applications}\label{sec:framework}

In this section, we develop a comprehensive set of numerical results that complement the theoretical analysis.  First, we validate the null calibration, detection power, and step-size choice on synthetic data (\S\ref{sec:synthetic}).  Then, we present a three-stage discovery toolkit for real-world datasets to validate known SPs and even identify underreported SPs (\S\ref{sec:framework-desc}).

\subsection{Synthetic Experiment}\label{sec:synthetic}

\paragraph*{Synthetic Experiment Setup.} Throughout this subsection, for $H_0$, we generate data from the linear model $y_t = x_t^\top\theta^{\mathrm{true}} + \varepsilon_t$ with $x_t \sim N(0,\Sigma)$ and $\varepsilon_t \sim N(0,\sigma_\varepsilon^2)$, independently at each round~$t$.  Unless stated otherwise, we set $\sigma_\varepsilon = 1$, $\Sigma = I_d$, and run SGD with constant step size $\eta = 0.005$ for $T = 200{,}000$ iterations over 30 independent seeds.  Under both $H_0$ and $H_1$, we estimate $\sigma_\varepsilon$ by the cross-fitted variance function regression in \Cref{sec:scale-estimation}.  Because the multivariate synthetic designs have diagonal $\Sigma$ and diagonal mixture covariance, the auxiliary regression uses an intercept and $(x_1^2,\ldots,x_d^2)$.  Each experiment below specifies any departures from these defaults.

We validate the null distribution of $S_T$ predicted by \Cref{prop:null-distribution}, both in mean and in tail probability.  For the mean, we run SGD under $H_0$ and vary dimension ($d = 1,2,5,10,20$), noise level ($\sigma_\varepsilon = 0.5,1,2,5$), and covariance structure (identity, diagonal, correlated), using different $\theta^{\mathrm{true}}$ values to confirm invariance.  

\begin{figure}[!t]
	\centering
	\begin{subfigure}[t]{0.485\textwidth}
		\centering
		\includegraphics[width=\linewidth]{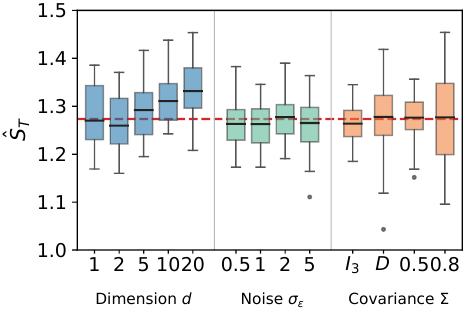}
		\caption{Null calibration}
		\label{fig:synthetic-null}
	\end{subfigure}
	\hfill
	\begin{subfigure}[t]{0.47\textwidth}
		\centering
		\includegraphics[width=\linewidth]{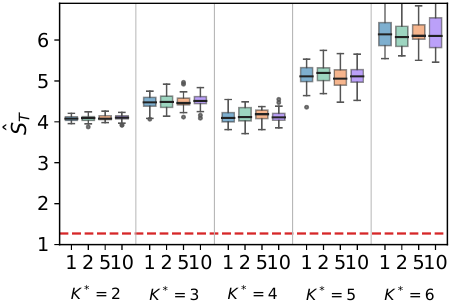}
		\caption{Varying $K^*$ and $d$}
		\label{fig:synthetic-groups}
	\end{subfigure}
	\par\medskip
	\begin{subfigure}[t]{0.47\textwidth}
		\centering
		\includegraphics[width=\linewidth]{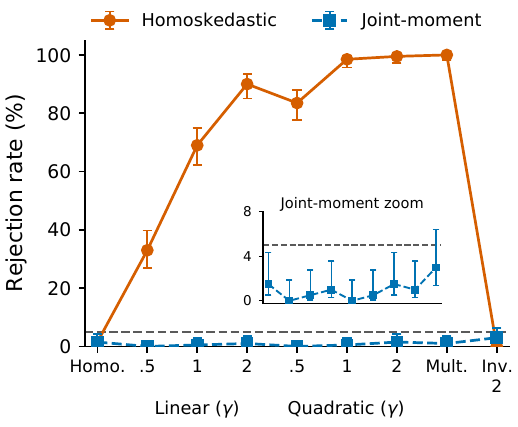}
		\caption{Heteroskedasticity}
		\label{fig:heterosked-rejection}
	\end{subfigure}
	\hfill
	\begin{subfigure}[t]{0.47\textwidth}
		\centering
		\includegraphics[width=\linewidth]{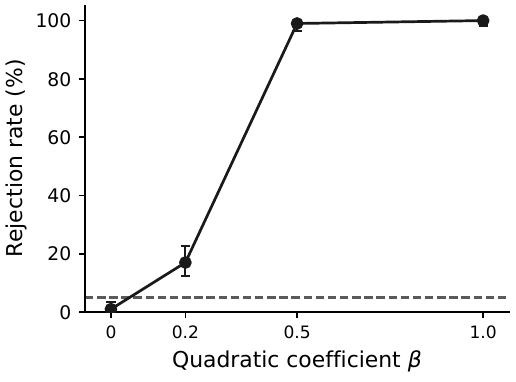}
		\caption{Quadratic misspecification}
		\label{fig:misspec-rejection}
	\end{subfigure}
	\caption{Synthetic calibration, alternative separation, and robustness.  Panels~(a)--(b) show seed-level distributions of $\hat S_T$; dashed lines mark $4/\pi\approx1.273$, and $D=\operatorname{diag}(1,2,5)$ in panel~(a).  Panels~(c)--(d) report null rejection rates over 200 trials per setting; error bars are Wilson 95\% confidence intervals, and dashed lines mark the nominal 5\% level.  Panel~(c) compares homoskedastic and joint-moment calibration.}
	\label{fig:synthetic-robustness}
\end{figure}

Panel~(a) of \Cref{fig:synthetic-robustness} displays the results. In every $H_0$ configuration, the mean $\hat S_T$ falls within 5\% of $4/\pi \approx 1.273$.  The small positive bias for larger $d$ is consistent with finite-step effects omitted by the first-order limit in \Cref{prop:null-distribution}(ii).  The distributions across noise levels and covariance structures remain centered close to $4/\pi$, confirming invariance of the leading center to $\sigma_\varepsilon$ and~$\Sigma$.   Under this fresh-stream design, empirical Type~I error is also close to nominal: at $\alpha=0.05$, scalar closed-form calibration gives a rejection rate of 5.5\% for Gaussian covariates, while setting-level simulation calibration gives 6.0\% for both $d=2$ and $d=5$.  These calculations use the component-scale estimator in every sample.  They do not validate the real-data protocol, which reuses the same finite sample for 1{,}000 passes.  Repeating that protocol under a Gaussian scalar null for 300 independently generated datasets produces rejection rates of 58.7\%, 55.3\%, 46.0\%, and 30.3\% at $n=150$, 333, 777, and 1{,}840, respectively, with Monte Carlo standard errors 2.8, 2.9, 2.9, and 2.7 percentage points.  The matched binary-design check at $n=777$ rejects 2.7\% of the time (standard error 0.9 percentage points).  \Cref{tab:finite-n-reuse} in the Supplementary Materials reports the full audit.  Consequently, repeated-data analytic calibration at these sample sizes is not justified by the current asymptotic theory.

We next examine gap behavior under $H_1$ across $K^*\in\{2,3,4,5,6\}$ and $d\in\{1,2,5,10\}$, placing the signal in the first coordinate when $d>1$.  In \Cref{fig:synthetic-robustness}(b), the component-scale estimate remains close to one (mean $0.987$--$1.006$), whereas the pooled-residual scale ranges from $2.236$ to $3.563$; the resulting mean $\hat S_T$ ranges from $4.078$ to $6.183$ and is nearly invariant to $d$ for fixed $K^*$.  The complementary signal-strength sweep is reported in \Cref{fig:signal-strength} of the Supplementary Materials.

Finally, we assess power and robustness.  At $\Delta=0.5$, power at level $0.05$ is $1.00$ in 12 of 13 dimension--endogeneity settings and $0.99$ in the remaining setting.  Under joint Gaussian endogeneity, projected-noise calibration keeps the scalar mean $\hat S_T$ within $2\%$ of $4/\pi$; Type~I error is $5.2\%$ overall, and multivariate rejection rates range from $2\%$ to $8\%$.

Panels~(c)--(d) of \Cref{fig:synthetic-robustness} report the robustness results.  With linear heteroskedasticity at $\lambda_{\mathrm{het}}=1$, the homoskedastic plug-in rejection rate is $0.69$, versus $0.005$ using the joint-moment null center from \Cref{prop:heterosk}; the inverse-quadratic design instead makes the plug-in test conservative.  Under $H_1$, the mean gap exceeds its matched $H_0$ value in every heteroskedastic design examined.  An omitted quadratic term yields rejection rates of $0.17$, $0.99$, and $1.00$ at $\beta=0.2$, $0.5$, and $1.0$.  Full designs and the endogeneity plot appear in \Cref{app:robustness-experiments} of the Supplementary Materials.

\subsection{A Discovery Toolkit for Real-World Datasets}\label{sec:framework-desc}

To further demonstrate the value of our detection test, we propose a three-stage toolkit, \textit{Detect--Screen--Verify} (DSV), for identifying SPs from real-world datasets as follows:
\begin{enumerate}
	\item[Stage~1.]   Run \Cref{alg:test}.
	\item[Stage~2.]   After a Stage~1 rejection, run SGD on the original (non-demeaned) data with an intercept.  The two fitted pieces induce a binary partition: each observation is assigned to the piece with the smaller squared residual at the fitted parameters.  Then, screen all observed covariates against this partition, using the correlation ratio for continuous variables or Cram\'{e}r's~$V$ for categorical variables, and select the covariate with the strongest association as the candidate confounder.
	\item[Stage~3.]   Stratify the data by the identified confounder and fit within-group regressions; a strict majority of eligible within-group slopes opposite to the pooled slope confirms SP.
\end{enumerate}
 
The stages serve distinct roles.  Stage~1 (``detect'') is an omnibus detector: it tests departure from the calibrated homogeneous model without requiring a candidate confounder or the true number of mixture components.  Its statistic is normalized by the component-scale estimator for continuous focal covariates, and by the pooled-residual scale with its matched calibration for binary focal covariates, for which the variance function regression may be degenerate (\Cref{sec:scale-estimation}).  A rejection alone, however, is not yet a discovery: it signals a departure compatible with latent subpopulations without naming them or establishing a paradox.  Stages~2 and~3 provide this specificity.  Stage~2 (``screen'') exploits a byproduct of the optimizer: the two fitted pieces already partition the observations.  It then links this partition to an observed covariate, turning an anonymous rejection into a candidate confounder.  Stage~3 (``verify'') checks the defining condition of SP, a within-group sign reversal under the selected stratification.  Thus, an end-to-end DSV discovery requires all three stages: rejection, recovery of an interpretable grouping, and verification of the reversal.  Stages~2--3 are diagnostic procedures and do not inherit the inferential guarantees of Stage~1.  Finally, as discussed in \Cref{sec:extensions}, Breusch--Pagan and Ramsey RESET tests on the OLS residuals are used before Stage~1 to flag possible heteroskedasticity and functional-form misspecification.

We apply DSV to eight datasets spanning healthcare, biology, labor, energy, housing, and ecology: four established SP benchmarks and four applications in which, to our knowledge, the reported sign reversals have not previously been documented.  For mixture detection, we compare SGD with the modified penalized-EM homogeneity test of \citet{kasahara2015testing} and conventional EM with BIC model selection~\citep{dempster1977,mclachlan2000finite}.  For SP discovery, we also report Xu-style enumeration~\citep{xu2018detecting} and X-terminal CART screening~\citep{shmueli2018forest}.  These discovery methods receive the nominated focal relationship and observed candidate confounders and therefore are not direct benchmarks for the candidate-free detection stage.  Dataset-level results, sources, implementation details, and mechanisms are reported in \Cref{tab:confounder-all,app:realworld-details} of the Supplementary Materials.

\paragraph*{Results.}
Across the eight datasets, the originally used SGD statistic exceeds its fitted critical value in 8/8 cases; in light of the finite-sample audit, these are computed outputs rather than validated 5\%-level rejections.  The KS detector rejects homogeneity in 7/8 cases, while EM--BIC selects two components in 7/8 cases.  Among the SP-discovery procedures, all of which use observed candidate covariates, DSV Screen top-ranks the reference grouping in 8/8 cases, and the complete DSV procedure recovers and verifies the reference reversal in 8/8 cases.  Xu-style enumeration also surfaces the reference reversal in 8/8 cases, while X-terminal CART does so in 5/8.  These discovery counts address a different task from candidate-free mixture detection.  Dataset-level results and qualifications are provided in \Cref{tab:confounder-all,app:realworld-details} of the Supplementary Materials.

\paragraph*{An Illustrative Example.}
We illustrate how the toolkit finds both a confounder and its confounding mechanism using global soil-nematode data; the dataset source and construction are documented in \Cref{app:realworld-datasets} of the Supplementary Materials.  We ask whether human pressure is negatively associated with bacterivore abundance within ecological regions.  The pooled regression of log bacterivore abundance on human footprint is positive ($+0.0064$), suggesting the opposite.  The component-normalized statistic exceeds its originally fitted threshold ($\hat S_T=1.419$, analytic $p<10^{-10}$; Breusch--Pagan $p=0.51$, RESET $p=0.28$), but the Stage~1 $p$-value is provisional for the repeated-data reason above.  The subsequent descriptive stages identify WWF biome as the confounder: after stratifying, the slope is negative in 7 of 9 biomes (\Cref{fig:nematode-gap,fig:nematode-slopes}).  The mechanism is that biomes differ in both baseline abundance and typical human footprint; a pooled regression conflates these distinct ecological regimes, and the between-biome variation overwhelms the within-biome effect (\Cref{fig:nematode-pooled-map,fig:nematode-biome-map}).  Without stratification, the pooled positive sign could mislead analyses of human impact on soil biodiversity; the within-biome negative association is consistent with the ecological expectation that land-use disturbance is associated with lower bacterivore abundance.

 \begin{figure}[!t]
	\centering
	\begin{subfigure}{0.45\textwidth}
		\centering
		\includegraphics[width=\linewidth]{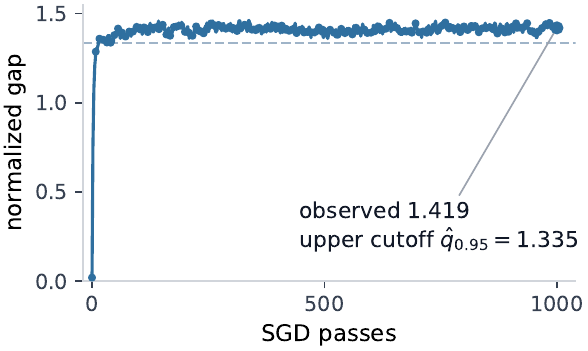}
		\caption{Component-normalized gap and upper cutoff}
		\label{fig:nematode-gap}
	\end{subfigure}
	\qquad
	\begin{subfigure}{0.45\textwidth}
		\centering
		\includegraphics[width=\linewidth]{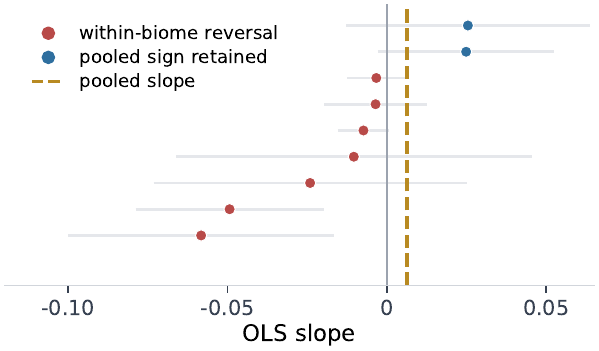}
		\caption{Pooled and biome-specific slopes}
		\label{fig:nematode-slopes}
	\end{subfigure}
	\par\medskip
	\begin{subfigure}{0.45\textwidth}
		\centering
		\includegraphics[width=\linewidth]{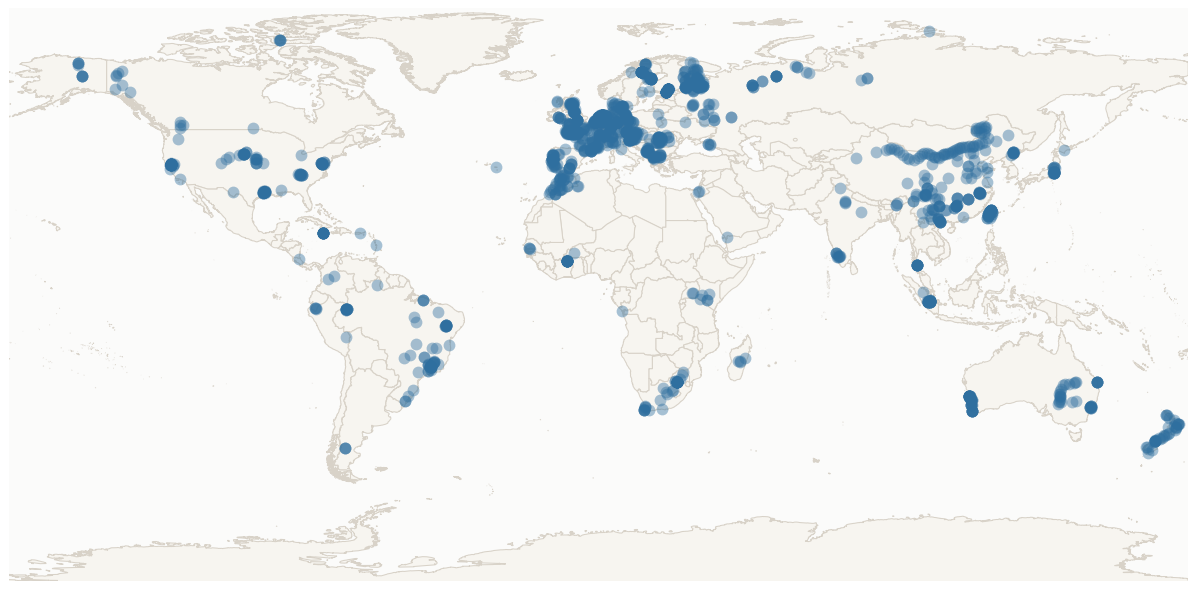}
		\caption{Pooled spatial view}
		\label{fig:nematode-pooled-map}
	\end{subfigure}
	\qquad
	\begin{subfigure}{0.45\textwidth}
		\centering
		\includegraphics[width=\linewidth]{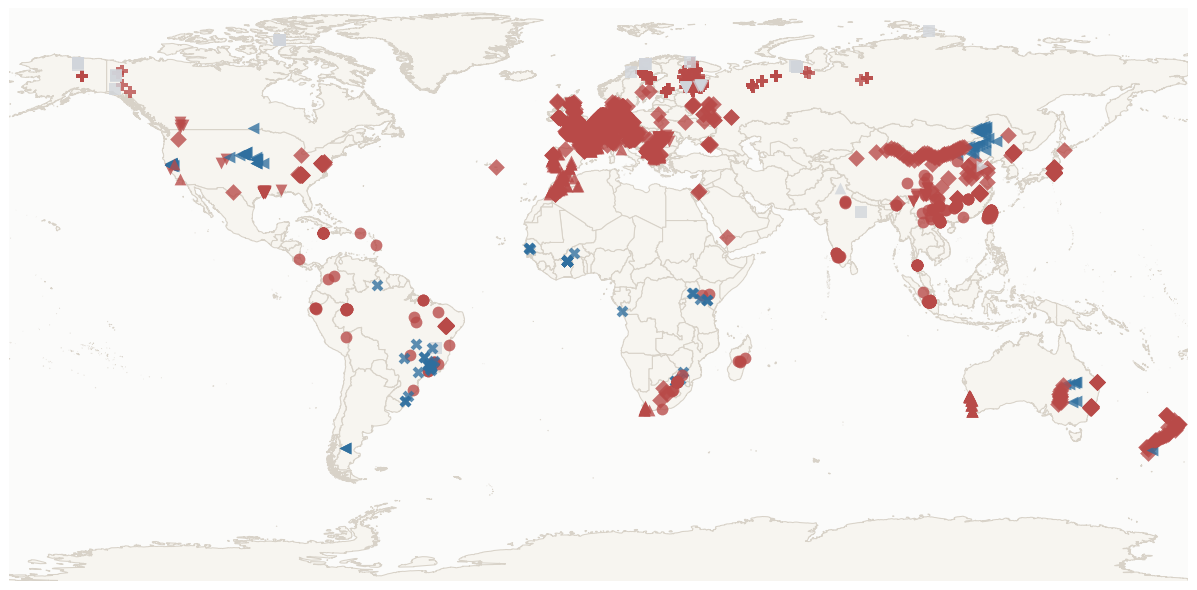}
		\caption{Biome-stratified spatial view}
		\label{fig:nematode-biome-map}
	\end{subfigure}
	\caption{Detection and interpretation for the soil-nematode example.  (a)~The SGD gap exceeds the fitted 5\% upper critical value.  (b)~Seven of nine biome-specific slopes are negative despite a positive pooled slope; points and bars show estimates and 95\% confidence intervals, and the dashed gold line shows the pooled slope.  (c)--(d)~WWF-biome stratification reveals distinct regimes (blue: pooled sign retained; red: reversed).}
	\label{fig:realworld-applications}
\end{figure}

This example also makes the distinction between detection and discovery concrete.  KS rejects homogeneity for the nominated focal regression, providing evidence of mixture structure without identifying WWF biome or establishing the reversal.  DSV recovers WWF biome from the SGD partition and verifies the pooled-to-within sign change.  In an unguided scan in which neither the focal relationship nor WWF biome is supplied, the Xu categorical enumeration contains the correct human-footprint--bacterivore--biome triplet, but ranks it only 69th among 18{,}960 candidates; a quantile-based extension accommodating continuous variables ranks it 18{,}110th among 774{,}200 candidates.  It therefore recovers the pattern only after producing many higher-ranked alternatives rather than surfacing it as a leading discovery.  The automatic tree scan does not recover the relevant triplet among its top 80 candidate focal relationships.  Implementation details and the unguided-search convention for these comparisons are given in \Cref{app:implementation}.

The descriptions, focal variables, candidate confounders, and confounding mechanisms for the remaining datasets are shown in \Cref{app:realworld-details}.


\section{Concluding Remarks}\label{sec:conclusion}
This work shows that SGD with competitive loss and a constant step size can serve as a detector of latent heterogeneity.  We establish a universality result for nonconvex SGD dynamics: winner-take-all competition spontaneously performs unsupervised mixture detection.  In the Gaussian design-and-noise setting, the equilibrium is governed by $4/\pi$, a dimensionless universal quantity arising from the half-normal mean.  Building on this mechanism, the DSV toolkit combines detection, screening, and verification to identify and validate SP in real-world datasets.  These contributions connect the implicit bias of SGD, mixture modeling, and hypothesis testing.

These results suggest several directions for future research.  First, \Cref{prop:heterosk} establishes conditional fixed-alternative consistency with matched heteroskedastic-null calibration in the scalar case; proving terminal concentration and extending feasible calibration to multivariate heteroskedastic designs are important next steps.  Second, the multivariate equilibrium and null theories currently assume Gaussian covariates, while the power guarantee additionally assumes Gaussian noise.  Developing equilibrium, null-calibration, and detection results for non-Gaussian multivariate covariates and non-Gaussian noise, together with finite-sample Type~I and Type~II error bounds, would broaden the inferential theory.  In the Gaussian benchmark, a nonasymptotic bound on $|S_T-4/\pi|$ would provide explicit guidance for sample size and step-size selection.  Third, replacing squared error with a general loss $\ell(y,\theta^\top x)$ would extend the framework to settings such as classification, although the diffusion analysis and symmetry argument would need to be adapted.

\if0\preprint
\paragraph*{Declaration of Generative AI Use.}
The authors used Claude Code (Fable) for wording edits and to assist with implementing standard methods, like SGD, exhaustive enumeration, classification trees, expectation-maximization,
and likelihood-ratio tests in our experiments. This assistance does not affect the core ideas, scientific rigor, or originality of the research.

\paragraph*{Disclosure Statement.}
The authors declare no conflicts of interest.

\paragraph*{Data Availability Statement.}
The empirical datasets analyzed in this study are publicly available from the original sources listed in the Supplementary Materials.
\if0\anon
An anonymized reproducibility code package is included with the submission.
\else
Versioned reproducibility code is available under the MIT License at \url{https://github.com/hecralies/optimizer-as-detector/releases/tag/v1.0-oad}.
\fi
\fi

\phantomsection\label{supplementary-material}

%
%

\bibliography{reference}

\def\spacingset#1{\renewcommand{\baselinestretch}%
{#1}\small\normalsize} \spacingset{1.8} 

\clearpage
\setcounter{section}{0}
\setcounter{equation}{0}
\setcounter{theorem}{0}
\setcounter{proposition}{0}
\setcounter{lemma}{0}
\setcounter{corollary}{0}
\setcounter{definition}{0}
\setcounter{assumption}{0}
\setcounter{remark}{0}
\setcounter{figure}{0}
\setcounter{table}{0}
\renewcommand{\thesection}{S\arabic{section}}
\renewcommand{\theequation}{S\arabic{equation}}
\renewcommand{\thetheorem}{S\arabic{theorem}}
\renewcommand{\theproposition}{S\arabic{proposition}}
\renewcommand{\thelemma}{S\arabic{lemma}}
\renewcommand{\thecorollary}{S\arabic{corollary}}
\renewcommand{\thedefinition}{S\arabic{definition}}
\renewcommand{\theassumption}{S\arabic{assumption}}
\renewcommand{\theremark}{S\arabic{remark}}
\renewcommand{\thefigure}{S\arabic{figure}}
\renewcommand{\thetable}{S\arabic{table}}
\def\theHsection{S\arabic{section}}
\def\theHequation{S\arabic{equation}}
\def\theHtheorem{S\arabic{theorem}}
\def\theHproposition{S\arabic{proposition}}
\def\theHlemma{S\arabic{lemma}}
\def\theHcorollary{S\arabic{corollary}}
\def\theHdefinition{S\arabic{definition}}
\def\theHassumption{S\arabic{assumption}}
\def\theHremark{S\arabic{remark}}
\def\theHfigure{S\arabic{figure}}
\def\theHtable{S\arabic{table}}

\begin{center}
\textbf{\large Supplementary Material for ``Optimizer as Detector: Stochastic Gradient Descent for Latent Mixture Models''}
\if1\anon

\medskip
\if1\archiveauthor
Ye Shi$^{*}$, Xiao Jin$^{+}$, and Chung-Piaw Teo$^{\dagger}$
\else
Ye Shi, Xiao Jin, and Chung-Piaw Teo
\fi
\if1\preprint

\medskip
\preprintdate
\fi
\fi
\end{center}
\medskip

\section{Literature Review}\label{app:literature-review}

Our work connects two streams of literature: the estimation and detection of mixtures of linear regressions, and the analysis of stochastic gradient descent through diffusion approximation.

\subsection{Mixture Regression and Heterogeneity Detection}

Modeling data as a mixture of linear relationships dates to the classical switching-regression and finite-mixture literature~\citep{quandt1972switching,Quandt1978,mclachlan2000finite}, in which the EM algorithm~\citep{dempster1977} is widely used for estimation.  Because the mixture likelihood is nonconvex, a large body of work asks when estimation can be guaranteed to succeed.  \citet{kwon2019global} establish global convergence of EM for a two-component mixed linear regression model, \citet{wang2019convergence} formulate mixed linear regression as a mixed-integer program and prove consistency in the noiseless case, and more recent studies consider high-dimensional~\citep{wang2024leveraging} and fully online~\citep{liu2025convergence} settings.  Most of these estimation methods assume a specified component count.  Closest to our testing objective, \citet{kasahara2015testing} study heteroscedastic normal mixture regression models and develop a modified penalized-EM test of $m_0$ against $m_0+1$ components.  In particular, their $m_0=1$ case is a formal homogeneity test against a two-component mixture-regression alternative.

Our approach differs from the existing methods as follows.  First, our statistic is generated by the stationary dynamics of constant-step SGD rather than a Gaussian mixture likelihood and batch penalized EM.  Second, although our algorithm uses two competing fitted pieces, the omnibus alternative in \eqref{eq:dgp} allows any unknown number $K^*\ge2$ of true latent components; the test asks whether heterogeneity exists without first estimating $K^*$.  Third, our matched null analysis covers general symmetric noise in the scalar setting as well as the multivariate Gaussian setting under the stated assumptions.  Finally, the subsequent Screen and Verify stages seek an observed explanation and check the sign reversal required for SP, tasks not addressed by a component-number test.

Research on Simpson's paradox asks a related but different question: it searches for a sign reversal associated with latent heterogeneity~\citep{blyth1972simpson,Pearl2009}.  Enumeration and tree-based methods~\citep{xu2018detecting,shmueli2018forest,alipourfard2018simpsons} require observed candidate confounders, whereas neural-network partitioning~\citep{wang2023simnet} learns a heterogeneity-revealing partition without providing calibrated inference.  Our Detect--Screen--Verify construction combines an SGD-based omnibus statistic with subsequent descriptive recovery of an observed explanation and direct verification of sign reversal.

\subsection{Stochastic Gradient Descent and Diffusion Approximation}

Our analysis also builds on the classical theory of stochastic approximation~\citep{robbins1951stochastic}, whose asymptotic behavior is commonly studied through ODE and dynamical-systems arguments~\citep{kushner2003,borkar2008,benveniste1990adaptive}.  One branch of this literature uses iterate averaging~\citep{ruppert1988efficient,polyak1992acceleration} for online statistical inference: \citet{chen2020statistical} construct confidence intervals from averaged SGD, \citet{zhu2021online} estimate its limiting covariance fully online, and \citet{chen2023online} extend inference to gradient-free Kiefer--Wolfowitz updates.  A complementary branch studies constant-step-size SGD through a diffusion limit~\citep{fan2018statistical,hu2019diffusion,wang2020asymptotic,liu2021diffusion}, whose stationary law captures persistent stochastic motion around attracting sets.  The cited diffusion-approximation studies primarily use these tools to quantify uncertainty around an estimator or describe convergence toward a minimizer.  We instead use the stationary diffusion law to derive the null distribution of a latent-heterogeneity test under the winner-take-all loss.

\section{Diffusion Approximation Analysis}\label{app:diffusion-approximation}

Our analysis of the SGD dynamics~\eqref{eq:sgd-update} under the competitive loss uses the weak-convergence method for constant-step stochastic approximation \citep[Sections~8.2.1 and~10.1.1]{kushner2003}.  The argument proceeds from the mean ODE to a local SDE and then to the long-horizon terminal law used for inference.  Because the sample update contains a discontinuous winner rule, the required regularity and localization conditions require separate verification.  \Cref{lem:local-diffusion-regularity} provides this verification, and \Cref{prop:localized-diffusion} establishes the resulting ODE, OU, and terminal approximations.  Their statements and proofs are deferred to \Cref{app:localized-diffusion-results} to ensure readability.  The related smooth-SGD analyses in \citet{fan2018statistical,wang2020asymptotic,liu2021diffusion} provide useful context.

\paragraph*{Step~1: Mean ODE.}
Under constant step size $\eta$ and the time rescaling $\tau = \eta t$, define the piecewise-constant interpolation $e^\eta(\tau) := e_{\lfloor \tau/\eta \rfloor}$.  On compact time intervals along trajectories separated from coincidence, this interpolation converges weakly, as established in \Cref{prop:localized-diffusion}(i), to the mean ODE $\dot e_k = \mu_k(e_1,e_2)$, where $e_k = \theta^{\mathrm{true}} - \theta_k$ and
\begin{equation}\label{eq:ode-drift}
	\mu_k(e_1,e_2) = -\E\bigl[I_k\,r_k\,x\bigr], \quad r_k = \varepsilon + e_k^\top x, \quad I_k=\ind\{k=w\}, \quad k = 1,2,
\end{equation}
where $w$ is the winning piece selected by the uniform tie rule specified in \Cref{alg:sgd} and its surrounding text.  Equivalently, after averaging over the tie randomization, define the selection weights $p_k:=\Pr(w=k\mid x,\varepsilon,e_1,e_2)$.  Since $p_k=\E[I_k\mid x,\varepsilon,e_1,e_2]$, the drift can also be written as $\mu_k=-\E[p_k r_kx]$, with $p_1+p_2=1$ everywhere.
Setting the drift to zero identifies the equilibria.  We focus on separated equilibria, meaning $e_1^*\ne e_2^*$, or equivalently $\delta^*\ne0$.  The prescribed nonzero half-gap initialization selects a trajectory that remains separated from coincidence, and the stability of its limiting equilibrium is established in the proof of \Cref{thm:eqm}.

\paragraph*{Step~2: Local SDE and OU approximation.}
The corresponding small-noise diffusion is
\begin{equation}\label{eq:sde-general}
	de = \mu(e)\,d\tau + \sqrt{\eta}\,\Gamma(e)\,dW_\tau,
\end{equation}
where $e=(e_1,e_2)$ and $\Gamma(e)\Gamma(e)^\top$ is the conditional covariance of the joint SGD noise.  Linearizing around a separated equilibrium $(e_1^*,e_2^*)$ and rescaling deviations by $\eta^{-1/2}$ gives, on fixed time intervals, the OU approximation established in \Cref{prop:localized-diffusion}(ii),
\begin{equation}\label{eq:ou-general}
	d\xi = J(e_1^*,e_2^*)\,\xi\,d\tau + \Gamma(e_1^*,e_2^*)\,dW_\tau,
\end{equation}
where $J(e_1^*,e_2^*)$ is the drift Jacobian.  

\paragraph*{Step~3: Long-horizon terminal law.}
The terminal regime $T\asymp\eta^{-p}$, $p\in(1,2)$, extends beyond fixed diffusion-time intervals.  We therefore show that, after burn-in, the iterates remain near the selected attractor, the contracting modes reach their stationary OU law, and tangent motion does not affect the gap statistic at first order.  These long-horizon steps are formalized in \Cref{prop:localized-diffusion}(iii) and applied in the proof of \Cref{prop:null-distribution}.

In the following proofs, we analyze the SGD dynamics through the error vectors $e_k$ and their midpoint--half-gap coordinates $\bfe:=(e_1+e_2)/2$ and $\delta:=(e_1-e_2)/2$; we refer to the $\bfe$-direction as the symmetric (midpoint) mode and to the $\delta$-direction as the antisymmetric (half-gap) mode.  \Cref{tab:notation} collects these and the remaining recurring symbols. 

\begin{table}[H]
	\centering
	\caption{Summary of notation.}
	\label{tab:notation}
	\renewcommand{\baselinestretch}{1}\footnotesize
	\begin{tabular}{@{}lp{0.75\textwidth}@{}}
		\toprule
		Notation & Meaning \\
		\midrule
		$e_k$ & Null error of piece $k$: $\theta^{\mathrm{true}}-\theta_k$ \\
		$r_k$;\; $w$;\; $p_k$ & Residual $r_k=\varepsilon+e_k^\top x$; winner $w$; tie-aware weight $p_k=\Pr(w=k\mid x,\varepsilon,e_1,e_2)$ \\
		$\mu_k$ & Mean-ODE drift: $\mu_k(e_1,e_2)=-\E[p_kr_kx]$; see \eqref{eq:ode-drift} \\
		$\bfe$ & Midpoint $(e_1+e_2)/2$  \\
		$\delta$ & Half-gap $(e_1-e_2)/2$ \\
		$\tau$ & Diffusion time $\eta t$ \\
		$\Sigma$;\; $\|e\|_\Sigma$ & Second moment $\Sigma=\E[xx^\top]$; norm $\|e\|_\Sigma=(e^\top\Sigma e)^{1/2}$ \\
		$\kappa_\varepsilon$;\; $\kappa_x$ & Normalized absolute moments: $\kappa_\varepsilon=\E[|\varepsilon|]/\sigma_\varepsilon$; for $d=1$, $\kappa_x=\E[|x|]/\sqrt{\E[x^2]}$ \\
		$c_0$ & Equilibrium-radius constant: $c_0\sigma_\varepsilon=\E[|\varepsilon|]\sqrt{2/\pi}$; $c_0=2/\pi$ for Gaussian noise \\
		$e^*$ & Equilibrium half-gap: $(e_1^*,e_2^*)=(e^*,-e^*)$; for $d=1$, $e^*=\E[|\varepsilon|]\E[|x|]/\E[x^2]$; for $d\ge2$, $e^*\in\mathcal E_2$ \\
		$\mathcal E_2$ & Half-gap equilibrium ellipsoid $\{e\in\R^d:\|e\|_\Sigma=c_0\sigma_\varepsilon\}$ ($d\ge2$) \\
		$\mathcal M_2$ & Equilibrium manifold $\{(e,-e):e\in\mathcal E_2\}$ \\
		$J^{\mathrm{sym}}$;\; $J^{\mathrm{anti}}$ & Symmetric and antisymmetric Jacobian blocks at a separated equilibrium  \\
		$\alpha_\varepsilon$;\; $\alpha_x$ & Stability constants: $\alpha_\varepsilon=2f_\varepsilon(0)\E[|\varepsilon|]$; for $d=1$, $\alpha_x=\E[|x|]\E[|x|^3]/(\E[x^2])^2$ and stability requires $\alpha_\varepsilon\alpha_x<1$; for Gaussian covariates with $d\ge2$, stability requires $\alpha_\varepsilon<\pi/4$ \\
		$\xi_\delta$ & Rescaled half-gap fluctuation $\eta^{-1/2}(\delta-e^*)$ \\
		$D(e^*)$ & Common piece-noise covariance at a separated equilibrium ($D$ when $d=1$) \\
		$S_T$;\; $\hat S_T$;\; $S_\infty^{H_0}$, $S_\infty^{H_1}$ & Statistic $S_T=\|\theta_{1,T}-\theta_{2,T}\|_\Sigma/\sigma_\varepsilon$; plug-in $\hat S_T$; limits under $H_0$ and $H_1$ \\
		$m$;\; $\delta_\theta$;\; $\alpha_z$ (under $H_1$) & Parameter midpoint $(\theta_1+\theta_2)/2$; parameter half-gap $\delta_\theta=(\theta_1-\theta_2)/2$; group-specific offset from the fitted midpoint $\theta_z^{\mathrm{true}}-m$ \\
		\bottomrule
	\end{tabular}
\end{table}

\section{Proofs for Results in \Cref{sec:detection}}\label{app:proofs}

Here, we provide proofs for the analytical results in \Cref{sec:detection}. These proofs invoke the localized diffusion-approximation results and several auxiliary lemmas; for readability, their statements and proofs are collected in \Cref{app:localized-diffusion-results,app:auxiliary-lemmas}.

\subsection{Proof of \Cref{thm:eqm}}

For analytical convenience, we first rewrite the mean ODE~\eqref{eq:ode-drift} in the midpoint--half-gap coordinates $(\bfe,\delta)$.  Since $r_1^2-r_2^2=4\delta^\top x(\varepsilon+\bfe^\top x)$, piece~1 wins exactly when $\delta^\top x(\varepsilon+\bfe^\top x)<0$, and the uniform tie rule assigns weight $\tfrac12$ to each piece when this quantity vanishes.  The mean ODE can therefore be written as
\begin{equation}\label{eq:ode-drift-expanded}
	\begin{aligned}
		\dot e_1 &= -\E\Bigl[\bigl(\ind\{\delta^\top x(\varepsilon+\bfe^\top x)<0\}+\tfrac12\ind\{\delta^\top x(\varepsilon+\bfe^\top x)=0\}\bigr)(\varepsilon+e_1^\top x)\,x\Bigr],\\
		\dot e_2 &= -\E\Bigl[\bigl(\ind\{\delta^\top x(\varepsilon+\bfe^\top x)>0\}+\tfrac12\ind\{\delta^\top x(\varepsilon+\bfe^\top x)=0\}\bigr)(\varepsilon+e_2^\top x)\,x\Bigr].
	\end{aligned}
\end{equation}

\paragraph*{Scalar case ($d=1$).}  We first consider the scalar case.  On the antisymmetric regime $e_1=-e_2=\delta$ and $\bfe=0$, the selection event for $\delta>0$ reduces to $\{\varepsilon x\le0\}$: ties contribute no drift, because observations with $x=0$ zero the integrand and $\Pr(\varepsilon=0)=0$ under \Cref{ass:diffusion}.  Conditioning on $x$ and using the symmetry of $\varepsilon$, we have $\E[x^2\ind\{\varepsilon x\le0\}]=\tfrac12\E[x^2]$ and $\E[\varepsilon x\ind\{\varepsilon x\le0\}]=-\tfrac12\E[|\varepsilon|]\E[|x|]$.  Therefore,
\begin{equation}
	\begin{aligned}
		\mu_1(\delta,-\delta)
		= -\E\bigl[\ind\{\varepsilon x \le 0\}(\varepsilon + \delta x)x\bigr]
		&= -\E[\varepsilon x\,\ind\{\varepsilon x \le 0\}] - \delta\,\E[x^2\,\ind\{\varepsilon x \le 0\}]\\
		&
		= \tfrac{1}{2}\E[|\varepsilon|]\E[|x|] - \tfrac{1}{2}\E[x^2]\,\delta.
	\end{aligned}
\end{equation}
When $\delta<0$, the selection event becomes $\{\varepsilon x\ge0\}$, and the same calculation gives $\mu_1(\delta,-\delta)=-\tfrac12\E[|\varepsilon|]\E[|x|]-\tfrac12\E[x^2]\delta$.  Combining the two cases and applying \Cref{lem:drift-symmetry}(a,b), we obtain, for $\delta\ne0$,
\begin{equation}\label{eq:delta-ode}
	\dot{\delta} = \tfrac{1}{2}(\mu_1 - \mu_2) = f(\delta) = \tfrac{1}{2}\E[|\varepsilon|]\E[|x|]\,\sgn(\delta) - \tfrac{1}{2}\E[x^2]\,\delta, \quad f(0) = 0.
\end{equation}
Solving $f(\delta)=0$ on the positive branch gives $\delta^*=e^*=\E[|\varepsilon|]\E[|x|]/\E[x^2]$.  Because $f$ is odd, $-e^*$ is the other antisymmetric equilibrium point.  It remains to establish local full-state stability.  By \Cref{lem:equiv}, $\psi'(0)=2f_\varepsilon(0)\E[|x|^3]$, while symmetry and independence of $\varepsilon$ give
\[
	\left.\frac{d}{d\bfe}\E\bigl[|\varepsilon+\bfe x|\,|x|\bigr]\right|_{\bfe=0}
	=\E[\sgn(\varepsilon)x|x|]=0.
\]
Consequently, the full Jacobian in $(\bfe,\delta)$ coordinates at $(0,e^*)$ is
\begin{equation}\label{eq:scalar-full-jacobian}
	J_{(\bfe,\delta)}(0,e^*)
	=\begin{pmatrix}
	-\tfrac12\E[x^2](1-\alpha_\varepsilon\alpha_x)&0\\
	0&-\tfrac12\E[x^2]
	\end{pmatrix}.
\end{equation}
When $\alpha_\varepsilon\alpha_x<1$, both eigenvalues are strictly negative, so $J:=J_{(\bfe,\delta)}(0,e^*)$ is Hurwitz.  Put $u=(\bfe,\delta-e^*)^\top$.  By \Cref{lem:local-diffusion-regularity}(i)--(ii), the drift is locally Lipschitz near $(0,e^*)$ and
\[
\dot u=Ju+r(u),
\qquad
\frac{\|r(u)\|}{\|u\|}\longrightarrow0
\quad\text{as }u\to0.
\]
Let $P\succ0$ be the symmetric solution of the Lyapunov equation $J^\top P+PJ=-I$, and define $L(u)=u^\top Pu$.  Along the unique local solutions,
\[
\dot L(u)=-\|u\|^2+2u^\top P r(u).
\]
For all sufficiently small $u$, the Fr\'echet remainder satisfies $2|u^\top P r(u)|\le\tfrac12\|u\|^2$.  Hence
\[
\dot L(u)\le-\tfrac12\|u\|^2
\le-\frac{1}{2\lambda_{\max}(P)}L(u),
\]
and Gronwall's inequality gives local exponential stability of $(e_1,e_2)=(e^*,-e^*)$.  The same conclusion holds for $(-e^*,e^*)$ by label symmetry.  This proves~(i).

\paragraph*{Multivariate case ($d \ge 2$).}  We next consider the multivariate case.  On the antisymmetric regime $\delta=e$ and $\bfe=0$, we decompose $\mu_1(e,-e)$ into its regression and noise terms.  Symmetry of $\varepsilon$ gives $\E[\ind\{\varepsilon(e^\top x)\le0\}\mid x]=\tfrac12$, so the regression term is $-\tfrac12\Sigma e$.  For the noise term, conditioning on $x$ and applying \Cref{lem:sign-proj}(i), together with $c_0\sigma_\varepsilon=\E[|\varepsilon|]\sqrt{2/\pi}$, gives $\tfrac12c_0\sigma_\varepsilon\Sigma e/\|e\|_\Sigma$.  Combining the two terms yields
\begin{equation}\label{eq:multi-drift}
	\mu_1(e,-e) = \frac{1}{2}\left(\frac{c_0\sigma_\varepsilon}{\|e\|_\Sigma}-1\right)\Sigma e.
\end{equation}
By \Cref{lem:drift-symmetry}(a,b), $\mu_2(e,-e)=-\mu_1(e,-e)$.  Since $\Sigma\succ0$, the equilibrium condition reduces to $\|e^*\|_\Sigma=c_0\sigma_\varepsilon$.  Hence the half-gap equilibrium set is the Mahalanobis ellipsoid $\mathcal{E}_2=\{e\in\R^d:\|e\|_\Sigma=c_0\sigma_\varepsilon\}$.

It remains to establish local full-state stability.  By \Cref{lem:local-diffusion-regularity}(ii), the full drift is jointly Fr\'echet differentiable at $(\bfe,\delta)=(0,e^*)$.  Hence \Cref{lem:jacobian-block} applies, and the $2d\times2d$ Jacobian at $(e_1,e_2)=(e^*,-e^*)$ is block diagonal in the $(\bfe,\delta)$ coordinates:
$J_{(\bfe,\delta)}=\operatorname{diag}(J^{\mathrm{sym}},J^{\mathrm{anti}})$.  We consider the two blocks separately.

\textit{Symmetric block.}  By \Cref{lem:ebar-multi}, $\dot{\bfe}=-\tfrac12\Sigma\bfe+\tfrac12\Psi(\bfe,\delta)$, where $\Psi(\bfe,\delta):=\E[|\delta^\top x|\sgn(\varepsilon+\bfe^\top x)x]$.  Conditional on $x$, the expectation over $\varepsilon$ is $h(\bfe^\top x)$, where $h(t):=2F_\varepsilon(t)-1=2f_\varepsilon(0)t+o(t)$ at zero.  Differentiating at $(\bfe,\delta)=(0,e^*)$, as justified in \Cref{lem:ebar-multi}, gives
\[
	\left.\frac{\partial\Psi}{\partial\bfe}\right|_{(0,e^*)}
	=2f_\varepsilon(0)\E[|e^{*\top}x|xx^\top].
\]
Substituting this derivative into $J^{\mathrm{sym}}$ and applying \Cref{lem:sign-proj}(ii), we obtain
\begin{equation}\label{eq:Jsym}
	J^{\mathrm{sym}}
	= -\tfrac{1}{2}\Sigma
	+ f_\varepsilon(0)\,\|e^*\|_\Sigma\,\E[|Z|]
	\left(\Sigma + \frac{\Sigma e^* e^{*\top}\Sigma}{e^{*\top}\Sigma e^*}\right).
\end{equation}
To determine its sign, define $S:=\Sigma^{-1/2}J^{\mathrm{sym}}\Sigma^{-1/2}$ and $\hat v:=\Sigma^{1/2}e^*/\|e^*\|_\Sigma$.  Using $\alpha_\varepsilon=2f_\varepsilon(0)\E[|\varepsilon|]$, we obtain $S=(-\tfrac12+\tfrac{\alpha_\varepsilon}{\pi})I+\tfrac{\alpha_\varepsilon}{\pi}\hat v\hat v^\top$.  By Sylvester's law of inertia, $J^{\mathrm{sym}}\prec0$ if and only if $S\prec0$.  The eigenvalue of $S$ on the orthogonal complement $\hat v^\perp$ is $s_\perp=-\tfrac12(1-\tfrac{2}{\pi}\alpha_\varepsilon)$, with multiplicity $d-1$, while its eigenvalue along $\operatorname{span}\{\hat v\}$ is $s_\parallel=-\tfrac12(1-\tfrac{4}{\pi}\alpha_\varepsilon)$.  Both are negative when $\alpha_\varepsilon<\pi/4$.  In particular, Gaussian noise gives $\alpha_\varepsilon=2/\pi<\pi/4$, and hence $J^{\mathrm{sym}}\prec0$.

\textit{Antisymmetric block.}  On $\bfe=0$, \eqref{eq:multi-drift} and $\mu_2(\cdot)=-\mu_1(\cdot)$ give $\dot\delta=\mu_1(\delta,-\delta)=:g(\delta)=\tfrac12(c_0\sigma_\varepsilon/\|\delta\|_\Sigma-1)\Sigma\delta$.  Define $r(\delta):=\|\delta\|_\Sigma$.  Since $\partial r/\partial\delta=\Sigma\delta/r(\delta)$, differentiation gives
$$\nabla_\delta g = \frac{c_0\sigma_\varepsilon}{2r}\Sigma - \frac{c_0\sigma_\varepsilon}{2r^3}(\Sigma\delta)(\Sigma\delta)^\top - \frac{1}{2}\Sigma.$$
At the equilibrium radius $r=c_0\sigma_\varepsilon$, the first and third terms cancel.  Therefore,
\begin{equation}\label{eq:jacobian}
	J^{\mathrm{anti}} = -\frac{1}{2(c_0\sigma_\varepsilon)^2}(\Sigma e^*)(\Sigma e^*)^\top.
\end{equation}
This matrix is rank-one negative semidefinite, with nonzero eigenvalue $-e^{*\top}\Sigma^2 e^*/(2e^{*\top}\Sigma e^*)$.  Its $d-1$ zero eigenvalues correspond exactly to the tangent space of $\mathcal E_2$ at $e^*$.

We now combine the two blocks.  The symmetric block contributes $d$ strictly negative eigenvalues, while the antisymmetric block contributes one negative radial eigenvalue and $(d{-}1)$ zero eigenvalues tangent to $\mathcal{E}_2$.  In $(\bfe,\delta)$ coordinates, the full equilibrium manifold is $\{(0,e):e\in\mathcal E_2\}$; equivalently, in $(e_1,e_2)$ coordinates it is $\mathcal M_2=\{(e,-e):e\in\mathcal E_2\}$.  Its tangent kernel is exactly $\{0\}\times T_e\mathcal E_2$.  Moreover, the block matrices vary continuously with $e\in\mathcal E_2$, and compactness of $\mathcal E_2$ makes the $d+1$ negative transverse eigenvalues uniformly bounded away from zero.

We next establish local attraction using an explicit Lyapunov function.  Put $r_0:=c_0\sigma_\varepsilon$, $q:=\|\delta\|_\Sigma$, and consider the quadratic Lyapunov function
\begin{equation}\label{eq:explicit-transverse-lyapunov}
L(\bfe,\delta):=\|\bfe\|^2+(q-r_0)^2.
\end{equation}
On a sufficiently small tube around $\{0\}\times\mathcal E_2$, $q$ is bounded away from zero and $L$ is uniformly equivalent to the squared distance from the equilibrium manifold.  For such $\delta$, define the radial retraction $e(\delta):=r_0\delta/q\in\mathcal E_2$.  The uniform Fr\'echet expansion in \Cref{lem:local-diffusion-regularity}(ii) gives
\[
\dot\bfe=J^{\mathrm{sym}}(e(\delta))\bfe+\rho_{\bfe},
\qquad
\|\rho_{\bfe}\|
\le \omega(\sqrt L)\sqrt L,
\]
for a deterministic modulus $\omega(r)\to0$, uniformly along the manifold.  Because $J^{\mathrm{sym}}(e)$ is symmetric negative definite for every $e\in\mathcal E_2$ and depends continuously on $e$, compactness supplies $a_1>0$ such that
\[
2\bfe^\top\dot\bfe
\le -a_1\|\bfe\|^2+\omega_1(\sqrt L)L,
\qquad \omega_1(r)\to0.
\]

For the radial component, \eqref{eq:multi-drift} gives the exact restricted identity
\[
\dot q
=-\frac12(q-r_0)
\frac{\delta^\top\Sigma^2\delta}{\delta^\top\Sigma\delta}
\qquad\text{when }\bfe=0.
\]
The quadratic bound for $\Phi(\bfe,\delta)-\Phi(0,\delta)$ in the proof of \Cref{lem:local-diffusion-regularity}(ii) shows that, in the full system, the right-hand side has an additional remainder $\rho_q$ satisfying $|\rho_q|\le C\|\bfe\|^2$.  Since the Rayleigh quotient is bounded below by $\lambda_{\min}(\Sigma)>0$,
\[
\frac{d}{d\tau}(q-r_0)^2
\le-\lambda_{\min}(\Sigma)(q-r_0)^2
+C|q-r_0|\,\|\bfe\|^2.
\]
Combining the last two displays and using
$|q-r_0|\,\|\bfe\|^2\le\sqrt L\,L$ yields constants $a>0$ and a modulus $\widetilde\omega(r)\to0$ such that
\[
\dot L\le-aL+\widetilde\omega(\sqrt L)L.
\]
The drift is locally Lipschitz on this tube by \Cref{lem:local-diffusion-regularity}(i), so the inequality holds along its unique solutions.  Shrinking the tube makes the second term at most $aL/2$; Gronwall's inequality then gives $L(\tau)\le L(0)e^{-a\tau/2}$.  Thus $\mathcal M_2$ is uniformly locally exponentially attracting and locally exponentially stable as a set.  This proves~(ii).

\paragraph*{Instability of the coincident origin.}
We finally verify the instability of the coincident origin.  Under the uniform tie rule, $(\bfe,\delta)=(0,0)$ is an equilibrium.  In the scalar case, take any $\delta \in (0, \bar\delta)$ with $\bar\delta \le e^*$.  The scalar drift satisfies $f(\delta) = \tfrac{1}{2}\E[|\varepsilon|]\E[|x|] - \tfrac{1}{2}\E[x^2]\delta > 0$ because $\delta < e^*$.  Similarly, $f(\delta) < 0$ for $\delta \in (-\bar\delta, 0)$.  Thus perturbations along the antisymmetric line are amplified.  In the multivariate Gaussian case, define $a:=c_0\sigma_\varepsilon$ and $r:=\|\delta\|_\Sigma$.  For $\delta\ne0$, the antisymmetric restriction of \eqref{eq:multi-drift} gives $\dot r=\tfrac{1}{2r}(a/r-1)\delta^\top\Sigma^2\delta>0$ whenever $0<r<a$.  Thus the coincident origin is also repelling along every nonzero antisymmetric ray in the multivariate setting.
\hfill$\square$

\subsection{Proof of \Cref{prop:null-distribution}}

By \Cref{thm:eqm,prop:init}, the unique mean-ODE trajectory selected under $H_0$ approaches a separated antisymmetric equilibrium in the scalar case or the equilibrium manifold in the Gaussian multivariate case.  By label symmetry, we relabel the scalar pieces when necessary so that the selected equilibrium is $(e_1^*,e_2^*)=(e^*,-e^*)$ with $e^*>0$; this relabeling leaves the gap statistic and the variance coefficient $D$ unchanged.  In the multivariate case, the fixed initialization direction $c$ and uniqueness of the ODE determine a nonrandom asymptotic phase $e^*=e^*(c)$ on the manifold.  The trajectory remains separated from coincidence, so \Cref{lem:local-diffusion-regularity} verifies all local hypotheses of \Cref{prop:localized-diffusion}.  Because $T\asymp\eta^{-p}$ with $p\in(1,2)$ gives $\eta T/\log(1/\eta)\to\infty$ and $\eta^2T\to0$, part~(iii) of that proposition applies: the contracting coordinates attain their stationary OU law and, in the multivariate case, the selected equilibrium direction changes by only $o_p(1)$ on the original scale.  The terminal martingale-array central limit theorem in the proof of that proposition is unconditional and freezes its coefficients at the deterministic phase $e^*(c)$.

We now compute the stationary normal fluctuation in midpoint--half-gap coordinates $(\bfe,\delta)$.  The joint Fr\'echet differentiability required by \Cref{lem:jacobian-block} is verified in \Cref{lem:local-diffusion-regularity}(ii).  Thus both the drift Jacobian and the SGD-noise covariance are block-diagonal at a separated equilibrium, so \eqref{eq:ou-general} splits into independent midpoint and half-gap components.  The statistic $S_T=2\|\delta_T\|_\Sigma/\sigma_\varepsilon$ depends only on the half-gap and is constant along the multivariate equilibrium manifold.  Therefore \Cref{prop:localized-diffusion}(iii) removes the neutral tangent coordinates at first order, and its $\sqrt\eta$-order variance is determined by the stationary normal OU component.  The smooth normal-coordinate remainder is $o_p(\sqrt\eta)$.

At the equilibrium exactly one piece is updated per iteration, so the two piece noises are pointwise mutually exclusive and, their means vanishing there, uncorrelated.  By label symmetry they share a common covariance $D(e^*)$, and in $(\bfe,\delta)$ coordinates the midpoint and half-gap noises each have covariance $D(e^*)/2$ with zero cross-covariance.  Since the drift Jacobian is also block-diagonal (\Cref{lem:jacobian-block}), the fluctuation~\eqref{eq:ou-general} decouples as claimed.  Writing $\xi_\delta:=\eta^{-1/2}(\delta-e^*)$ for the rescaled half-gap fluctuation, we carry out the scalar and multivariate calculations separately.

\paragraph*{Scalar case.}
We first consider the scalar case.  By noise symmetry, the common piece-noise variance is
\[
D=\E[(\varepsilon+e^*x)^2x^2\ind\{\varepsilon x\le0\}]
=\E[(\varepsilon-e^*x)^2x^2\ind\{\varepsilon x\ge0\}],
\]
whereas the antisymmetric drift eigenvalue is $\lambda_\delta=-\E[x^2]/2$.  By \Cref{lem:equiv}, the symmetric eigenvalue is
$
\lambda_{\mathrm{sym}}
=-\frac{\E[x^2]}{2}(1-\alpha_\varepsilon\alpha_x).
$
Under $\alpha_\varepsilon\alpha_x<1$, both eigenvalues are negative.  Since the half-gap noise has variance $D/2$, the rescaled half-gap fluctuation $\xi_\delta$ satisfies
\begin{equation}\label{eq:ou3}
d\xi_\delta=-\tfrac12\E[x^2]\xi_\delta\,d\tau+\sqrt{D/2}\,dB_\tau,
\end{equation}
whose stationary variance is $D/(2\E[x^2])$.  The long-horizon terminal law in \Cref{prop:localized-diffusion}(iii) therefore gives the distributional statement
\[
\frac{\delta_T-e^*}
{\sqrt{\eta D/(2\E[x^2])}}
\xrightarrow{d}N(0,1).
\]
Since $e^*>0$ and $\delta_T-e^*=O_p(\sqrt\eta)$, $\Pr(\delta_T>0)\to1$.  Hence, with probability tending to one,
\[
S_T-S_\infty^{H_0}
=\frac{2\sqrt{\E[x^2]}}{\sigma_\varepsilon}(\delta_T-e^*).
\]
Combining these two displays with Slutsky's theorem gives the scalar convergence in \Cref{prop:null-distribution}(i); equivalently, $2D/\sigma_\varepsilon^2$ is the variance coefficient of the limiting law of $(S_T-S_\infty^{H_0})/\sqrt\eta$.  This proves part~(i).

For Gaussian noise and covariates, we further expand $D$ as
(a)~$\E[\varepsilon^2x^2\ind\{\varepsilon x\ge0\}]=\sigma_\varepsilon^2\E[x^2]/2$;
(b)~$-2e^*\E[\varepsilon x^3\ind\{\varepsilon x\ge0\}]=-8\sigma_\varepsilon^2\E[x^2]/\pi^2$; and
(c)~$(e^*)^2\E[x^4\ind\{\varepsilon x\ge0\}]=6\sigma_\varepsilon^2\E[x^2]/\pi^2$.
Therefore, $D=\sigma_\varepsilon^2\E[x^2](1/2-2/\pi^2)$ and the leading mean is $4/\pi$.

\paragraph*{Multivariate case.}
We next consider the multivariate case.  The covariance identity above gives
\[
D(e^*)=\E[(\varepsilon-e^{*\top}x)^2xx^\top
\ind\{\varepsilon e^{*\top}x\ge0\}],
\qquad Q_\delta(e^*)=\frac12D(e^*).
\]
By the proof of \Cref{thm:eqm}, the antisymmetric Jacobian is
\[
J^{\mathrm{anti}}=-\frac{(\Sigma e^*)(\Sigma e^*)^\top}{2(c_0\sigma_\varepsilon)^2}.
\]
The full fixed-point OU process has no stationary distribution.  The matrix $J^{\mathrm{anti}}$ has one negative radial eigenvalue and $d-1$ zero eigenvalues tangent to $\mathcal E_2$.  For any nonzero tangent vector $t$, Gaussian nondegeneracy gives $t^\top Q_\delta(e^*)t>0$, while $J^{\mathrm{anti}}t=0$.  If a stationary covariance solved $J^{\mathrm{anti}}\Sigma_\xi+\Sigma_\xi J^{\mathrm{anti}\top}+Q_\delta=0$, contraction by $t$ would give the contradiction $0=t^\top Q_\delta t>0$.  Hence the tangent coordinates diffuse in the local tangent directions.

It remains to study the radial normal coordinate, because the gap depends only on the Mahalanobis radius and is constant along $\mathcal E_2$.  Let $e^*=e^*(c)\in\mathcal E_2$ denote the deterministic direction selected by the ideal mean-ODE trajectory from the fixed initialization direction $c$, and define
\[
r_0=c_0\sigma_\varepsilon,\,
u_r=\frac{\Sigma e^*}{\|\Sigma e^*\|},\,
|\lambda_{\mathrm{rad}}|=\frac{\|\Sigma e^*\|^2}{2r_0^2},
\, q_r=u_r^\top Q_\delta(e^*)u_r=\frac12u_r^\top D(e^*)u_r.
\]
On fixed intervals, projecting the local fluctuation onto $u_r$ gives a scalar OU process with rate $|\lambda_{\mathrm{rad}}|$, diffusion variance $q_r$, and stationary variance $q_r/(2|\lambda_{\mathrm{rad}}|)$.  Over the longer regime used in the theorem, \Cref{prop:localized-diffusion}(iii) applies this OU limit to the moving normal coordinate while the unscaled tangent coordinate changes by $o_p(1)$; continuity then permits freezing the coefficients at the selected limit $e^*$.  For $g(\delta)=\|\delta\|_\Sigma$,
\[
\nabla g(e^*)=\frac{\Sigma e^*}{r_0}
=\frac{\|\Sigma e^*\|}{r_0}u_r.
\]
Therefore, the first-order radial fluctuation $Y=\nabla g(e^*)^\top\xi_\delta$ has variance
\[
\Var(Y)
=\frac{\|\Sigma e^*\|^2}{r_0^2}
\frac{q_r}{2|\lambda_{\mathrm{rad}}|}
=q_r.
\]
Since $S_T=2g(\delta_T)/\sigma_\varepsilon$, define the radial variance coefficient
\begin{equation}\label{eq:multi-radial-variance}
V(e^*):=\frac{4q_r}{\sigma_\varepsilon^2}
=\frac{2\,u_r^\top D(e^*)u_r}{\sigma_\varepsilon^2}.
\end{equation}
 
Use a tubular neighborhood of $\mathcal E_2$ and let $\pi(\delta_T)\in\mathcal E_2$ be the local projection of $\delta_T$ onto the manifold.  By \Cref{prop:localized-diffusion}(iii), the normal displacement $\delta_T-\pi(\delta_T)$ is $O_p(\sqrt\eta)$, whereas $\pi(\delta_T)-e^*=o_p(1)$.  A uniform normal-coordinate Taylor expansion therefore gives
\[
g(\delta_T)=r_0+\sqrt\eta\,Y+o_p(\sqrt\eta),
\]
Because $g$ is constant on $\mathcal E_2$, only the $O_p(\sqrt\eta)$ normal displacement enters the first-order term.  The delta method \citep[Chapter~3]{vaart1998} and the stationary normal OU law now give the unconditional convergence in \Cref{prop:null-distribution}(ii), with leading center $2r_0/\sigma_\varepsilon=2\kappa_\varepsilon\sqrt{2/\pi}$.  This proves part~(ii).

Under anisotropic $\Sigma$, the radial variance can depend on the selected equilibrium direction.  The test therefore calibrates the multivariate critical value by simulation, as specified in \Cref{sec:scale-estimation}.

Finally, parts~(i)--(ii) imply $S_T\xrightarrow{p}S_\infty^{H_0}$, which proves part~(iii).

The two restrictions on the iteration regime are used through \Cref{prop:localized-diffusion}(iii).  The polynomial burn-in gives $\eta T/\log(1/\eta)\to\infty$, which removes the initial condition even after $\sqrt\eta$ normalization and yields the stationary OU variance.  The condition $\eta^2T\to0$ makes the accumulated unscaled tangent martingale and the second-order phase drift $o_p(1)$, so the multivariate equilibrium direction remains locally fixed while the radial functional reaches its equilibrium fluctuation law.
\hfill$\square$

\subsection{Proof of \Cref{prop:init}}

In midpoint--half-gap coordinates, the proposed initialization gives
$\bfe_0=\theta^{\mathrm{true}}-\hat\theta_0\xrightarrow{p}0$ and
$\delta_0=-c$, and hence $(\bfe_0,\delta_0)\xrightarrow{p}(0,-c)$.
The limiting point is separated because $c\ne0$.  We first analyze its deterministic trajectory.  In the scalar case, the invariant relation $\bfe=0$ and \eqref{eq:delta-ode} show that the trajectory remains on the sign branch containing $-c$ and converges to the corresponding separated equilibrium.  In the multivariate Gaussian case, on the same invariant subspace, \eqref{eq:multi-drift} gives, with $q=\|\delta\|_\Sigma$,
$
\dot q
=\frac12(c_0\sigma_\varepsilon-q)
\frac{\delta^\top\Sigma^2\delta}{\delta^\top\Sigma\delta}.
$
Since the ratio is bounded between the extreme eigenvalues of $\Sigma$, $q\to c_0\sigma_\varepsilon$ whenever $c\ne0$.  In both cases the half-gap norm remains bounded below by
\[
q_{\min}:=
\begin{cases}
\min\{|c|,e^*\}, & d=1,\\
\min\{\|c\|_\Sigma,c_0\sigma_\varepsilon\}, & d\ge2,
\end{cases}
\]
where $q_{\min}$ is positive. The entire limiting trajectory is therefore contained in a compact tube separated from the coincident set.

By \Cref{lem:local-diffusion-regularity}(i), the drift is locally Lipschitz on this tube, so the ODE solution is unique and depends continuously on its initial condition.  Covering the compact trajectory by finitely many such neighborhoods gives an open set $G$ containing $(0,-c)$ whose trajectories remain separated and enter the locally exponentially attracting neighborhood from \Cref{thm:eqm}.  Hence $G$ lies in the basin of the relevant scalar equilibrium or multivariate equilibrium manifold.  Since $(\bfe_0,\delta_0)\xrightarrow{p}(0,-c)$, we have
$\Pr\{(\bfe_0,\delta_0)\in G\}\to1$, which proves the proposition.
\hfill$\square$

\subsection{Proof of \Cref{cor:CI}}
Put $X_\eta=(S_T-S_\infty^{H_0})/\sqrt\eta$.  By \Cref{prop:null-distribution},
\[
X_\eta\xrightarrow{d}\sigma_*Z,
\qquad
\sigma_*^2=
\begin{cases}
2D/\sigma_\varepsilon^2, & d=1,\\
V(e^*(c)), & d\ge2,
\end{cases}
\]
where $\sigma_*>0$ and the multivariate convergence is unconditional because the fixed direction $c$ selects the deterministic phase $e^*(c)$.  Since the limiting Gaussian distribution is continuous and strictly increasing, convergence of quantiles gives
\[
\frac{q_{1-\alpha,\eta}-S_\infty^{H_0}}{\sqrt\eta}
\longrightarrow \sigma_*z_{1-\alpha}.
\]
Consequently, with
$W_\eta=(S_T-q_{1-\alpha,\eta})/\sqrt\eta$,
\[
W_\eta\xrightarrow{d}
\sigma_*(Z-z_{1-\alpha})=:W,
\qquad \Pr(W=0)=0.
\]

It remains to show that preliminary centering and plug-in estimation are negligible on the $\sqrt\eta$ scale.  Conditional on the preliminary sample, centering replaces $(y,x)$ by $(y-\hat\mu_y,x-\hat\mu_x)$.  On the compact attracting tube used in \Cref{prop:localized-diffusion}, the bounded-density and moment bounds in \Cref{lem:local-diffusion-regularity} make the associated drift and covariance perturbations of order $O_p(r_\eta)$, where $r_\eta:=|\hat\mu_y|+\|\hat\mu_x\|$.  Exponential attraction makes the normal drift response $O_p(r_\eta)=o_p(\sqrt\eta)$, while continuity of the innovation covariance makes its effect on the $\sqrt\eta$ fluctuation $o_p(\sqrt\eta)$.  In the multivariate case the centering-induced tangent drift accumulates to $O_p(r_\eta\eta T)=o_p(1)$, while the tangent martingale remains $o_p(1)$ because $\eta^2T\to0$.  Hence the coefficients of the local terminal approximation converge to those at the phase selected by the oracle trajectory.  By \Cref{prop:null-distribution}(iii), $S_T\xrightarrow{p}S_\infty^{H_0}>0$.  Since $\Sigma\succ0$ and $\sigma_\varepsilon>0$, the Mahalanobis norm and scale normalization are smooth near the null limit.  Combining these centering bounds with the plug-in rates in \Cref{cor:CI} gives
\[
	\hat S_T-S_T=o_p(\sqrt\eta).
\]
For notational convenience, define
$\Delta_T=(\hat S_T-S_T)
-(\hat q_{1-\alpha,\eta}-q_{1-\alpha,\eta})$.
Then $\Delta_T=o_p(\sqrt\eta)$ and
\[
	\Pr(\hat S_T>\hat q_{1-\alpha,\eta})
	=\Pr\!\left(
	W_\eta
	+
	\frac{\Delta_T}{\sqrt\eta}>0
	\right).
\]

By Slutsky's theorem and $\Pr(W=0)=0$, the right-hand side converges to $\Pr(W>0)$.  It remains to identify this limit with $\alpha$.  Because the null law of $S_T$ is continuous and $q_{1-\alpha,\eta}$ is its exact $(1-\alpha)$ quantile, $\Pr(W_\eta>0)=\Pr(S_T>q_{1-\alpha,\eta})=\alpha$ for every $\eta$; since $\Pr(W=0)=0$ makes zero a continuity point of the limit law, $\Pr(W>0)=\lim_{\eta\to0}\Pr(W_\eta>0)=\alpha$, proving asymptotic size.

We next verify the scalar analytic calibration.  By \Cref{prop:null-distribution}(i),
$
S_T=\mu_0+\sigma_TZ+o_p(\sqrt\eta),
$
where $
\mu_0=2\kappa_\varepsilon\kappa_x$ and $
\sigma_T=\sqrt{2\eta D/\sigma_\varepsilon^2}.
$
It follows that
$q_{1-\alpha,\eta}
=\mu_0+z_{1-\alpha}\sigma_T+o(\sqrt\eta)$.
Therefore,
$\hat\mu_0-\mu_0=o_p(\sqrt\eta)$ and
$\hat\sigma_T/\sigma_T\to_p1$ imply the critical-value condition in
\Cref{cor:CI}.
\hfill$\square$

\subsection{Proof of \Cref{prop:H1-gap}}\label{app:H1-results}

Under $H_1$, there is no common $\theta^{\mathrm{true}}$ and hence no single
error midpoint $\bfe$.  We instead define the fitted parameter midpoint
$m:=(\theta_1+\theta_2)/2$, the parameter half-gap $\delta_\theta:=(\theta_1-\theta_2)/2$,
and $\alpha_z:=\theta_z^{\mathrm{true}}-m$.
For any integrable random variable $U$, write
$\E_z[U]:=\E[U\mid j=z]$ for expectation conditional on latent group~$z$.
Under $H_0$, this reduces to $\alpha_z=\theta^{\mathrm{true}}-m=\bfe$ and $\delta_\theta=-\delta$, where $\delta$ is the error-coordinate half-gap in \Cref{tab:notation}.
We first derive the ODE for $\delta_\theta$, then prove parts~(i)--(iii).

For group $z$, the two piece residuals are $r_{1,z}=\alpha_z^\top x+\varepsilon-\delta_\theta^\top x$ and $r_{2,z}=\alpha_z^\top x+\varepsilon+\delta_\theta^\top x$, so that $r_{2,z}^2-r_{1,z}^2=4(\delta_\theta^\top x)(\alpha_z^\top x+\varepsilon)$: piece~1 wins exactly when $(\delta_\theta^\top x)(\alpha_z^\top x+\varepsilon)>0$.  As in~\eqref{eq:ode-drift-expanded}, the uniform tie rule makes the two selection weights sum to one and differ by $\sgn\{(\delta_\theta^\top x)(\alpha_z^\top x+\varepsilon)\}$, with $\sgn(0)=0$.  Grouping the weights by their sum and difference, the winner-weighted residual difference entering $2\dot\delta_\theta=\dot\theta_1-\dot\theta_2$ collapses to
\[
\sgn\{(\delta_\theta^\top x)(\alpha_z^\top x+\varepsilon)\}\,(\alpha_z^\top x+\varepsilon)-\delta_\theta^\top x
=|\alpha_z^\top x+\varepsilon|\,\sgn(\delta_\theta^\top x)-\delta_\theta^\top x.
\]
Averaging over the joint distribution of $(x,j)$ and using
$\sum_z\pi_z\E_z[xx^\top]=\E[xx^\top]=\Sigma$ yields
\begin{equation}\label{eq:half-gap-H1}
	2\dot\delta_\theta = \sum_{z=1}^{K^*} \pi_z\,\E_z\bigl[|\alpha_z^\top x + \varepsilon|\,\operatorname{sgn}(\delta_\theta^\top x)\,x\bigr] - \Sigma\,\delta_\theta.
\end{equation}

We next prove the instability claim.  Let $(m_0,0)$ be a coincident
equilibrium and, for $\|v\|=1$, define
\[
\mathcal R(m,v):=\sum_z\pi_z\E_z\!\left[
|v^\top x|\,|x^\top(\theta_z^{\mathrm{true}}-m)+\varepsilon|
\right].
\]
Since $\Sigma\succ0$, $\Pr(v^\top x\ne0)>0$ for every unit $v$; conditional on $x$, the existence of the density in \Cref{ass:diffusion}(b) gives
$\Pr\{x^\top(\theta_z^{\mathrm{true}}-m_0)+\varepsilon=0\mid x,j=z\}=0$.
Hence $\mathcal R(m_0,v)>0$.  The moment assumptions and dominated convergence make $\mathcal R$ continuous in $(m,v)$, so compactness yields constants $c,\rho>0$ such that
$\mathcal R(m,v)\ge c$ whenever $\|m-m_0\|\le\rho$ and $\|v\|=1$.
For $\delta_\theta=rv$, $r>0$, projecting~\eqref{eq:half-gap-H1} onto $v$
gives
\[
2\dot r=\mathcal R(m,v)-r v^\top\Sigma v.
\]
Thus $\dot r\ge c/4$ whenever $\|m-m_0\|\le\rho$ and
$0<r\le c/(2\lambda_{\max}(\Sigma))$.  Any sufficiently small nonzero
antisymmetric perturbation must therefore leave a fixed neighborhood of
$(m_0,0)$ in finite time.  Hence no coincident equilibrium is stable.

We now prove part~(i).  In the scalar case, label the two pieces so that
$\delta_\theta>0$.  Then $\operatorname{sgn}(\delta_\theta x)x=|x|$, and the
equilibrium condition in \eqref{eq:half-gap-H1} becomes
\begin{equation}\label{eq:H1-scalar-eqm}
	\delta_\theta\,\sigma_x^2 = \sum_{z=1}^{K^*}\pi_z\,\E_z\bigl[|x|\cdot|\alpha_z x + \varepsilon|\bigr],
\end{equation}
where $\sigma_x^2=\E[x^2]$.  Define $g(\mu)=\E|\mu+W|$, where
$W=\varepsilon/\sigma_\varepsilon$.  Symmetry makes $g$ convex and even,
while continuity of $f_\varepsilon$ at zero and $f_\varepsilon(0)>0$ imply that zero
is its unique minimizer.  Hence $g(\mu)\ge g(0)=\kappa_\varepsilon$, with
equality only at $\mu=0$.  Conditioning jointly on $(x,j)$ and using that
$\varepsilon$ is independent of $(x,j)$ gives
\begin{align}
\sum_{z=1}^{K^*}\pi_z\E_z\bigl[|x|\cdot|\alpha_z x + \varepsilon|\bigr]
&=\sigma_\varepsilon\E\!\left[|x|\,
g\!\left(\frac{\alpha_jx}{\sigma_\varepsilon}\right)\right] \notag\\
&\ge \sigma_\varepsilon\kappa_\varepsilon\E|x|.
\label{eq:group-bound}
\end{align}
Consequently,
\[
\frac{S_\infty^{H_1}}2
=\frac{\delta_\theta\sigma_x}{\sigma_\varepsilon}
\ge\kappa_\varepsilon\kappa_x.
\]
The predictor-separation condition in~\Cref{ass:diffusion}(d), evaluated at
$b=m$, gives $\Pr(\alpha_jx\ne0)>0$.  Since $g$ is uniquely minimized at
zero, the inequality in~\eqref{eq:group-bound} is strict, proving
$S_\infty^{H_1}>2\kappa_\varepsilon\kappa_x$.

We next prove part~(ii).  Let $x\sim N(0,\Sigma)$ and define
$\tilde x=\Sigma^{-1/2}x$, $\tilde\delta_\theta=\Sigma^{1/2}\delta_\theta$,
$\hat v=\tilde\delta_\theta/\|\tilde\delta_\theta\|$,
$u=\hat v^\top\tilde x$, and
$\tilde\alpha_z=\Sigma^{1/2}\alpha_z$.  Then
$\tilde x\sim N(0,I)$ and $u\sim N(0,1)$.  Left-multiplying the
equilibrium equation by $\Sigma^{-1/2}$ and projecting onto $\hat v$
gives
\begin{equation}\label{eq:H1-multi-eqm}
	\frac{S_\infty^{H_1}}{2}
	= \frac{\|\delta_\theta\|_\Sigma}{\sigma_\varepsilon}
	= \frac{1}{\sigma_\varepsilon}
	\E\bigl[|u|\cdot|\tilde \alpha_j^\top\tilde x + \varepsilon|\bigr].
\end{equation}
Conditioning jointly on $(\tilde x,j)$ and applying $g(\mu)\ge g(0)$ gives
\[
\E\bigl[|u|\cdot|\tilde \alpha_j^\top\tilde x + \varepsilon|\bigr]
=\sigma_\varepsilon\E\!\left[|u|\,
g\!\left(\frac{\tilde\alpha_j^\top\tilde x}{\sigma_\varepsilon}\right)\right]
\ge\sigma_\varepsilon\kappa_\varepsilon\E|u|
=\sigma_\varepsilon\kappa_\varepsilon\sqrt{\frac2\pi}.
\]
The predictor-separation condition in~\Cref{ass:diffusion}(d) gives
$\Pr(\tilde\alpha_j^\top\tilde x\ne0)>0$.  The marginal Gaussian law of
$\tilde x$ also gives $\Pr(u=0)=0$.  The inequality is therefore strict,
and~\eqref{eq:H1-multi-eqm} yields
$S_\infty^{H_1}>2\kappa_\varepsilon\sqrt{2/\pi}$.
Finally, we prove part~(iii).  Let
$a^*=(\theta_1^*,\theta_2^*)$ be the separated attracting equilibrium
selected by the initialization and define
\[
G(a^*):=\frac{\|\theta_1^*-\theta_2^*\|_\Sigma}
{\sigma_\varepsilon}.
\]
Parts~(i)--(ii) give $G(a^*)>S_\infty^{H_0}$.  The terminal-concentration
condition in the proposition and nuisance consistency imply
$\hat S_T=G(a^*)+o_p(1)$, whereas the assumed critical-value convergence
gives $\hat q_{1-\alpha,\eta}=S_\infty^{H_0}+o_p(1)$.  Consequently,
\[
\hat S_T-\hat q_{1-\alpha,\eta}
=G(a^*)-S_\infty^{H_0}+o_p(1)>0
\]
with probability tending to one.  Therefore,
$\Pr(\textup{reject}\mid H_1)\to1$.
\hfill$\square$

\subsection{Proof of \Cref{prop:intercept}}
Put $a=\beta^{\mathrm{true}}-\hat\beta_{\mathrm{OLS}}$ and condition on the independent preliminary sample, so that $a$ is fixed.  The partial residual has the exact representation
\begin{equation}\label{eq:intercept-effective-noise}
r_t=\alpha_{j_t}^{\mathrm{true}}+\varepsilon_{a,t},
\qquad
\varepsilon_{a,t}:=\varepsilon_t+a^\top x_t.
\end{equation}
Thus the feasible algorithm is exactly a constant-design intercept recursion with effective noise $\varepsilon_a$.

We first verify the effective-noise conditions.  Central symmetry of $x$, symmetry of $\varepsilon$, and their independence imply that $\varepsilon_a$ is symmetric conditional on the preliminary sample.  Its conditional density is
\[
f_a(u)=\E\{f_\varepsilon(u-a^\top x)\mid a\}.
\]
It is bounded by $\|f_\varepsilon\|_\infty$, inherits the modulus of continuity of $f_\varepsilon$, and satisfies $f_a(0)\to_p f_\varepsilon(0)$.  With $M=\|f_\varepsilon\|_\infty$, symmetry also gives
\[
\big|\E(|\varepsilon+v|)-\E|\varepsilon|\big|\le Mv^2.
\]
Consequently,
\begin{align}
\sigma_a^2-\sigma_\varepsilon^2
&=a^\top\Sigma a=o_p(\eta), \label{eq:intercept-effective-variance}\\
\big|\E(|\varepsilon_a|\mid a)-\E|\varepsilon|\big|
&\le M a^\top\Sigma a=o_p(\eta), \label{eq:intercept-effective-absolute}
\end{align}
and hence $\kappa_a-\kappa_\varepsilon=o_p(\eta)$.  Moreover,
$2f_a(0)\E(|\varepsilon_a|\mid a)<1$ with probability tending to one.

For the intercept recursion, set the predictor identically equal to one; the calculation uses only the effective-noise conditions verified above.  Under $H_0^{(\alpha)}$, absorb the common intercept into the two intercept errors and write their midpoint and half-gap as $(\bfe,\delta)$.  Conditional on the preliminary sample, their mean ODE is
\begin{align*}
\dot\bfe
&=-\tfrac12\bfe+\tfrac12|\delta|\{2F_a(\bfe)-1\},\\
\dot\delta
&=-\tfrac12\delta+\tfrac12\sgn(\delta)
\E(|\varepsilon_a+\bfe|\mid a).
\end{align*}
The separated equilibria are $(0,\pm e_a^*)$, where
$e_a^*=\E(|\varepsilon_a|\mid a)$, and the Jacobian at $(0,e_a^*)$ is
\[
\begin{pmatrix}
-\tfrac12\{1-2f_a(0)e_a^*\}&0\\
0&-\tfrac12
\end{pmatrix}.
\]
The strict stability condition above makes this matrix Hurwitz.  Boundedness and uniform continuity of $f_a$, together with the finite second moment of $\varepsilon_a$, give the local Fr\'echet expansion, covariance continuity, and uniform integrability of the update squares used by the terminal martingale-array argument.  Repeating that argument for this constant-design recursion therefore yields its long-horizon terminal CLT.

At equilibrium, the common piece-noise variance is
\begin{align*}
D_{0,a}
&=\E[(\varepsilon_a-e_a^*)^2\ind\{\varepsilon_a\ge0\}\mid a]\\
&=\frac{\sigma_a^2}{2}(1-\kappa_a^2),
\end{align*}
where conditional symmetry gives the corresponding half-moment identities.  If
$S_{T,a}^{(\alpha)}:=|\alpha_{1,T}-\alpha_{2,T}|/\sigma_a$, the terminal CLT is
\begin{equation}\label{eq:intercept-effective-clt}
\frac{S_{T,a}^{(\alpha)}-2\kappa_a}
{\sqrt{\eta(1-\kappa_a^2)}}
\xrightarrow{d}N(0,1)
\qquad\text{conditionally in probability}.
\end{equation}
The assumed plug-in rates give
$\hat S_T^{(\alpha)}-S_{T,a}^{(\alpha)}=o_p(\sqrt\eta)$ and
\[
\hat q_{1-\alpha_{\mathrm I},\eta}^{(\alpha)}
-\{2\kappa_a+z_{1-\alpha_{\mathrm I}}\sqrt{\eta(1-\kappa_a^2)}\}
=o_p(\sqrt\eta).
\]
Slutsky's theorem applied conditionally to \eqref{eq:intercept-effective-clt}, followed by bounded convergence over the preliminary sample, proves
$\Pr(\hat S_T^{(\alpha)}>\hat q_{1-\alpha_{\mathrm I},\eta}^{(\alpha)}\mid H_0^{(\alpha)})\to\alpha_{\mathrm I}$.

Finally, under $H_1^{(\alpha)}$, \eqref{eq:intercept-effective-noise} remains an exact intercept-mixture representation.  At the selected equilibrium, let $c_{\alpha,a}:=(\alpha_1+\alpha_2)/2$ and $h_{\alpha,a}:=|\alpha_1-\alpha_2|/2>0$.  The half-gap equilibrium equation, averaged over the joint distribution of $(x,j)$, gives
\[
h_{\alpha,a}
=\E\bigl(
|\alpha_j^{\mathrm{true}}-c_{\alpha,a}+\varepsilon+a^\top x|
\mid a\bigr).
\]
To compare this equilibrium with its matched null, put $g_0(v):=\E|v+\varepsilon|$ and define
\[
\Delta_0:=
\inf_{c\in\mathbb R}
\left\{\E\bigl[g_0(\alpha_j^{\mathrm{true}}-c)\bigr]-g_0(0)\right\}.
\]
The function $g_0$ is continuous, coercive, and uniquely minimized at zero.  Since the latent intercepts are not all equal and their marginal probabilities are positive, the displayed objective attains a strictly positive minimum, so $\Delta_0>0$.  The elementary inequality $||u+v|-|u||\le|v|$ yields, uniformly in $c$,
\begin{align*}
\left|\E\bigl(|\alpha_j^{\mathrm{true}}-c+\varepsilon+a^\top x|\mid a\bigr)
-\E g_0(\alpha_j^{\mathrm{true}}-c)\right|
&\le \E(|a^\top x|\mid a),\\
\left|\E(|\varepsilon_a|\mid a)-g_0(0)\right|
&\le \E(|a^\top x|\mid a).
\end{align*}
Because $a=o_p(\sqrt\eta)$, Cauchy--Schwarz gives $\E(|a^\top x|\mid a)=o_p(1)$.  Hence, with probability tending to one,
\[
h_{\alpha,a}-\E(|\varepsilon_a|\mid a)\ge\frac{\Delta_0}{2},
\qquad
\frac{2h_{\alpha,a}}{\sigma_a}
\ge 2\kappa_a+\frac{\Delta_0}{\sigma_a}.
\]
Terminal concentration and the assumed plug-in and matched-null calibration then give
$\Pr(\hat S_T^{(\alpha)}>\hat q_{1-\alpha_{\mathrm I},\eta}^{(\alpha)}\mid H_1^{(\alpha)})\to1$, which completes the proof.
\hfill$\square$

\subsection{Proof of \Cref{cor:pooled-power}}
The concentration condition and the stated plug-in convergences give
\[
\hat S_T^{\mathrm{pool}}
=
\frac{\|\theta_1^*-\theta_2^*\|_\Sigma}
{\sigma_{\mathrm{pool}}}
+o_p(1)
=S_\infty^{\mathrm{pool}}+o_p(1),
\]
whereas
$\hat q_{1-\alpha,\eta}^{\mathrm{pool}}
=q_\infty^{\mathrm{pool}}+o_p(1)$.
The strict-margin condition therefore implies
$\Pr(\hat S_T^{\mathrm{pool}}>
\hat q_{1-\alpha,\eta}^{\mathrm{pool}}\mid H_1)=\Pr(\textup{reject}\mid H_1)\to1$.
\hfill$\square$

\section{Proofs for Results in \Cref{sec:extensions}}

\subsection{Proof of \Cref{prop:partial}}\label{app:partial}

Write $B=B_\ell$, $z=x_{-\ell}-Bx_\ell$,
$\vartheta_j=\theta_j+\psi^\top B$, and
$\bar\vartheta_w=\E(x_\ell^2\vartheta_j)/\E(x_\ell^2)$.  Then the partial-mixture model has the
exact reparametrization
\[
y=\psi^\top z+\vartheta_jx_\ell+\varepsilon.
\]
Let $\hat B$ and $\hat\psi$ be fixed at their preliminary-sample values,
put $d=B-\hat B$ and $a=\psi-\hat\psi$, and note that
$\hat z=x_{-\ell}-\hat Bx_\ell=z+dx_\ell$.  Direct substitution gives
\begin{equation}\label{eq:partial-exact-reduction}
\tilde y:=y-\hat\psi^\top\hat z
=\{\vartheta_j-\hat\psi^\top d\}x_\ell+\varepsilon_a,
\qquad
\varepsilon_a:=\varepsilon+a^\top z.
\end{equation}
Thus estimating $B$ changes every focal slope by the same amount and leaves
all slope contrasts unchanged.

Conditional on the independent preliminary sample, $a$ and $d$ are fixed.
By the assumptions of the proposition and \Cref{ass:diffusion},
$\varepsilon_a$ is independent of $(x_\ell,j)$ and is symmetric.  Its density
is
\[
f_a(u)=\E\{f_\varepsilon(u-a^\top z)\}.
\]
Consequently $\sup_u f_a(u)\leq\sup_u f_\varepsilon(u)$, and the modulus of
continuity of $f_a$ is bounded by that of $f_\varepsilon$.  Since $a=o_p(1)$,
dominated convergence gives $f_a(0)\to_p f_\varepsilon(0)>0$.  The stated
moment assumptions also give the scalar moment bounds uniformly for $a$ in
a neighborhood of zero.  Hence, with probability tending to one,
\eqref{eq:partial-exact-reduction} satisfies the scalar version of
\Cref{ass:diffusion}, with regularity constants uniform along the triangular
array $a=o_p(\sqrt\eta)$.

The effective noise moments are also equivalent at the required scale.
Symmetry and boundedness of $f_\varepsilon$ imply, for
$h(v):=\E|\varepsilon+v|$,
\[
|h(v)-h(0)|
\leq \|f_\varepsilon\|_\infty v^2.
\]
Because $z$ is centered and independent of $\varepsilon$,
\[
\E(\varepsilon_a^2)-\E(\varepsilon^2)
=a^\top\E(zz^\top)a,
\qquad
\bigl|\E|\varepsilon_a|-\E|\varepsilon|\bigr|
=O(\|a\|^2)=o_p(\eta).
\]
Thus the feasible residuals calibrate the effective scalar null center and
scale at the required order.

The preliminary regression on $(\hat z,x_\ell)$ targets $\psi$ and the
weighted focal slope $\bar\vartheta_w-\psi^\top d$.  The corresponding
$x_\ell^2$-weighted average slope in \eqref{eq:partial-exact-reduction} is
$\bar\vartheta_w-\hat\psi^\top d$; their difference is $a^\top d=o_p(\sqrt\eta)$.
Together with the assumed preliminary rate, this verifies the scalar
initialization condition.  We may therefore apply
\Cref{prop:null-distribution,cor:CI} conditionally to the feasible scalar
model in \eqref{eq:partial-exact-reduction}.  The uniform regularity just
verified makes the conditional convergence valid along the preliminary-
sample triangular array, and bounded convergence yields part~(i)
unconditionally.  Under $H_1$, independence and centering of $z$ give the orthogonal decomposition
\[
\inf_{q,b}\E\!\left[
\{z^\top(\psi-q)+x_\ell(\vartheta_j-b)\}^2
\right]
=\inf_b\E\{x_\ell^2(\vartheta_j-b)^2\}>0,
\]
where strict positivity follows from \Cref{ass:diffusion}(d).  The common shift $-\hat\psi^\top d$ leaves this scalar predictor separation unchanged.  Thus \Cref{prop:H1-gap}(i) gives the strict scalar
gap, while the stated terminal-concentration and calibration conditions give
part~(ii) by the same continuous-mapping argument used in the proof of
\Cref{prop:H1-gap}(iii).
\hfill$\square$

\subsection{Proof of \Cref{prop:endogeneity}}

To prove the result, it suffices to show that the best-linear-projection transformation reduces the endogenous model to one satisfying \Cref{ass:diffusion}, and then transfer the null and alternative conclusions to the transformed model.  For notational convenience, let $b=\Sigma^{-1}\gamma$.  Under either hypothesis,
\[
y_t=x_t^\top(\theta_{j_t}^{\mathrm{true}}+b)+\tilde\varepsilon_t.
\]
Thus endogeneity adds the same shift $b$ to every component coefficient.

We first verify that $(x_t,\tilde\varepsilon_t)$ satisfies the noise conditions in \Cref{ass:diffusion} with $\sigma_\varepsilon$ replaced by $\tilde\sigma_\varepsilon$.

(a)~\textit{Independence:}\;
  $\mathrm{Cov}(x_t,\tilde\varepsilon_t)
  = \mathrm{Cov}(x_t,\varepsilon_t) - \mathrm{Cov}(x_t,x_t)\Sigma^{-1}\gamma
  = \gamma - \gamma
  = 0.$
  Since $(x_t, \varepsilon_t)$ is jointly Gaussian and $\tilde\varepsilon_t$ is an affine
  function of $(x_t, \varepsilon_t)$, uncorrelatedness implies that
  $\tilde\varepsilon_t$ is independent of $x_t$.
  Under $H_1$, the additional assumption that $j_t$ is independent of
  $(x_t,\varepsilon_t)$ also makes $\tilde\varepsilon_t$ independent of $j_t$.

(b)~\textit{Distribution and symmetry:}\;
  $\Var(\tilde\varepsilon_t)
    = \sigma_\varepsilon^2
      -2\gamma^\top\Sigma^{-1}\gamma
      +\gamma^\top\Sigma^{-1}\Sigma\Sigma^{-1}\gamma
    = \sigma_\varepsilon^2 - \gamma^\top\Sigma^{-1}\gamma
    = \tilde\sigma_\varepsilon^2 > 0$,
  and $\tilde\varepsilon_t \sim N(0, \tilde\sigma_\varepsilon^2)$ is symmetric with
  $f_{\tilde\varepsilon}(0) > 0$.

(c)~\textit{Moment conditions:}\;
  All moments of jointly Gaussian random variables are finite,
  so \Cref{ass:diffusion}(c) holds.

We next prove part~(i).  Under $H_0$, define
$\theta_{\mathrm{BLP}}=\theta^{\mathrm{true}}+b$.
Since the transformed system satisfies \Cref{ass:diffusion}, by applying \Cref{thm:eqm} and \Cref{prop:null-distribution} with
$(\varepsilon,\sigma_\varepsilon,\theta^{\mathrm{true}})
\mapsto(\tilde\varepsilon,\tilde\sigma_\varepsilon,\theta_{\mathrm{BLP}})$, the null dynamics carry over unchanged.
The standard initialization (\Cref{prop:init}) also carries over, since
$\hat\theta_{\mathrm{OLS}}\xrightarrow{p}\theta_{\mathrm{BLP}}$.
Note that under joint Gaussianity, $\tilde\kappa_\varepsilon = \sqrt{2/\pi}$ and $\tilde\kappa_x = \sqrt{2/\pi}$, so $2\tilde\kappa_\varepsilon\tilde\kappa_x = 4/\pi$.
Moreover, either the OLS-residual variance or the component-scale estimator consistently estimates $\tilde\sigma_\varepsilon^2$ under $H_0$, and both have the usual root-$n_0$ rate under the stated Gaussian moment conditions; since $n_0\asymp T$ and $T\eta\to\infty$, their errors are $o_p(\sqrt\eta)$, as required by \Cref{cor:CI}.  By applying Slutsky's theorem, we obtain part~(i).

We then verify the noise scale under $H_1$.  For notational convenience, let
$\bar\theta=\E[\theta_j^{\mathrm{true}}]$ and
$\Omega_\theta=\Var(\theta_j^{\mathrm{true}})$.
Independence of $j_t$ from $(x_t,\varepsilon_t)$ implies that the population pooled BLP coefficient is $\bar\theta+b$.  Its residual is
\[
R=y-x^\top(\bar\theta+b)
  =x^\top(\theta_j^{\mathrm{true}}-\bar\theta)+\tilde\varepsilon.
\]
Conditioning on $x$ and using the independence and centering of
$\theta_j^{\mathrm{true}}-\bar\theta$ gives
\[
\E[R^2\mid x]
=\tilde\sigma_\varepsilon^2+x^\top\Omega_\theta x.
\]
Hence, under the identification and moment conditions in
\Cref{sec:scale-estimation}, the intercept in the cross-fitted variance function
regression converges to $\tilde\sigma_\varepsilon^2$ under $H_1$ as well.

We finally prove part~(ii).  Note that the transformed component coefficients are
$\theta_j^{\mathrm{true}}+b$, whose pairwise contrasts are identical to those
of the original coefficients.  Since $x$ and $\tilde\varepsilon$ are Gaussian
and $j_t$ is independent of $x_t$, by applying \Cref{prop:H1-gap}(i)--(ii), we obtain, at any separated
attracting equilibrium selected by the initialization,
\[
\frac{\|\theta_1-\theta_2\|_\Sigma}{\tilde\sigma_\varepsilon}
>\frac4\pi.
\]
By applying \Cref{prop:H1-gap}(iii) with the consistent projected-noise scale and
the stated concentration condition, we obtain
$\Pr(\textup{reject}\mid H_1)\to1$, which completes the proof.
\hfill$\square$

\subsection{Proof of \Cref{prop:heterosk}}

We first work under $H_0$, where $y_t=x_t\theta^{\mathrm{true}}+\varepsilon_t$.  Write $\sigma_x^2=\E[x^2]$ and $\sigma_\varepsilon^2=\E[\varepsilon^2]$ (the marginal noise variance).

We first prove Part~(i).  On the antisymmetric regime $e_1 = -e_2 = \delta$ with $\delta > 0$, the tie-aware winner weight is $\ind\{\varepsilon x<0\}+\tfrac12\ind\{\varepsilon x=0\}$.  It is expectation-equivalent in the drift to $\ind\{\varepsilon x\le0\}$: the conditional density rules out $\varepsilon=0$, while observations with $x=0$ contribute zero to the drift.  Hence
\[
\mu_1(\delta,-\delta) = -\E[\varepsilon x\,\ind\{\varepsilon x \le 0\}] - \delta\,\E[x^2\,\ind\{\varepsilon x \le 0\}].
\]
We evaluate the two terms separately.  For the first term, conditioning on $x$: for $x > 0$, $\ind\{\varepsilon x \le 0\} = \ind\{\varepsilon \le 0\}$, and conditional symmetry gives $\E[\varepsilon\,\ind\{\varepsilon \le 0\}\mid x] = -\tfrac{1}{2}\E[|\varepsilon|\mid x]$, so $\E[\varepsilon x\,\ind\{\varepsilon x\le 0\}\mid x] = -\tfrac{1}{2}|x|\,\E[|\varepsilon|\mid x]$.  The same identity holds for $x < 0$ (both the indicator and the sign of $x$ flip).  By the tower property,
\[
-\E[\varepsilon x\,\ind\{\varepsilon x \le 0\}] = \tfrac{1}{2}\E\bigl[|x|\,\E[|\varepsilon|\mid x]\bigr] = \tfrac{1}{2}\E[|\varepsilon|\,|x|].
\]
For the second term, since $\Pr(\varepsilon \le 0 \mid x) = \tfrac{1}{2}$ by conditional symmetry, $\E[x^2\,\ind\{\varepsilon x \le 0\}\mid x] = \tfrac{1}{2}x^2$, so $\E[x^2\,\ind\{\varepsilon x \le 0\}] = \tfrac{1}{2}\sigma_x^2$.  Combining: $\mu_1(\delta,-\delta) = \tfrac{1}{2}\E[|\varepsilon|\,|x|] - \tfrac{1}{2}\sigma_x^2\,\delta$.  Setting the drift to zero yields the equilibrium
\[
\delta^* = e^* = \frac{\E[|\varepsilon|\,|x|]}{\sigma_x^2}.
\]
The derivative of the antisymmetric half-gap drift at $\delta=e^*$ is $-\sigma_x^2/2$.  To determine full-state stability, consider the midpoint $\bfe$.  Conditional symmetry gives the midpoint drift
\[
\dot\bfe
=-\frac12\sigma_x^2\bfe
+\frac12\E\!\left[|\delta x|\,\sgn(\varepsilon+\bfe x)x\right].
\]
We verify the required derivative from the primitive conditions.  Let $M$ uniformly bound the conditional densities and, using conditional symmetry, define
\[
h_x(t):=\E[\sgn(\varepsilon+t)\mid x]
=2F_{\varepsilon\mid x}(t\mid x)-1.
\]
For $P_x$-almost every $x$,
$h_x(t)/t\to2f_{\varepsilon\mid x}(0\mid x)$ and
$|h_x(t)|\le2M|t|$.  The resulting difference quotient for the midpoint drift is dominated, uniformly for $\delta$ near $e^*>0$, by a constant multiple of $|x|^3$, which is integrable under \Cref{ass:diffusion}(c).  Dominated convergence therefore gives the symmetric-mode eigenvalue
\[
\lambda_{\bfe}
=-\frac12\sigma_x^2
+e^*\E\!\left[f_{\varepsilon\mid x}(0\mid x)|x|^3\right]
=-\frac12\sigma_x^2(1-\rho_{\mathrm{het}}),
\]
where $\rho_{\mathrm{het}}:=2\E[|\varepsilon|\,|x|]\E[f_{\varepsilon\mid x}(0\mid x)|x|^3]/\{\E[x^2]\}^2$.  To verify the remaining derivative, put $a_x(t):=\E[|\varepsilon+t|\mid x]$.  Conditional symmetry and the preceding density bound give $a_x'(0)=0$ and
$|a_x(t)-a_x(0)|\le M t^2$.  Hence the perturbation of the half-gap drift due to $\bfe$ is $O(\bfe^2\E|x|^3)$, uniformly for $\delta$ near $e^*$; the dependence of the midpoint derivative on $\delta$ contributes only $O(|\bfe|\,|\delta-e^*|)$.  These bounds establish a joint Fr\'echet expansion at $(0,e^*)$ and show that both cross-derivatives vanish.  The full Jacobian in $(\bfe,\delta)$ coordinates is therefore diagonal with eigenvalues $\lambda_{\bfe}$ and $\lambda_\delta=-\sigma_x^2/2$.  The differentiable-at-the-point Lyapunov argument used in the proof of \Cref{thm:eqm}(i) then shows that $\rho_{\mathrm{het}}<1$ is sufficient for local exponential attraction.  Under the stated concentration condition, the continuous mapping theorem gives $S_T\xrightarrow{p}S_\infty^{H_0}$, where
\[
S_\infty^{H_0}
=\frac{2e^*\sigma_x}{\sigma_\varepsilon}
=\frac{2\,\E[|\varepsilon|\,|x|]}{\sigma_\varepsilon\,\sigma_x}.
\]
When the conditional law does not depend on $x$, factorization gives $S_\infty^{H_0}=2\kappa_\varepsilon\kappa_x$.  The definition of $\rho_{\mathrm{het}}$ likewise reduces to $\alpha_\varepsilon\alpha_x$.

Decomposing $\E[|\varepsilon|\,|x|]$ into the product of its marginal means and their covariance gives
\[
S_\infty^{H_0}-2\kappa_\varepsilon\kappa_x
=\frac{2\,\mathrm{Cov}(|\varepsilon|,|x|)}{\sigma_\varepsilon\,\sigma_x}.
\]
Consistent fitting of the homoskedastic reference means $\hat S_T\xrightarrow{p}S_\infty^{H_0}$ and $\hat q_{1-\alpha,\eta}^{\mathrm{hom}}\xrightarrow{p}2\kappa_\varepsilon\kappa_x$.  A fixed positive covariance therefore makes their difference converge to a positive constant, so $\Pr(\textup{reject}\mid H_0)\to1$; a fixed negative covariance gives $\Pr(\textup{reject}\mid H_0)\to0$.  When the covariance is zero, the leading centers agree and size depends on the first-order fluctuation laws.

Finally, we prove part~(ii).  Since $j$ is independent of $(x,\varepsilon)$, consider the fitted midpoint $m$, parameter half-gap $\delta_\theta$ (labeled so that $\delta_\theta>0$), and group offsets $\alpha_z$ of the selected separated attracting equilibrium (\Cref{tab:notation}).  Its population half-gap equation is
\[
\delta_\theta\,\sigma_x^2 = \sum_{z=1}^{K^*}\pi_z\,\E_z\bigl[|x|\cdot|\alpha_z x + \varepsilon|\bigr].
\]
For each group $z$, condition on $x$ and define $g_x(\mu) = \E[|\mu + \varepsilon|\mid x]$.  By \Cref{ass:heterosk}, $g_x(\mu)>g_x(0)=\E[|\varepsilon|\mid x]$ whenever $\mu\ne0$ for almost every~$x$.  Therefore $g_x(\alpha_z x) \ge g_x(0)$ for almost every~$x$, with strict inequality when $\alpha_z x \neq 0$.  Multiplying by $|x|$ and taking the outer expectation:
\[
\E_z\bigl[|x|\cdot|\alpha_z x + \varepsilon|\bigr] = \E\bigl[|x|\,g_x(\alpha_z x)\bigr] \ge \E\bigl[|x|\,\E[|\varepsilon|\mid x]\bigr] = \E[|\varepsilon|\,|x|],
\]
with equality if and only if $\alpha_zx=0$ almost surely.  Defining $S_\infty^{H_1}:=2\delta_\theta\sigma_x/\sigma_\varepsilon$, summing over groups gives $S_\infty^{H_1}\ge S_\infty^{H_0}$.  Under $H_1$, at least one $\alpha_{z_0}$ is nonzero; since $\sigma_x^2>0$ and every group has the same covariate law, $\Pr_{z_0}(x\ne0)>0$.  The inequality is therefore strict for group $z_0$, proving $S_\infty^{H_1}>S_\infty^{H_0}$.  Concentration gives $\hat S_T\xrightarrow{p}S_\infty^{H_1}$, while consistent heteroskedastic-null calibration gives $\hat q_{1-\alpha,\eta}^{\mathrm{het}}\xrightarrow{p}S_\infty^{H_0}$.  Their difference converges to this fixed positive gap, so $\Pr(\textup{reject}\mid H_1)\to1$.
\hfill$\square$

\section{Technical Foundations}

This section collects the supporting technical results invoked by the proofs above: the localized diffusion approximation underlying the analysis outlined in \Cref{app:diffusion-approximation}, followed by the auxiliary lemmas.

\subsection{Localized Diffusion Approximation}\label{app:localized-diffusion-results}

\begin{lemma}\label{lem:local-diffusion-regularity}
Under $H_0$ and Parts~(a)--(c) of \Cref{ass:diffusion}, write the recursion in midpoint--half-gap coordinates $z=(\bfe,\delta)$ and let $\mathcal C:=\{(\bfe,\delta):\delta=0\}$ be the coincident set.  Consider either the scalar setting of \Cref{thm:eqm}(i) or the Gaussian multivariate setting of \Cref{thm:eqm}(ii).

\textup{(i)} On every compact $K\subset\mathbb R^{2d}\setminus\mathcal C$, the averaged drift $\mu_z$ is locally Lipschitz.  The innovation covariance $Q_z(z)$ is continuous on $K$, and the squared innovation norms are uniformly integrable over $z\in K$.

\textup{(ii)} At the scalar separated equilibria, and uniformly along the compact multivariate equilibrium manifold $\{(0,e):e\in\mathcal E_2\}$, the drift has the Fr\'echet expansion
\[
\mu_z(z+h)=\mu_z(z)+J_z(z)h+r_z(h),
\qquad
\sup_{z}\frac{\|r_z(h)\|}{\|h\|}\longrightarrow0
\quad\text{as }\|h\|\to0.
\]
The matrices $J_z(z)$ are the block Jacobians computed in \Cref{lem:equiv,lem:ebar-multi,lem:jacobian-block} and the proof of \Cref{thm:eqm}.

\textup{(iii)} For every fixed $c\ne0$, the ODE trajectory starting from $(\bfe_0,\delta_0)=(0,-c)$ is unique, remains in $\mathbb R^{2d}\setminus\mathcal C$, and stays a positive distance from $\mathcal C$.  It converges to the corresponding scalar separated equilibrium or to the multivariate equilibrium manifold.
\end{lemma}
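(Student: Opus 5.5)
The plan is to work from the explicit form of the drift in midpoint--half-gap coordinates, \eqref{eq:ode-drift-expanded}. There the winner weight is $\ind\{\delta^\top x(\varepsilon+\bfe^\top x)<0\}$ plus the tie correction. Integrating out $\varepsilon$ conditionally on $x$ turns the discontinuous indicator into smooth functions of the state. The drift involves $F_\varepsilon$ evaluated at $-\bfe^\top x$, and truncated moments of the form $\E[\varepsilon\ind\{\varepsilon<t\}]$ and $\E[|\varepsilon+t|]$, all multiplied by polynomials in $x$ and by $\sgn(\delta^\top x)$. Explicitly, as in \Cref{lem:ebar-multi}, I would write
\[
\dot\bfe=-\tfrac12\Sigma\bfe+\tfrac12\E\bigl[|\delta^\top x|\,h(\bfe^\top x)\,x\bigr],
\qquad
\dot\delta=-\tfrac12\Sigma\delta+\tfrac12\E\bigl[\sgn(\delta^\top x)\,a(\bfe^\top x)\,x\bigr],
\]
with $h=2F_\varepsilon-1$ and $a(t)=\E|\varepsilon+t|$. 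These representations hold because the tie set has probability zero off $\mathcal C$: $\Pr(\varepsilon=-\bfe^\top x)=0$ by the density, and the event $\delta^\top x=0$ kills the integrand in the scalar case and is null under Gaussian $x$.

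For (i), $h$ is Lipschitz with constant $2\|f_\varepsilon\|_\infty$ and $a$ is $1$-Lipschitz. Hence the $\bfe$-dependence is Lipschitz with integrable envelopes $|x|^3$ and $|x|^2$, using $\E\|x\|^4<\infty$. The $\delta$-dependence is more delicate because of $\sgn(\delta^\top x)$. In the scalar case it is constant on each branch $\delta>0$ and $\delta<0$, so it is locally constant away from $\mathcal C$. In the Gaussian case, $\delta\mapsto\E[\sgn(\delta^\top x)\phi(x)]$ is Lipschitz on $\{\|\delta\|\ge r\}$. The reason is that $\Pr(\sgn(\delta^\top x)\ne\sgn(\delta'^\top x))$ is bounded by the Gaussian mass of a wedge of angle $O(\|\delta-\delta'\|/r)$. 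Combining this with Cauchy--Schwarz and the moment bounds gives the local Lipschitz property.

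Continuity of $Q_z$ follows from the same conditioning argument applied to second moments. The winner indicator changes only on a set whose probability tends to $0$ as the state moves, and the integrands are dominated by $(\varepsilon^2+\|z\|^2\|x\|^2)\|x\|^2$, which is integrable by part~(c). Uniform integrability of the squared innovations over compact $K$ follows from that same single dominating envelope.

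For (ii), differentiability in $\bfe$ uses $h(t)=2f_\varepsilon(0)t+o(t)$ and dominated convergence, as in \Cref{lem:ebar-multi}. The bound $|a(t)-a(0)|\le\|f_\varepsilon\|_\infty t^2$, which follows from symmetry, shows that the $\bfe$-perturbation of $\dot\delta$ is quadratic; this is what makes the off-diagonal blocks vanish. In $\delta$, the scalar branch is affine. In the Gaussian case the $\delta$-map is smooth away from $0$ because $\E[\sgn(\delta^\top x)x]=\sqrt{2/\pi}\,\Sigma\delta/\|\delta\|_\Sigma$ by \Cref{lem:sign-proj}, and the mixed terms are smooth too. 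Uniformity along $\mathcal E_2$ would come from compactness together with a single modulus of continuity for $f_\varepsilon$ at $0$.

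I expect this uniform Fréchet remainder to be the main obstacle. Only continuity of $f_\varepsilon$ at zero is assumed, so the remainder has to be controlled through the modulus $\omega(s)=\sup_{|t|\le s}|f_\varepsilon(t)-f_\varepsilon(0)|$. The bound I would aim for is
\[
\|r(h)\|\le C\|h\|\,\E\bigl[\|x\|^3\,\omega(\|h\|\|x\|)\bigr],
\]
and then dominated convergence gives $o(\|h\|)$ uniformly on the manifold.

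For (iii), I would reuse the invariant-subspace argument from the proof of \Cref{prop:init}. The set $\bfe=0$ is invariant, since its drift vanishes there by symmetry. On it, the radial equation shows that $q$ moves monotonically toward $e^*$ or $c_0\sigma_\varepsilon$, so $q\ge q_{\min}>0$ for all time. Uniqueness then follows from the local Lipschitz property in (i) on the separated tube, and convergence follows from the scalar ODE \eqref{eq:delta-ode} or from the radial identity together with the attraction argument of \Cref{thm:eqm}.
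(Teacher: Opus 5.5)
Your outline follows the paper's proof almost step for step. Both use the same $\Psi$/$\Phi$ representation of the midpoint and half-gap drifts. Both get Lipschitz control in $\bfe$ from $h=2F_\varepsilon-1$ and $a(t)=\E|\varepsilon+t|$, and both use a wedge argument for the $\delta$-dependence under Gaussian $x$. For $Q_z$ and uniform integrability, both apply dominated convergence with the envelope $\varepsilon^2\|x\|^2+\|x\|^4$. The quadratic bound on $a(t)-a(0)$ kills the off-diagonal block, and part (iii) rests on invariance of $\{\bfe=0\}$ and monotone radial motion. Your explicit modulus bound for the Fr\'echet remainder is a correct quantitative version of the paper's dominated-convergence step.

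One step, however, does not give what you claim. In the Gaussian case you bound $\E[|\phi(x)|\,\ind_W]$ using Cauchy--Schwarz and the moment bounds, where $W=\{\sgn(\delta^\top x)\ne\sgn(\delta'^\top x)\}$ and $\phi(x)=a(\bfe^\top x)x$.
\begin{itemize}
\item That yields only $(\E|\phi(x)|^2)^{1/2}\Pr(W)^{1/2}=O(\|\delta-\delta'\|^{1/2})$.
\item You cannot pull out a supremum instead, because $|\phi(x)|$ grows like $\|x\|^2$ when $\bfe\ne0$. H\"older bounds with higher moments still stop short of exponent one.
\item A H\"older-$\tfrac12$ modulus is not local Lipschitz continuity. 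It would not support the ODE uniqueness you invoke in (iii).
\end{itemize}
The scalar case is unaffected, since there the $\delta$-dependence is locally constant off $\mathcal C$.

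The missing ingredient, which is exactly what the paper uses, is radial--angular independence. Write $x=\Sigma^{1/2}Z$. The event $W$ depends only on $Z/\|Z\|$, which is independent of $\|Z\|$, so $\E[(1+\|x\|^2)\ind_W]\le(1+d\,\|\Sigma\|_{\mathrm{op}})\Pr(W)$. Moreover, $\Pr(W)$ is proportional to the angle between $\Sigma^{1/2}\delta$ and $\Sigma^{1/2}\delta'$. That angle is at most $C_K\|\delta-\delta'\|$ on compact sets away from $\mathcal C$. Combined with $|\phi(x)|\le C(1+\|x\|^2)$, this gives a bound linear in $\|\delta-\delta'\|$, and the rest of your argument then goes through unchanged.
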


\textit{Proof of \Cref{lem:local-diffusion-regularity}.}
The midpoint and half-gap drifts can be written, for both $d=1$ and $d\ge2$, as
\begin{align}
\dot{\bfe}
&=-\tfrac12\Sigma\bfe+\tfrac12\Psi(\bfe,\delta),
&\Psi(\bfe,\delta)
&:=\E[|\delta^\top x|\,\sgn(\varepsilon+\bfe^\top x)x],
\label{eq:local-midpoint-drift}\\
\dot\delta
&=-\tfrac12\Sigma\delta+\tfrac12\Phi(\bfe,\delta),
&\Phi(\bfe,\delta)
&:=\E[|\varepsilon+\bfe^\top x|\,\sgn(\delta^\top x)x].
\label{eq:local-gap-drift}
\end{align}
These identities follow from $p_1+p_2=1$ and
$p_1-p_2=-\sgn\{(\varepsilon+\bfe^\top x)(\delta^\top x)\}$, with $\sgn(0)=0$.

For the scalar case, \Cref{lem:equiv} gives the equivalent explicit formulas.  Let $M=\|f_\varepsilon\|_\infty$.  If $u,v$ have the same sign, conditioning on $x$ gives
\[
|\psi(u)-\psi(v)|
\le 2M|u-v|\E|x|^3;
\]
the same bound across zero follows by oddness and the triangle inequality.  Also,
\[
\big|\E[|\varepsilon+ux||x|]-\E[|\varepsilon+vx||x|]\big|
\le |u-v|\E[x^2].
\]
Because $|\delta|$ and $\sgn(\delta)$ are Lipschitz on compact sets separated from zero, the scalar drift is locally Lipschitz there.

For Gaussian $x$, the map $\Psi$ is locally Lipschitz directly from
$|2F_\varepsilon(s)-2F_\varepsilon(t)|\le2M|s-t|$ and the Gaussian moment bounds.  The map $\Phi$ is Lipschitz in $\bfe$ because $a(t):=\E|\varepsilon+t|$ is one-Lipschitz.  Its dependence on $\delta$ is also locally Lipschitz away from zero.  Indeed, write $x=\Sigma^{1/2}Z$ with $Z\sim N(0,I_d)$ and let $\vartheta$ be the angle between $\Sigma^{1/2}\delta_1$ and $\Sigma^{1/2}\delta_2$.  The signs can differ only in two spherical wedges of total angular measure proportional to $\vartheta$.  Radial--angular independence of $Z$, followed by integration of the polynomial radial weight, gives
\[
\E\!\left[(1+\|x\|^2)
\ind\{\sgn(\delta_1^\top x)\ne\sgn(\delta_2^\top x)\}\right]
\le C\vartheta
\le C_K\|\delta_1-\delta_2\|,
\]
where the last inequality uses $\inf_{\delta\in K}\|\Sigma^{1/2}\delta\|>0$.
This proves local Lipschitzness of the multivariate drift.

We next verify the first-order expansion.  With
$h(t):=2F_\varepsilon(t)-1$, continuity of $f_\varepsilon$ at zero gives
$h(t)=2f_\varepsilon(0)t+o(t)$, while boundedness of the density gives
$|h(t)|\le2M|t|$.  Dominated convergence, using
$\E\|x\|^4<\infty$, therefore yields, uniformly for $\delta$ in a compact set,
\[
\Psi(\bfe,\delta)
=2f_\varepsilon(0)\E[|\delta^\top x|xx^\top]\bfe
+o(\|\bfe\|).
\]
The coefficient matrix is locally Lipschitz in $\delta$ on compact sets, so replacing $\delta$ by an equilibrium value $e$ contributes only
$O(\|\delta-e\|\,\|\bfe\|)=o(\|(\bfe,\delta-e)\|)$.
Moreover, symmetry gives $a'(0)=0$, and
$|a(t)-a(0)|\le M t^2$ by integrating $h$.  Hence the derivative of $\Phi$ with respect to $\bfe$ at $(0,\delta)$ is zero, with a remainder bounded by
$M\|\bfe\|^2\E\|x\|^3$.  In the scalar case the remaining $\delta$ dependence is linear on each separated branch.  In the Gaussian multivariate case,
\[
\Phi(0,\delta)
=\E|\varepsilon|\sqrt{\frac{2}{\pi}}
\frac{\Sigma\delta}{\|\delta\|_\Sigma},
\]
which is continuously differentiable for $\delta\ne0$.  Compactness of $\mathcal E_2$ makes these remainders uniform along the equilibrium manifold.  This proves the uniform Fr\'echet expansion in part~(ii).

Let $H_z(z;x,\varepsilon,w)$ denote one unscaled midpoint--half-gap update and let $Q_z(z)=\operatorname{Cov}\{H_z(z;x,\varepsilon,w)\}$.  If $z_n\to z\notin\mathcal C$, the winner indicators converge almost surely except on the switching boundary.  For Gaussian $x$ this boundary has probability zero; in the scalar case, the possible event $x=0$ contributes a zero update, while all other ties have probability zero because the noise has a density.  On a compact $K$,
\[
\|H_z(z;x,\varepsilon,w)\|^2
\le C_K\{\varepsilon^2\|x\|^2+\|x\|^4\},
\]
whose expectation is finite by \Cref{ass:diffusion}(c).  Dominated convergence proves continuity of $Q_z$, and the same envelope gives uniform integrability.

Finally, $\bfe=0$ is invariant.  In the scalar case, the explicit solution of \eqref{eq:delta-ode} starting from $-c\ne0$ stays on its initial sign branch and its absolute value remains between $|c|$ and $e^*$ while converging to $e^*$.  In the Gaussian multivariate case, \eqref{eq:multi-drift} gives
\[
\dot q=\tfrac12(c_0\sigma_\varepsilon-q)
\frac{\delta^\top\Sigma^2\delta}{\delta^\top\Sigma\delta},
\qquad q=\|\delta\|_\Sigma.
\]
Thus $q$ remains between $\|c\|_\Sigma$ and $c_0\sigma_\varepsilon$ and converges monotonically to the latter.  The local Lipschitz result gives uniqueness along both trajectories.  This proves part~(iii). \hfill$\square$

\begin{proposition}\label{prop:localized-diffusion}
Under the conditions of \Cref{lem:local-diffusion-regularity}, let the initial state be measurable with respect to an initial sigma-field independent of the detection innovations, and consider compact neighborhoods separated from $\mathcal C$.

\textup{(i)} For any such neighborhood $U$, on every fixed time interval the interpolation stopped at its first exit from $U$ converges weakly to the unique stopped solution of the mean ODE.

\textup{(ii)} If $z^*$ is a separated equilibrium and
$\eta^{-1/2}(z_0^\eta-z^*)\Rightarrow\zeta_0$, then the stopped fluctuation process converges on every fixed time interval to
\[
d\zeta=J_z(z^*)\zeta\,d\tau+Q_z(z^*)^{1/2}dW_\tau.
\]
For the multivariate equilibrium manifold, the same conclusion holds in local normal--tangent coordinates, uniformly in $z^*$ on the manifold.

\textup{(iii)} Suppose the initialization is measurable with respect to an initial sigma-field independent of the detection-stream innovations and converges to $(0,-c)$ for a fixed $c\ne0$.  Suppose also that the corresponding stability condition in \Cref{thm:eqm} holds: $\alpha_\varepsilon\alpha_x<1$ in the scalar case or $\alpha_\varepsilon<\pi/4$ in the Gaussian multivariate case.  Let $T=T_\eta$ satisfy
\[
\frac{\eta T}{\log(1/\eta)}\longrightarrow\infty,
\qquad
\eta^2T\longrightarrow0.
\]
For some compact tube $U$ separated from $\mathcal C$ and containing the ideal trajectory and its attractor, $\Pr\{z_t\notin U\text{ for some }t\le T\}\to0$.  The contracting coordinates attain the stationary law of their limiting OU process.  In the scalar case this applies to the full fluctuation.  In the multivariate case the unscaled tangent displacement is $o_p(1)$, and the conclusion applies to every continuously differentiable functional that is constant on the equilibrium manifold, including $\|\delta\|_\Sigma$.
\end{proposition}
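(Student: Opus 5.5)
The plan follows the standard Kushner--Yin weak-convergence route for constant-step stochastic approximation, localized to a compact tube $U$ separated from $\mathcal C$. The local hypotheses are exactly those verified in \Cref{lem:local-diffusion-regularity}.

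For part~(i), I would stop the interpolated process $z^\eta(\cdot)$ at its first exit from $U$. Tightness in the Skorokhod space follows from the uniform integrability of squared innovations on $U$ (\Cref{lem:local-diffusion-regularity}(i)), since the increments over intervals of length $s$ are $O_p(s)+O_p(\sqrt{\eta})$. I would then identify limit points by the martingale method. For smooth test functions $\phi$, the process $\phi(z^\eta(\tau))-\int_0^\tau\nabla\phi^\top\mu_z(z^\eta(s))\,ds$ is a martingale up to $o(1)$. The martingale part vanishes because its quadratic variation is $O(\eta)$, and the drift term passes to the limit because $\mu_z$ is continuous on $U$. Local Lipschitzness gives uniqueness of the stopped ODE, so the whole family converges.

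For part~(ii), I would set $\zeta^\eta=\eta^{-1/2}(z^\eta-z^*)$. Its increment is $\sqrt\eta\,\eta^{-1}[\mu_z(z^*+\sqrt\eta\zeta)]\eta+\sqrt\eta\,\xi$. The uniform Fr\'echet expansion (\Cref{lem:local-diffusion-regularity}(ii)) turns the drift into $J_z(z^*)\zeta+o(1)$ on compact $\zeta$-sets. The martingale part satisfies a Lindeberg condition by uniform integrability, and its conditional covariance converges to $Q_z(z^*)$ by continuity. A martingale functional CLT (Ethier--Kurtz) then gives the OU limit. On the multivariate manifold I would use a tubular chart to write normal--tangent coordinates; uniformity along the compact manifold gives the claim uniformly in $z^*$.

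Part~(iii) is the main obstacle, because the horizon $T\eta$ grows unboundedly. I would proceed in three steps.

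First, part~(i) brings the iterate within $\epsilon$ of the attractor after a fixed diffusion time with probability tending to one.

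Second, I would control exit from the attracting neighborhood with the Lyapunov function $L$ from the proof of \Cref{thm:eqm}. The drift inequality $\E[\Delta L\mid\mathcal F_t]\le-a\eta L+C\eta^2$ holds near the attractor. A supermartingale maximal inequality, combined with a union bound over $T$ steps using higher-order moment (or truncation) bounds, shows that $L_t$ stays $O_p(\eta)$ up to logarithmic factors. Hence the exit probability before $T$ tends to zero. The polynomial horizon $T\asymp\eta^{-p}$ is what makes this union bound workable.

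Third, the stationary law of the normal coordinate follows by writing its linearized recursion explicitly: $\zeta_{t+1}=(I+\eta J)\zeta_t+\sqrt\eta\,\xi_t+\text{remainder}$. Unrolling gives $\zeta_T=(I+\eta J)^{T}\zeta_0+\sqrt\eta\sum_s(I+\eta J)^{T-s}\xi_s+R_T$. The condition $\eta T/\log(1/\eta)\to\infty$ makes $\|(I+\eta J)^T\|\eta^{-1/2}\to0$, which kills the initial condition even after $\sqrt\eta$ rescaling. The martingale-array CLT then gives covariance $\int_0^\infty e^{Js}Qe^{J^\top s}ds$, the stationary OU covariance.

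In the multivariate case, the tangent coordinate has zero linear drift. It moves by a martingale with variance $O(\eta^2T)$, plus a second-order drift of order $\eta\cdot O(\eta)\cdot T$ coming from quadratic normal fluctuations. Both are $o(1)$ because $\eta^2T\to0$. This lets me freeze the coefficients at $e^*(c)$ by continuity. Finally, for a $C^1$ functional constant on the manifold, a Taylor expansion in normal coordinates shows that only the normal displacement contributes at order $\sqrt\eta$.
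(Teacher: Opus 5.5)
Your outline follows the paper's proof essentially step for step. For (i) you use Kushner's stopped weak-convergence argument. For (ii) you use the Fr\'echet expansion, convergence of the conditional covariance, and the martingale FCLT. For (iii) you use finite-time ODE tracking to a burn-in time $t_b$, the drift inequality $\E[\Delta L\mid\mathcal F_t]\le-a\eta L+C\eta^2$, unrolling of the frozen normal recursion so that $\eta T/\log(1/\eta)\to\infty$ removes the initial term, the martingale-array CLT with the Lyapunov-integral covariance, and $O(\eta^2T)$ bounds on the tangent motion. The paper tracks the tangent motion through the asymptotic phase of the restricted half-gap field, which makes the phase drift exactly $O(\|\bar e\|^2)$. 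Your tubular projection also has no linear tangent drift, since $t^\top J^{\mathrm{anti}}=0$ for every tangent vector $t$, so either choice works.

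The one step that does not work as written is the confinement argument in your Step 2 of (iii). You propose a union bound over the $T$ steps, using higher-order moments, to show that $L_t$ stays $O_p(\eta)$ up to logarithmic factors uniformly in $t\le T$.

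\begin{itemize}
\item The assumptions give only second moments of the update: finite $\E[\varepsilon^2\|x\|^2]$ and $\E\|x\|^4$. The noise need not have a fourth moment even when $x$ is Gaussian.
\item With $\E L_t=O(\eta)$ alone, a union bound gives only $O(\eta T)\to\infty$.
\item The uniform claim is false in general. With barely square-integrable noise, the largest of the $T$ updates is of size about $\eta\max_{t\le T}|\varepsilon_tx_t|$. This can be nearly of order $\eta\sqrt T=\eta^{1-p/2}$, far above $\sqrt{\eta\log(1/\eta)}$ when $p>1$.
\end{itemize}

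You only need $L$ to stay below a fixed level. The supermartingale maximal inequality you already mention gives this because $\eta^2T\to0$. Take the neighborhood of the attractor to be a sublevel set $\{L<\lambda\}$, and let $\nu$ be its exit step. Then $L_{t\wedge\nu}+C\eta^2(T-t)$ is a nonnegative supermartingale for $t_b\le t\le T$. Conditionally on $\mathcal F_{t_b}$, the exit probability before $T$ is therefore at most $(L_{t_b}+C\eta^2T)/\lambda$. This is arbitrarily small for a long enough fixed burn-in. The constant $C$ needs only bounded second moments of the update on the neighborhood, which uniform integrability supplies.

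This is the paper's argument: stopped Lyapunov inequality, maximal inequality, and accumulated quadratic variation $O(\eta^2T)=o(1)$. The condition doing the work is $\eta^2T\to0$, which you already use for the tangent martingale, not the polynomial horizon combined with extra moments. Your Step 3 remainder and tangent-drift estimates need only the pointwise bound $\E[L_t\ind\{\nu>t\}]=O(\eta)$ from iterating the drift inequality. So nothing downstream depends on the uniform logarithmic bound.
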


\textit{Proof of \Cref{prop:localized-diffusion}.}
Write the recursion as
$z_{t+1}=z_t+\eta\{\mu_z(z_t)+\zeta_{t+1}\}$, where
$\E(\zeta_{t+1}\mid\mathcal F_t)=0$.  Part~(i) follows from the martingale weak-convergence theorem for constant-step stochastic approximation \citep[Section~8.2.1]{kushner2003}: local Lipschitzness identifies a unique stopped ODE, uniform integrability gives the conditional Lindeberg condition, and continuity of the drift identifies the limiting integral equation.

For part~(ii), substitute $z_t=z^*+\sqrt\eta\,u_t$.  The uniform Fr\'echet expansion gives
\[
u_{t+1}-u_t
=\eta J_z(z^*)u_t+\sqrt\eta\,\zeta_{t+1}
+\eta\rho_t^\eta,
\qquad
\sup_{t\le L/\eta}\|\rho_t^\eta\|\xrightarrow{p}0
\]
after stopping, for every fixed $L$.  Continuity of $Q_z$ identifies the predictable quadratic variation, and uniform integrability verifies the martingale Lindeberg condition.  The martingale functional central limit theorem and uniqueness of the linear SDE give part~(ii), as in the constant-step rate theorem of \citet[Section~10.1.1]{kushner2003}.

For part~(iii), \Cref{lem:local-diffusion-regularity}(iii) and finite-time ODE tracking first bring the recursion into a normally attracting neighborhood that remains separated from $\mathcal C$.  In the scalar case there are only contracting coordinates.  In the multivariate case, choose $n$ normal to the equilibrium manifold and let $s$ be the asymptotic phase of the smooth restricted half-gap field
$g(\delta)=\tfrac12(c_0\sigma_\varepsilon/\|\delta\|_\Sigma-1)\Sigma\delta$ on $\bfe=0$; thus $s$ is constant along each deterministic stable fiber of this field.  The uniform transverse Lyapunov estimate established in the proof of \Cref{thm:eqm}, together with the uniform expansion, gives
\[
n_{t+1}=(I+\eta J_N(s_t))n_t+\eta\zeta^N_{t+1}
+\eta r^N_t,
\]
where the eigenvalues of $J_N(s)$ are uniformly bounded in the open left half-plane and
$r^N_t=o(\|n_t\|)+O(\|n_t\|^2)$.  Uniform transverse stability supplies a quadratic function $L(n,s)$ and constants $a,C>0$ such that, up to exit from a sufficiently small tube,
\[
\E[L(n_{t+1},s_{t+1})-L(n_t,s_t)\mid\mathcal F_t]
\le-a\eta L(n_t,s_t)+C\eta^2.
\]
Iteration gives $\E L(n_t,s_t)=O(\eta)$ after the contracting burn-in.  Applying the same inequality at the stopped exit time, together with the martingale maximal inequality, uniform integrability of the update squares, and accumulated quadratic variation $O(\eta^2T)=o(1)$, shows that the recursion remains in the fixed tube with probability tending to one and that
$\operatorname{dist}(z_T,\mathcal M)=O_p(\sqrt\eta)$.  (The distance need not be small during the initial deterministic approach.)  By \eqref{eq:local-gap-drift} and the quadratic bound in the proof of \Cref{lem:local-diffusion-regularity}(ii), the full half-gap drift differs from $g(\delta)$ by $O(\|\bfe\|^2)$.  A Taylor expansion of $s(\delta_{t+1})-s(\delta_t)$ therefore leaves a tangent martingale with second moment $O(\eta^2T)=o(1)$, an accumulated $O(\eta\sum_{t\le T}\|\bfe_t\|^2)=O_p(\eta^2T)=o_p(1)$ drift after burn-in, and a second-order update remainder of the same order; truncation plus the uniform-integrability bound controls large updates.  The martingale maximal inequality gives the corresponding uniform bound, so for the phase $s_*$ selected by the ideal trajectory,
\[
\sup_{t_b\le t\le T}\|s_t-s_*\|=o_p(1)
\]
after an entry time $t_b=o(T)$.

For the long-horizon terminal law, iterate the normal recursion from the burn-in time $t_b$ to $T$.  Set
$A:=J_N(s_*)$, $Q_N:=\operatorname{Cov}(\zeta^N_{t+1}\mid z_t\in\mathcal M,s_t=s_*)$, and $\Phi_\eta(j):=(I+\eta A)^j$.  Iteration of the normal recursion from $t_b$ to $T$, freezing the coefficients at $s_*$ and collecting the coefficient and nonlinear errors, gives
\begin{equation}\label{eq:terminal-normal-array}
\eta^{-1/2}n_T
=\eta^{-1/2}\Phi_\eta(T-t_b)n_{t_b}
+\sqrt\eta\sum_{j=0}^{T-t_b-1}\Phi_\eta(j)\zeta^N_{T-j}
+R_\eta,
\end{equation}
where $R_\eta=o_p(1)$.  Indeed, the phase bound above and continuity of $J_N$ and $Q_N$ justify freezing the coefficients, while the uniform Fr\'echet remainder and $\E L(n_t,s_t)=O(\eta)$ make the geometrically weighted nonlinear contribution $o_p(1)$ after normalization.

Because $A$ is uniformly Hurwitz, $\|\Phi_\eta(j)\|\le C\exp(-c\eta j)$ for all sufficiently small $\eta$.  Consequently,
\[
\eta^{-1/2}\Phi_\eta(T-t_b)n_{t_b}
=O_p\!\left\{\frac{\exp[-c\eta(T-t_b)]}{\sqrt\eta}\right\}
=o_p(1),
\]
where $\eta T/\log(1/\eta)\to\infty$ is used.  The predictable covariance of the martingale array in \eqref{eq:terminal-normal-array} converges as
\begin{align}
\eta\sum_{j=0}^{T-t_b-1}
\Phi_\eta(j)Q_N\Phi_\eta(j)^\top
&\longrightarrow
\int_0^\infty e^{As}Q_Ne^{A^\top s}\,ds
=:\Sigma_N. \label{eq:terminal-lyapunov-covariance}
\end{align}
Here the finite-step sum is a Riemann sum, $\eta(T-t_b)\to\infty$ extends its upper limit to infinity, and exponential stability controls the tail.  The matrix $\Sigma_N$ is the unique solution of
\[
A\Sigma_N+\Sigma_NA^\top+Q_N=0.
\]
Uniform integrability of the innovation squares gives the martingale-array Lindeberg condition.  Hence the martingale central limit theorem applied to \eqref{eq:terminal-normal-array} yields
\[
\eta^{-1/2}n_T\ \xrightarrow{d}\ N(0,\Sigma_N).
\]
The limiting covariance $\Sigma_N$ is the stationary covariance of the frozen normal OU process.  If a smooth functional is constant on $\mathcal M$, its derivative annihilates the tangent space, so the delta method retains only this normal fluctuation.  This proves part~(iii). \hfill$\square$

\subsection{Auxiliary Technical Lemmas}\label{app:auxiliary-lemmas}

We first record two Gaussian moment identities used in the multivariate equilibrium proof.
\begin{lemma}\label{lem:sign-proj}
	Let $x \sim N(0,\Sigma)$ with $\Sigma \succ 0$ and $e \in \R^d \setminus \{0\}$.  Define $v=\Sigma^{1/2}e$ and $\hat v = v/\|v\|$.  Then:

	\textup{(i)} $\E\bigl[x\,\sgn(e^\top x)\bigr]
	= \sqrt{\dfrac{2}{\pi}}\;\dfrac{\Sigma e}{\|e\|_{\Sigma}}$.

	\textup{(ii)} $\E\bigl[|e^\top x|\,xx^\top\bigr]
	= \|e\|_\Sigma\,\E[|Z|]\left(\Sigma + \dfrac{\Sigma e\,e^\top\Sigma}{e^\top\Sigma\,e}\right)$, where $\E[|Z|]=\sqrt{2/\pi}$.
\end{lemma}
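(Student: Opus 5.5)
The plan is to reduce both identities to statements about a standard normal vector by whitening. Write $x=\Sigma^{1/2}g$ with $g\sim N(0,I_d)$. Then $e^\top x=v^\top g=\|v\|\,\hat v^\top g$ and $\|v\|=\|e\|_\Sigma$. Put $u:=\hat v^\top g\sim N(0,1)$ and decompose $g=u\hat v+g_\perp$, where $g_\perp:=(I-\hat v\hat v^\top)g$. Because $g$ is standard Gaussian, $g_\perp$ is independent of $u$, has mean zero, and has covariance $I-\hat v\hat v^\top$. Both parts then follow from computing the needed moments in the pair $(u,g_\perp)$ and mapping back through $\Sigma^{1/2}$.

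For part (i), $\sgn(e^\top x)=\sgn(u)$. Hence
\[
\E[x\,\sgn(e^\top x)]=\Sigma^{1/2}\bigl(\E[u\,\sgn(u)]\hat v+\E[\sgn(u)]\E[g_\perp]\bigr)=\E|Z|\,\Sigma^{1/2}\hat v .
\]
Since $\Sigma^{1/2}\hat v=\Sigma e/\|e\|_\Sigma$ and $\E|Z|=\sqrt{2/\pi}$, this is the stated formula.

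For part (ii), first write $|e^\top x|=\|e\|_\Sigma|u|$. Then
\[
\E[|u|\,gg^\top]=\E[|u|u^2]\,\hat v\hat v^\top+\E|u|\,\E[g_\perp g_\perp^\top]+\text{cross terms}.
\]
The cross terms $\E[|u|u]\,\hat v\,\E[g_\perp]^\top$ and its transpose vanish because $g_\perp$ is independent of $u$ and centered. The remaining moments are $\E|u|^3=2\sqrt{2/\pi}=2\E|Z|$ and $\E[g_\perp g_\perp^\top]=I-\hat v\hat v^\top$. Therefore
\[
\E[|u|\,gg^\top]=\E|Z|\,(I+\hat v\hat v^\top).
\]
Conjugating by $\Sigma^{1/2}$ gives $\Sigma+\Sigma^{1/2}\hat v\hat v^\top\Sigma^{1/2}$, and $\Sigma^{1/2}\hat v\hat v^\top\Sigma^{1/2}=\Sigma ee^\top\Sigma/(e^\top\Sigma e)$. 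Multiplying by $\|e\|_\Sigma$ yields the claim.

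No step poses a real obstacle. The only points to be careful about are that $g_\perp$ is independent of $u$ (orthogonal projections of a standard Gaussian are independent) and that $\E|Z|^3=2\sqrt{2/\pi}$, which follows from $\int_0^\infty z^3\phi(z)\,dz=2\phi(0)$. Integrability is not an issue, since all moments of a Gaussian are finite.
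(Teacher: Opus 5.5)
Your proposal is correct and follows essentially the same route as the paper's proof: whiten via $\Sigma^{1/2}$, split the standard Gaussian into its component along $\hat v$ and an independent orthogonal part, use $\E|Z|^3=2\E|Z|$, and conjugate back by $\Sigma^{1/2}$. The moment calculations and the identification $\Sigma^{1/2}\hat v\hat v^\top\Sigma^{1/2}=\Sigma ee^\top\Sigma/(e^\top\Sigma e)$ all check out.
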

\textit{Proof of \Cref{lem:sign-proj}.}
Let $\tilde x=\Sigma^{-1/2}x \sim N(0,I_d)$, so $e^\top x = v^\top \tilde x$.  Decompose $\tilde x = (\hat v^\top \tilde x)\hat v + \tilde x_\perp$ with $\hat v^\top \tilde x \sim N(0,1)$ independent of $\tilde x_\perp$.

\textit{Part~(i).}  Since $\E[\tilde x_\perp\,\sgn(\hat v^\top \tilde x)] = 0$ and
$\E[(\hat v^\top \tilde x)\sgn(\hat v^\top \tilde x)] = \E[|Z|] = \sqrt{2/\pi}$,
we obtain $\E[\tilde x\,\sgn(v^\top \tilde x)] = \sqrt{2/\pi}\,\hat v$.
Multiplying by $\Sigma^{1/2}$ and noting $\Sigma^{1/2}\hat v = \Sigma e/\|e\|_\Sigma$ yields~(i).

\textit{Part~(ii).}  We have $\E[|e^\top x|\,xx^\top] = \Sigma^{1/2}\E[|v^\top \tilde x|\,\tilde x\tilde x^\top]\Sigma^{1/2}$ and $|v^\top \tilde x| = \|v\|\,|\hat v^\top \tilde x|$.  Cross terms vanish since $\E[\tilde x_\perp] = 0$, giving
$\E[|v^\top \tilde x|\,\tilde x\tilde x^\top]
= \|v\|\bigl(\E[|Z|^3]\,\hat v\hat v^\top + \E[|Z|]\,(I-\hat v\hat v^\top)\bigr)
= \|v\|\,\E[|Z|](I+\hat v\hat v^\top)$,
where the last step uses $\E[|Z|^3]=2\E[|Z|]$.
Multiplying by $\Sigma^{1/2}$ on both sides and noting $\Sigma^{1/2}\hat v\hat v^\top\Sigma^{1/2} = \Sigma e\,e^\top\Sigma/(e^\top\Sigma e)$ yields~(ii).
\hfill$\square$

\begin{lemma} \label{lem:drift-symmetry}
	Under $H_0$ and Parts~(a)--(c) of \Cref{ass:diffusion}, when $K=2$, let $p_k=\Pr(w=k\mid x,\varepsilon,e_1,e_2)$ denote the selection weight induced by the uniform tie rule, and define $\mu_k(e_1,e_2)=-\E[p_k(\varepsilon+e_k^\top x)x]$.  Then:

	\textup{(a)} {Exchange symmetry:} $\mu_2(e_1,e_2) = \mu_1(e_2,e_1)$ for all $(e_1,e_2)$.

	\textup{(b)} {Negation antisymmetry:} $\mu_1(-e_1,-e_2) = -\mu_1(e_1,e_2)$ for all $(e_1,e_2)$.
\end{lemma}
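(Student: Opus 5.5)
Both identities come from rewriting the drift so the relevant symmetry is visible, using the tie-aware weights $p_k$. The first step is to recall that $p_1+p_2=1$ and, from $r_1^2-r_2^2=(e_1-e_2)^\top x\,\bigl(2\varepsilon+(e_1+e_2)^\top x\bigr)$, that
\[
p_1(e_1,e_2;x,\varepsilon)=\ind\{r_1^2<r_2^2\}+\tfrac12\ind\{r_1^2=r_2^2\},
\]
and symmetrically for $p_2$. Each weight depends on the state only through the unordered comparison of the squared residuals $(\varepsilon+e_1^\top x)^2$ and $(\varepsilon+e_2^\top x)^2$.

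For part~(a), swapping the arguments swaps the two residuals. Hence
\[
p_2(e_1,e_2;x,\varepsilon)=p_1(e_2,e_1;x,\varepsilon)
\]
pointwise, including on the tie set, because the tie weight is $\tfrac12$ for both pieces. Substituting this into $\mu_2(e_1,e_2)=-\E[p_2(e_1,e_2)(\varepsilon+e_2^\top x)x]$ gives exactly $-\E[p_1(e_2,e_1)(\varepsilon+e_2^\top x)x]=\mu_1(e_2,e_1)$. No distributional assumption is needed beyond integrability, which \Cref{ass:diffusion}(c) supplies via $\E[|\varepsilon|\|x\|]+\E\|x\|^2\|e_k\|<\infty$.

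For part~(b), I would use the change of variables $(x,\varepsilon)\mapsto(x,-\varepsilon)$. This change preserves the joint law because $\varepsilon$ is independent of $x$ and symmetric about zero (\Cref{ass:diffusion}(a)--(b)). The step to check carefully is that under the state $(-e_1,-e_2)$ the residuals $\varepsilon-e_k^\top x$, evaluated at $-\varepsilon$, become $-(\varepsilon+e_k^\top x)$. Their squares are unchanged, so the comparison, and with it the tie structure, is unchanged:
\[
p_1(-e_1,-e_2;x,-\varepsilon)=p_1(e_1,e_2;x,\varepsilon).
\]
Then
\[
\mu_1(-e_1,-e_2)=-\E\bigl[p_1(-e_1,-e_2;x,\varepsilon)(\varepsilon-e_1^\top x)x\bigr]
=-\E\bigl[p_1(e_1,e_2;x,\varepsilon)\bigl(-\varepsilon-e_1^\top x\bigr)x\bigr]=-\mu_1(e_1,e_2).
\]
The second equality is the law-preserving substitution $\varepsilon\to-\varepsilon$.

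The only potential obstacle is the tie set. Writing $p_k$ explicitly, rather than the randomized indicator $I_k$, makes both transformations pointwise identities on that set as well. An alternative is to note that ties have probability zero except when $x=0$ or $(e_1-e_2)^\top x=0$, and that they carry weight $\tfrac12$ symmetrically under both maps. Either way, the argument uses no Gaussianity and holds for every $(e_1,e_2)$, including coincident states.
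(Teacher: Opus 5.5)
Your proposal is correct and follows essentially the same route as the paper. Part~(a) goes by the pointwise label swap of residuals and tie-aware weights, and part~(b) by the law-preserving substitution $\varepsilon\mapsto-\varepsilon$: the negated-state residuals become $-r_k$, the squared-residual comparison (including the tie weight $\tfrac12$) is unchanged, and the overall sign flips, with integrability from \Cref{ass:diffusion}(c) exactly as in the paper's Cauchy--Schwarz bound.
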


\textit{Proof of \Cref{lem:drift-symmetry}.}  
The integrand $p_k r_kx$ is integrable.  Indeed, since $0\le p_k\le1$, $r_k=\varepsilon+e_k^\top x$, and $|e_k^\top x|\le\|e_k\|\|x\|$, the Cauchy--Schwarz inequality gives
\[
	\E[\|p_k r_kx\|]
	\le \E[|\varepsilon|\|x\|]+\|e_k\|\E[\|x\|^2]
	\le \sqrt{\E[\varepsilon^2\|x\|^2]}
		+\|e_k\|\sqrt{\E[\|x\|^4]}<\infty,
\]
where the final inequality follows from \Cref{ass:diffusion}(c).  Part~(a) holds by construction: swapping labels $1\leftrightarrow2$ interchanges both the residuals and their selection weights.  For Part~(b), at the negated profile the residuals are $\tilde r_j:=\varepsilon-e_j^\top x$.  By symmetry and independence, $(x,\varepsilon)\overset d=(x,-\varepsilon)$.  Under the substitution $\varepsilon\mapsto-\varepsilon$, we have $\tilde r_j\mapsto-r_j$, so the squared-residual ordering is unchanged.  The corresponding selection weight, including its value $1/2$ on a tie, therefore maps to $p_j$.  Hence
$
	\mu_1(-e_1,-e_2)
	=-\E[p_1(-r_1)x]
	=\E[p_1r_1x]
	=-\mu_1(e_1,e_2).
$
$\hfill\square$

\begin{lemma}\label{lem:equiv}
	Under $H_0$ and Parts~(a)--(c) of \Cref{ass:diffusion} in the scalar case, consider the midpoint--half-gap coordinates $(\bfe,\delta)$ of \Cref{tab:notation}.  The antisymmetric regime $\bfe=0$ is invariant under the ODE~\eqref{eq:ode-drift}.  More precisely, the $(\bfe,\delta)$ system is
	\begin{align}
		\dot{\bfe}
		&= -\tfrac{1}{2}\E[x^2]\,\bfe
		+ \tfrac{1}{2}|\delta|\,\psi(\bfe),
		\label{eq:ebar-dyn} \\
		\dot{\delta}
		&= \tfrac{1}{2}\sgn(\delta)\,
		\E\!\bigl[|\varepsilon+\bfe x|\,|x|\bigr]
		- \tfrac{1}{2}\E[x^2]\,\delta,
		\label{eq:delta-dyn-full}
	\end{align}
	where $\psi(\bfe) := \sgn(\bfe)\,\E[x^2\,\ind\{|\varepsilon|<|\bfe|\,|x|\}]$ satisfies $\psi(0)=0$ (giving invariance), $0\le\psi(\bfe)\le\E[x^2]$ for $\bfe>0$, and $\psi'(0)=2f_\varepsilon(0)\,\E[|x|^3]$.
\end{lemma}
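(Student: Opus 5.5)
The plan is to derive the $(\bfe,\delta)$ system directly from the tie-aware drift representation \eqref{eq:local-midpoint-drift}--\eqref{eq:local-gap-drift}, specialized to $d=1$. With $p_1+p_2=1$ and $p_1-p_2=-\sgn\{(\varepsilon+\bfe x)\delta x\}$, and since $\delta$ is a scalar, the half-gap equation becomes
\[
\dot\delta=-\tfrac12\E[x^2]\delta+\tfrac12\E[|\varepsilon+\bfe x|\sgn(\delta x)x].
\]
Because $\sgn(\delta x)x=\sgn(\delta)|x|$, this gives \eqref{eq:delta-dyn-full} at once. For the midpoint I will write
\[
\dot\bfe=-\tfrac12\E[x^2]\bfe+\tfrac12|\delta|\,\E[|x|\sgn(\varepsilon+\bfe x)x].
\]
Conditioning on $x$ and using symmetry of $\varepsilon$, $\E[\sgn(\varepsilon+t)\mid x]=2F_\varepsilon(t)-1=\Pr(|\varepsilon|<|t|)\sgn(t)$. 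Substituting $t=\bfe x$ then yields $\E[|x|x\,\sgn(\bfe x)\ind\{|\varepsilon|<|\bfe||x|\}]$. Since $x\,\sgn(\bfe x)=\sgn(\bfe)|x|$, this equals $\psi(\bfe)$ and establishes \eqref{eq:ebar-dyn}. Integrability of all these expectations follows from \Cref{ass:diffusion}(c), as in the proof of \Cref{lem:drift-symmetry}.

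The properties of $\psi$ follow in turn. $\psi(0)=0$ holds because the event $\{|\varepsilon|<0\}$ is empty, so $\dot\bfe=0$ whenever $\bfe=0$ and the line $\bfe=0$ is invariant. The bounds $0\le\psi(\bfe)\le\E[x^2]$ for $\bfe>0$ hold because the indicator lies in $[0,1]$.

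For the derivative, I will write
\[
\psi(\bfe)/\bfe=\E\bigl[x^2\,\{2F_\varepsilon(|\bfe||x|)-1\}/|\bfe|\bigr]
\]
for $\bfe\ne0$, noting that $\psi$ is odd. Continuity of $f_\varepsilon$ at zero gives $\{2F_\varepsilon(s)-1\}/s\to2f_\varepsilon(0)$ as $s\downarrow0$. Hence the integrand converges pointwise to $2f_\varepsilon(0)|x|^3$, including at $x=0$, where it is identically zero.

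The main (mild) obstacle is justifying the limit exchange. The mean value bound $2F_\varepsilon(s)-1\le2\|f_\varepsilon\|_\infty s$ dominates the integrand by $2\|f_\varepsilon\|_\infty|x|^3$. This is integrable because $\E|x|^3\le(\E x^4)^{3/4}<\infty$ by \Cref{ass:diffusion}(c). Dominated convergence then gives $\psi'(0)=2f_\varepsilon(0)\E[|x|^3]$. One final check is needed: at $x=0$ or on tie events, the tie weight $\tfrac12$ contributes nothing, since ties have probability zero or a zero integrand.
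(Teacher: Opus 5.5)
Your proposal is correct and follows essentially the same route as the paper: the identities $p_1+p_2=1$ and $p_1-p_2=-\sgn\{\delta x(\varepsilon+\bfe x)\}$ yield the midpoint and half-gap drifts, conditioning on $x$ with noise symmetry produces $\psi(\bfe)$, and dominated convergence with the envelope $2\|f_\varepsilon\|_\infty|x|^3$ gives $\psi'(0)=2f_\varepsilon(0)\E[|x|^3]$. The only difference is presentational. You cite \eqref{eq:local-midpoint-drift}--\eqref{eq:local-gap-drift}, which the paper states in a proof that defers the scalar case back to this lemma, whereas the paper expands $\mu_1\pm\mu_2$ explicitly; in a write-up, include that one-line algebra so the argument is visibly non-circular.
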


\textit{Proof of \Cref{lem:equiv}.}
Let $p_k=\Pr(w=k\mid x,\varepsilon,e_1,e_2)$ be the tie-aware selection weights.  Then $p_1+p_2=1$, and, with $r_1^2-r_2^2=4\delta x(\varepsilon+\bfe x)$ and $\sgn(0)=0$,
\[
	p_1-p_2=-\sgn\bigl(\delta x(\varepsilon+\bfe x)\bigr).
\]
Recall that $\dot\bfe = \tfrac{1}{2}(\mu_1+\mu_2)$ and $\dot\delta = \tfrac{1}{2}(\mu_1-\mu_2)$. We derive the expressions for $\dot\bfe$ and $\dot\delta$ by substituting $e_1 = \bfe +\delta,e_2=\bfe-\delta$ into $\mu_k = -\E[p_k(\varepsilon+e_k x)x]$:
\begin{align*}
	\mu_1 + \mu_2
	&= -\E\!\left[(p_1+p_2)\varepsilon x\right]
	- \E\!\left[(p_1 e_1 + p_2 e_2)x^2\right]\\
	& = -\E\!\left[(p_1+p_2)\varepsilon x\right]
	- \bfe\,\E\!\left[(p_1+p_2)x^2\right]
	- \delta\,\E\!\left[(p_1-p_2)x^2\right] \\
	&= -\bfe\,\E[x^2]
	- \delta\,\E\!\left[(p_1-p_2)x^2\right],
	\\[6pt]
	\mu_1 - \mu_2
	&= -\E\!\left[(p_1-p_2)\varepsilon x\right]
	- \E\!\left[(p_1 e_1 - p_2 e_2)x^2\right] = -\E\!\left[(p_1-p_2)(\varepsilon+\bfe x)x\right]
	- \delta\,\E[x^2].
\end{align*}
Using $p_1-p_2 = -\sgn(\delta x(\varepsilon+\bfe x))$ and the identity
$\sgn(abc)\,a\,c = |a|\,|c|\,\sgn(b)$, we obtain:
$
-\E\!\left[(p_1-p_2)(\varepsilon+\bfe x)\,x\right]
= \sgn(\delta)\,\E\!\left[|(\varepsilon+\bfe x)\,x|\right]
= \sgn(\delta)\,\E\!\left[|\varepsilon+\bfe x|\,|x|\right].
$
Substituting the same identity into $-\delta\,\E[(p_1-p_2)x^2]$ gives
$
-\delta\E[(p_1-p_2)x^2]
= \delta\E[\sgn(\delta x(\varepsilon+\bfe x))x^2]
= |\delta|\E[\sgn(x(\varepsilon+\bfe x))x^2].
$
Let $F_\varepsilon$ be the noise cdf.  Conditional on $x$, symmetry of $\varepsilon$ gives
\[
	\E_\varepsilon[\sgn\{x(\varepsilon+\bfe x)\}\mid x]
	=\sgn(x)\{2F_\varepsilon(\bfe x)-1\}
	=\sgn(\bfe)\Pr(|\varepsilon|<|\bfe|\,|x|\mid x).
\]
Consequently,
\[
\E\bigl[\sgn(x(\varepsilon+\bfe x))\,x^2\bigr]
= \sgn(\bfe)\,\E\bigl[x^2\,\ind\{|\varepsilon|<|\bfe|\,|x|\}\bigr]
=: \psi(\bfe).
\]
A combination of these equations yields~\eqref{eq:ebar-dyn}--\eqref{eq:delta-dyn-full}.
The properties of $\psi$ follow from its definition: $\psi(0)=0$ since $\ind\{|\varepsilon|<0\}=0$;
the bound $0\le\psi(\bfe)\le\E[x^2]$ for $\bfe>0$ is immediate; and, as $\bfe\to0$,
\[
	\frac{\psi(\bfe)}{\bfe}
	=\E\!\left[x^2\frac{\Pr(|\varepsilon|<|\bfe|\,|x|\mid x)}{|\bfe|}\right]
	\longrightarrow 2f_\varepsilon(0)\E[|x|^3].
\]
Indeed, the integrand is bounded by $2\|f_\varepsilon\|_\infty|x|^3$, which is integrable under \Cref{ass:diffusion}(c).  This proves the two-sided derivative at zero.
\hfill$\square$

\begin{lemma}\label{lem:ebar-multi}
	Under $H_0$ and Parts~(a)--(c) of \Cref{ass:diffusion} with $x\in\R^d$, consider the midpoint--half-gap coordinates $(\bfe,\delta)$ of \Cref{tab:notation}.  Then:
	
	(i) The symmetric ODE component $\dot{\bfe}=\frac{1}{2}(\mu_1+\mu_2)$ satisfies
	\begin{equation}\label{eq:ebar-multi}
		\dot{\bfe} = -\tfrac{1}{2}\Sigma\bfe + \tfrac{1}{2}\Psi(\bfe,\delta),
	\end{equation}
	where $\Psi(\bfe,\delta) := \E\bigl[|\delta^\top x|\,\sgn(\varepsilon+\bfe^\top x)\,x\bigr]$,
	and $\Psi(0,\delta)=0$ for every $\delta\in\R^d$.
	
	(ii) The Jacobian of $\dot{\bfe}$ with respect to $\bfe$, evaluated at $(\bfe,\delta)=(0,e^*)$, is
	\begin{equation}\label{eq:Jsym-app}
		J^{\mathrm{sym}} = -\tfrac{1}{2}\Sigma + f_\varepsilon(0)\,\E\bigl[|e^{*\top}x|\,xx^\top\bigr].
	\end{equation}
\end{lemma}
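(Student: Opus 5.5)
Part~(i) will follow by averaging the two drifts in midpoint--half-gap coordinates, as in the scalar \Cref{lem:equiv}; part~(ii) will then follow by differentiating the resulting $\Psi$ at $\bfe=0$. Throughout, use the tie-aware weights $p_k$, so that $p_1+p_2=1$ and, since $r_1^2-r_2^2=4(\delta^\top x)(\varepsilon+\bfe^\top x)$,
\[
p_1-p_2=-\sgn\{(\delta^\top x)(\varepsilon+\bfe^\top x)\}.
\]
With $e_1=\bfe+\delta$ and $e_2=\bfe-\delta$, write
\[
\mu_1+\mu_2=-\E[(p_1+p_2)(\varepsilon+\bfe^\top x)x]-\E[(p_1-p_2)(\delta^\top x)x].
\]
The first term equals $-\Sigma\bfe$, because $\E[\varepsilon x]=0$ by independence and centering, with integrability from \Cref{ass:diffusion}(c) as in \Cref{lem:drift-symmetry}. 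For the second term, apply the identity $\sgn(abc)\,a=|a|\,\sgn(bc)$ with $a=\delta^\top x$, $b=1$, $c=\varepsilon+\bfe^\top x$. This gives
\[
-(p_1-p_2)(\delta^\top x)=|\delta^\top x|\,\sgn(\varepsilon+\bfe^\top x),
\]
and hence $\dot\bfe=\tfrac12(\mu_1+\mu_2)=-\tfrac12\Sigma\bfe+\tfrac12\Psi(\bfe,\delta)$. Integrability of $\Psi$ follows from $\E\|x\|^2<\infty$. At $\bfe=0$, condition on $x$: by symmetry and $\Pr(\varepsilon=0)=0$, $\E[\sgn(\varepsilon)\mid x]=0$, so $\Psi(0,\delta)=0$.

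For part~(ii), condition on $x$ and set $h(t):=\E[\sgn(\varepsilon+t)]=2F_\varepsilon(t)-1$. This gives
\[
\Psi(\bfe,e^*)=\E\bigl[|e^{*\top}x|\,h(\bfe^\top x)\,x\bigr].
\]
Continuity of $f_\varepsilon$ at zero gives $h(t)/t\to2f_\varepsilon(0)$, and the bounded density gives $|h(t)|\le2\|f_\varepsilon\|_\infty|t|$. Consider the difference quotient $[\Psi(\bfe,e^*)-\Psi(0,e^*)-2f_\varepsilon(0)\E(|e^{*\top}x|xx^\top)\bfe]/\|\bfe\|$. Its integrand tends to zero pointwise along $\bfe\to0$, and it is dominated by $C\|e^*\|\,\|x\|^3$, which is integrable by $\E\|x\|^4<\infty$. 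Dominated convergence then shows the quotient tends to zero, and the convergence is uniform in the direction of $\bfe$: writing $\bfe=sv$ with $\|v\|=1$ and applying the bound $\sup_v|h(sv^\top x)/s-2f_\varepsilon(0)v^\top x|\le o(1)\|x\|$ pointwise in $x$ gives a Fréchet rather than merely directional derivative. Therefore
\[
\frac{\partial\Psi}{\partial\bfe}\Big|_{(0,e^*)}=2f_\varepsilon(0)\E[|e^{*\top}x|xx^\top],
\]
and halving gives \eqref{eq:Jsym-app}.

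The main obstacle is making the differentiation uniform in direction, so that one obtains a genuine Jacobian rather than only Gateaux derivatives. I will handle this with the pointwise sup bound above, which holds because $h(t)/t$ converges as $t\to0$ and is bounded; dominated convergence with the envelope $C\|x\|^3$ then finishes the argument.
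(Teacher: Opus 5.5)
Your proposal is correct and follows essentially the same route as the paper. For part~(i), both use tie-aware weights with $p_1+p_2=1$ and $p_1-p_2=-\sgn\{(\delta^\top x)(\varepsilon+\bfe^\top x)\}$; for part~(ii), both condition on $x$ with $h(t)=2F_\varepsilon(t)-1$ and apply dominated convergence under an $\|x\|^3$-type envelope. The one difference is that the paper's proof of this lemma only computes directional derivatives and defers the uniform Fr\'echet expansion to \Cref{lem:local-diffusion-regularity}(ii), whereas you obtain the Fr\'echet derivative directly. Your sup-over-directions bound for this is valid, though dominated convergence along an arbitrary sequence $\bfe_n\to0$ would already suffice.
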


\textit{Proof of \Cref{lem:ebar-multi}.}
 {Part~(i).}
Let $p_k=\Pr(w=k\mid x,\varepsilon,e_1,e_2)$ be the tie-aware selection weights.  Since $r_1^2-r_2^2=4(\delta^\top x)(\varepsilon+\bfe^\top x)$, we have $p_1+p_2=1$ and, with $\sgn(0)=0$, $p_1-p_2=-\sgn((\varepsilon+\bfe^\top x)(\delta^\top x))$.
From $\mu_k=-\E[p_k(\varepsilon+e_k^\top x)x]$ and $\E[\varepsilon x]=0$:
$\mu_1+\mu_2 = -\Sigma\bfe - \E[(p_1-p_2)(\delta^\top x)x]$.
Using $\sgn(ab)\cdot b = |b|\sgn(a)$, the selection-weight difference gives $(\delta^\top x)\sgn((\varepsilon+\bfe^\top x)(\delta^\top x)) = |\delta^\top x|\sgn(\varepsilon+\bfe^\top x)$, so $\mu_1+\mu_2 = -\Sigma\bfe + \E[|\delta^\top x|\sgn(\varepsilon+\bfe^\top x)x]$.
Since $\sgn(\varepsilon)$ is independent of $x$ with $\E[\sgn(\varepsilon)]=0$, we have $\Psi(0,\delta)=0$.

 {Part~(ii).}
Condition on $x$: $\Psi_i(\bfe,\delta) = \E_x[|\delta^\top x|\,x_i\,h(\bfe^\top x)]$
where $h(t):=\E[\sgn(\varepsilon+t)]=2F_\varepsilon(t)-1$ satisfies $h(0)=0$, $h(t)/t\to2f_\varepsilon(0)$, and $|h(t)|\le2\|f_\varepsilon\|_\infty|t|$.  The difference quotient in direction $a\in\R^d$ is therefore dominated by
$2\|f_\varepsilon\|_\infty|\delta^\top x|\,|x_i|\,|a^\top x|$,
which is integrable by \Cref{ass:diffusion}(c).  Dominated convergence gives
$D_{\bfe}\Psi_i(0,\delta)[a]=2f_\varepsilon(0)\E[|\delta^\top x|x_i(a^\top x)]$.
Setting $\delta=e^*$ yields
$\partial\Psi_i/\partial\bfe_j|_{(0,e^*)}=2f_\varepsilon(0)\E[|e^{*\top}x|x_ix_j]$.
\hfill$\square$

\begin{lemma} \label{lem:jacobian-block}
	Under $H_0$ and Parts~(a)--(c) of \Cref{ass:diffusion}, at any separated antisymmetric equilibrium $(e_1,e_2)=(e^*,-e^*)$ with $e^*\ne0$ at which the drift is differentiable, the $2d\times 2d$ Jacobian of the ODE system~\eqref{eq:ode-drift} in the $(\bfe,\delta)$ coordinates is block-diagonal:
	\begin{equation}\label{eq:block-diag-main}
		J_{(\bfe,\delta)} = \begin{pmatrix} J^{\mathrm{sym}} & 0 \\ 0 & J^{\mathrm{anti}} \end{pmatrix},
	\end{equation}
	where $J^{\mathrm{sym}},J^{\mathrm{anti}}\in\R^{d\times d}$ are the Jacobians of the $\bfe$- and $\delta$-dynamics at $(\bfe,\delta)=(0,e^*)$, respectively.
\end{lemma}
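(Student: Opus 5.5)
The plan is to show that the two off-diagonal blocks $\partial\dot{\bfe}/\partial\delta$ and $\partial\dot\delta/\partial\bfe$ vanish at $(\bfe,\delta)=(0,e^*)$. I will work from the representations \eqref{eq:local-midpoint-drift}--\eqref{eq:local-gap-drift}, which follow from $p_1+p_2=1$ and $p_1-p_2=-\sgn\{(\varepsilon+\bfe^\top x)(\delta^\top x)\}$ exactly as in the proof of \Cref{lem:ebar-multi}(i). These give
\[
\dot{\bfe}=-\tfrac12\Sigma\bfe+\tfrac12\Psi(\bfe,\delta),\qquad
\dot\delta=-\tfrac12\Sigma\delta+\tfrac12\Phi(\bfe,\delta).
\]
The linear terms are already diagonal in $(\bfe,\delta)$, so the cross blocks reduce to $\partial_\delta\Psi(0,e^*)$ and $\partial_{\bfe}\Phi(0,e^*)$. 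Because the drift is assumed differentiable at the equilibrium, each partial derivative equals the corresponding directional derivative. It therefore suffices to compute one-sided limits along coordinate directions; no joint expansion is needed.

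\emph{Block $\partial_\delta\Psi$.} By \Cref{lem:ebar-multi}(i), $\Psi(0,\delta)=0$ for every $\delta$. The map $\delta\mapsto\Psi(0,\delta)$ is therefore identically zero, so its derivative in $\delta$ at $e^*$ is zero. The reason is that $\sgn(\varepsilon)$ is independent of $x$ and has mean zero by the symmetry of the noise.

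\emph{Block $\partial_{\bfe}\Phi$.} Condition on $x$ and write $\Phi(\bfe,e^*)=\E[a(\bfe^\top x)\,\sgn(e^{*\top}x)\,x]$, where $a(t):=\E|\varepsilon+t|$. The noise is symmetric with a bounded density $f_\varepsilon$, so $a$ is even and $a'(t)=2F_\varepsilon(t)-1$. Hence $a'(0)=0$, and integrating gives $|a(t)-a(0)|\le\|f_\varepsilon\|_\infty t^2$. Taking the difference quotient in a direction $v$ yields
\[
\|\Phi(sv,e^*)-\Phi(0,e^*)\|/s\le \|f_\varepsilon\|_\infty\, s\,\E[|v^\top x|^2\|x\|]\to0 .
\]
The bound is finite by $\E\|x\|^4<\infty$ and Hölder, using \Cref{ass:diffusion}(c). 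So $\partial_{\bfe}\Phi(0,e^*)=0$. The diagonal blocks are then by definition the $\bfe$- and $\delta$-Jacobians, which gives \eqref{eq:block-diag-main}.

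The main obstacle is mild. One must justify that, when $d=1$ or $x$ has atoms, the tie convention $\sgn(0)=0$ does not break the representation. This is handled as in \Cref{lem:equiv}: ties occur either where $x$ makes the integrand vanish or on $\{\varepsilon=-\bfe^\top x\}$, which has probability zero because $\varepsilon$ has a density. A second point is integrability of $\sgn\cdot x$ terms near $\delta=e^*\neq0$, which follows from the moment conditions already used in the proof of \Cref{lem:drift-symmetry}.
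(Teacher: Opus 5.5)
Your argument is correct, but it takes a different route from the paper. The paper's proof never uses the explicit representations $\Psi$ and $\Phi$. It relies only on the two structural symmetries of \Cref{lem:drift-symmetry}: exchange symmetry $\mu_2(e_1,e_2)=\mu_1(e_2,e_1)$ and negation antisymmetry $\mu_1(-e_1,-e_2)=-\mu_1(e_1,e_2)$. Differentiating these at $(e^*,-e^*)$ shows that the Jacobian in $(e_1,e_2)$ coordinates has the form $\begin{pmatrix}A&B\\B&A\end{pmatrix}$, with $A=\nabla_1\mu_1(e^*,-e^*)$ and $B=\nabla_2\mu_1(e^*,-e^*)$. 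Conjugating by the change of variables to $(\bfe,\delta)$ then gives $\operatorname{diag}(A+B,A-B)$.

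\emph{What the paper's route buys.} It needs no estimates, explains the decoupling as a pure symmetry phenomenon, and identifies the diagonal blocks as $A\pm B$ for free. It uses nothing beyond the two symmetries and integrability. The price is that it leans on the lemma's hypothesis that the drift is differentiable at the equilibrium.

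\emph{What your route buys.} You use $\Psi(0,\cdot)\equiv0$ together with the quadratic bound $|a(t)-a(0)|\le\|f_\varepsilon\|_\infty t^2$. This needs more analytic input: the bounded density and $\E\|x\|^3<\infty$. In return, it shows the off-diagonal partials exist and vanish without any differentiability assumption. It also yields the estimate $\|\Phi(\bfe,\delta)-\Phi(0,\delta)\|\le\|f_\varepsilon\|_\infty\|\bfe\|^2\E\|x\|^3$, uniformly in $\delta$. That is exactly the bound the paper has to establish separately in \Cref{lem:local-diffusion-regularity}(ii) and reuses for the remainder $\rho_q$ in the Lyapunov argument for $\mathcal M_2$. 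In effect, your computation merges this lemma with part of that regularity lemma.

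One minor remark: the ``obstacle'' about ties is not actually an issue. Under the uniform tie rule, $p_1-p_2=-\sgn\{(\varepsilon+\bfe^\top x)(\delta^\top x)\}$ holds pointwise with $\sgn(0)=0$. The identity $\sgn(ab)\,b=|b|\sgn(a)$ also holds when either factor vanishes. So your two representations are exact identities, not almost-sure ones, and no null-set argument is needed.
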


\textit{Proof of \Cref{lem:jacobian-block}.}
Let $\nabla_1\mu_k(a,b)$ and $\nabla_2\mu_k(a,b)$ denote the $d\times d$ Jacobian matrices of $\mu_k$ with respect to its first and second arguments, respectively.  Define
$  A := \nabla_1\mu_1(e^*,-e^*)$ and $ B := \nabla_2\mu_1(e^*,-e^*)$. The full $2d\times 2d$ Jacobian of $(\dot e_1,\dot e_2) = (\mu_1,\mu_2)$ at $(e_1,e_2)=(e^*,-e^*)$ is
\[
J_{(e_1,e_2)} = \begin{pmatrix} \nabla_1\mu_1 & \nabla_2\mu_1 \\ \nabla_1\mu_2 & \nabla_2\mu_2 \end{pmatrix}\bigg|_{(e^*,-e^*)}.
\]
The first row is $(A,\;B)$ by definition.  We now show the second row is $(B,\;A)$.

Using $\mu_2(e_1,e_2) = \mu_1(e_2,e_1)$ (i.e.,   exchange symmetry (a)), we have:
\[
\nabla_1\mu_2(e_1,e_2) = \nabla_2\mu_1(e_2,e_1), \quad \nabla_2\mu_2(e_1,e_2) = \nabla_1\mu_1(e_2,e_1).
\]
At $(e_1,e_2)=(e^*,-e^*)$, we further obtain:
\begin{equation}\label{eq:exchange-deriv}
	\nabla_1\mu_2(e^*,-e^*) = \nabla_2\mu_1(-e^*,e^*), \qquad
	\nabla_2\mu_2(e^*,-e^*) = \nabla_1\mu_1(-e^*,e^*).
\end{equation}
These express the second-row blocks in terms of $\mu_1$-derivatives evaluated at $(-e^*,e^*)$. 

Using $\mu_1(-e_1,-e_2) = -\mu_1(e_1,e_2)$ (i.e., negation antisymmetry (b)), we have:
\[
\frac{\partial}{\partial e_1}\mu_1(-e_1,-e_2) = -\nabla_1\mu_1(-e_1,-e_2)= -\nabla_1\mu_1(e_1,e_2),
\]
which implies
$\nabla_1\mu_1(-e_1,-e_2) = \nabla_1\mu_1(e_1,e_2)$
for all $(e_1,e_2)$.  Similarly, $\nabla_2\mu_1(-e_1,-e_2) = \nabla_2\mu_1(e_1,e_2)$.
At $(e_1,e_2) = (e^*,-e^*)$, we then have:
\begin{equation}\label{eq:negation-deriv}
	\nabla_1\mu_1(-e^*,e^*) = \nabla_1\mu_1(e^*,-e^*) = A, \qquad
	\nabla_2\mu_1(-e^*,e^*) = \nabla_2\mu_1(e^*,-e^*) = B.
\end{equation}

By combining~\eqref{eq:exchange-deriv} and~\eqref{eq:negation-deriv}, we also have:
$\nabla_1\mu_2(e^*,-e^*) = B$ and $\nabla_2\mu_2(e^*,-e^*) = A$.
Therefore the full $2d\times 2d$ Jacobian in $(e_1,e_2)$ coordinates is
\begin{equation}\label{eq:J-e1e2}
	J_{(e_1,e_2)} = \begin{pmatrix} A & B \\ B & A \end{pmatrix}.
\end{equation}

For analytical convenience, we show an equivalent transformation of the Jacobian.  In the midpoint--half-gap coordinates, $e_1 = \bfe+\delta$ and $e_2 = \bfe-\delta$.
Applying the chain rule, we transform the Jacobian as $J_{(\bfe,\delta)} = P\,J_{(e_1,e_2)}\,P^{-1}$, where $P = \frac{\partial(\bfe,\delta)}{\partial(e_1,e_2)} = \frac{1}{2}\bigl(\begin{smallmatrix} I & I \\ I & -I \end{smallmatrix}\bigr)$.  Carrying out the multiplication:
\begin{align*}
	P\begin{pmatrix} A & B \\ B & A \end{pmatrix}P^{-1}
&	= \frac{1}{2}\begin{pmatrix} I & I \\ I & -I \end{pmatrix}
	\begin{pmatrix} A & B \\ B & A \end{pmatrix}
	\begin{pmatrix} I & I \\ I & -I \end{pmatrix} \\
&	= \frac{1}{2}\begin{pmatrix} A+B & A+B \\ A-B & -(A-B) \end{pmatrix}
	\begin{pmatrix} I & I \\ I & -I \end{pmatrix}  \\
&	= \begin{pmatrix} A+B & 0 \\ 0 & A-B \end{pmatrix}.
\end{align*}
Therefore
\begin{equation*} 
	J_{(\bfe,\delta)} = \begin{pmatrix} J^{\mathrm{sym}} & 0 \\ 0 & J^{\mathrm{anti}} \end{pmatrix},
	\quad\text{where}\quad
	J^{\mathrm{sym}} = A+B,\quad J^{\mathrm{anti}} = A-B.
\end{equation*}
\hfill$\square$

\section{Detection with Higher-Order Competitive Loss}\label{sec:K-general}

The formal detection results in the main text use the $2$-competitive loss ($K=2$).  A natural question, raised at the end of \Cref{sec:detection} of the main text, is what happens under a higher-order competitive loss with more than two pieces; in particular, whether adding a third competing piece ($K=3$) can improve power.  The following result characterizes the performance of the $3$-competitive loss in the scalar case.  

\begin{proposition}\label{prop:K3-scalar}
	Consider the scalar $3$-competitive loss with
	$x\sim N(0,\sigma_x^2)$ and
	$\varepsilon\sim N(0,\sigma_\varepsilon^2)$.
	
	(i) Under $H_0$, every nonzero symmetric ODE path converges to
	$\theta^{\mathrm{true}}+(\sigma_\varepsilon/\sigma_x)(-c_3,0,c_3)$; its normalized outer gap is $2c_3$, where $c_3\approx 1.320$ is a constant.
	
	(ii) Under $H_1$, suppose the two true slopes are
	$\theta_0-a$ and $\theta_0+a$ with equal probabilities, independently of $x$, and let
	$\delta_*=2a\sigma_x/\sigma_\varepsilon$.
	For all sufficiently small $\delta_*>0$, the ODE has a locally attracting symmetric equilibrium
	$\theta_0+(\sigma_\varepsilon/\sigma_x)(-A(\delta_*),0,A(\delta_*))$, where
	$
	2A(\delta_*)
	=2c_3-0.2502\,\delta_*^2+O(\delta_*^4).
	$
	Suppose, as $\eta\to0^+$ with $T\asymp\eta^{-p}$ for some fixed $p\in(1,2)$, that the terminal iterates satisfy
	$
	\bigl\|(\theta_{1,T},\theta_{2,T},\theta_{3,T})
	-\{\theta_0\mathbf 1_3+(\sigma_\varepsilon/\sigma_x)(-A(\delta_*),0,A(\delta_*))\}\bigr\|=o_p(1),
	$
	$\hat\sigma_\varepsilon\xrightarrow{p}\sigma_\varepsilon$, and the fitted null critical value converges to $2c_3$.  Then
	$\Pr(\textup{reject}\mid H_1)\to0$.
\end{proposition}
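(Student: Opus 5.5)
Normalize first: set $\sigma_x=\sigma_\varepsilon=1$ by rescaling $x\mapsto x/\sigma_x$, $\varepsilon\mapsto\varepsilon/\sigma_\varepsilon$, and write the three pieces in error coordinates around $\theta^{\mathrm{true}}$ (or $\theta_0$ under $H_1$). We restrict to the symmetric invariant subspace $(e_1,e_2,e_3)=(A,0,-A)$. The residuals are then $\varepsilon+Ax$, $\varepsilon$, and $\varepsilon-Ax$.

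\emph{Winner regions.} Piece~3 wins when $\varepsilon x>A x^2/2$. Piece~1 wins when $\varepsilon x<-Ax^2/2$. The middle piece wins otherwise.

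\emph{Null drift for part (i).} Invariance of the subspace follows from the $(x,\varepsilon)\mapsto(x,-\varepsilon)$ symmetry, as in \Cref{lem:drift-symmetry}. The middle drift vanishes because its region is symmetric in $\varepsilon$. The outer drift is
\[
\dot A=\E\bigl[(\varepsilon-Ax)x\,\ind\{\varepsilon x>Ax^2/2\}\bigr]=:F(A).
\]
Using Gaussian polar coordinates for $(x,\varepsilon)$, this reduces to a one-variable expression in $\arctan(A/2)$ and $\sqrt{1+A^2/4}$. We will check three properties of $F$:
\begin{itemize}
\item[(a)] $F(0^+)>0$, since $F(0^+)=\E[(\varepsilon x)^+]=1/\pi$.
\item[(b)] $F(A)\to-\infty$ linearly as $A\to\infty$.
\item[(c)] $F$ has a unique positive zero $c_3\approx1.320$, with $F'(c_3)<0$.
\end{itemize}
Uniqueness in (c) will be shown by proving $F$ is concave on $(0,\infty)$, or by a sign argument on $F'$ from the closed form; $c_3$ itself is located numerically. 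Together these give convergence of every nonzero symmetric path to $\pm c_3$, hence the outer gap $2c_3$.

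\emph{Transverse stability.} We also need the transverse Jacobian (perturbations of the middle piece and of the midpoint) at $c_3$ to be Hurwitz. This is a finite Gaussian integral computation analogous to \Cref{lem:equiv}.

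\emph{Alternative drift for part (ii).} Under $H_1$, write $\varepsilon$ replaced by $\varepsilon\pm(\delta_*/2)x$ with equal probability. The symmetric subspace stays invariant because the mixture is symmetric about $\theta_0$. The outer drift becomes
\[
F_{\delta_*}(A)=\tfrac12\bigl[F_{+}+F_{-}\bigr],
\]
which is even in $\delta_*$ and smooth in $(A,\delta_*)$ near $(c_3,0)$. Since $F'(c_3)\neq0$, the implicit function theorem gives a branch
\[
A(\delta_*)=c_3-\frac{\partial_{\delta_*^2}F}{F'(c_3)}\,\delta_*^2+O(\delta_*^4).
\]
The coefficient $0.2502$ comes from evaluating the second $\delta_*$-derivative of $F_{\delta_*}$ at $(c_3,0)$. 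Differentiating under the expectation yields a Gaussian integral involving the extra $x^2$ terms and the boundary contributions of the moving region; it is computed in closed form and then numerically.

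\emph{Local attractivity under $H_1$.} The full $3\times3$ Jacobian depends continuously on $\delta_*$ and is Hurwitz at $\delta_*=0$ by part (i). Hence it stays Hurwitz for small $\delta_*$.

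\emph{Conclusion of (ii).} By the assumed terminal concentration and scale consistency, $\hat S_T^{\mathrm{outer}}\to_p 2A(\delta_*)<2c_3$. The critical value tends to $2c_3$, so rejection has probability tending to zero.

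\emph{Main obstacle.} The hardest step is verifying uniqueness of the null zero and the Hurwitz property of the transverse Jacobian. Both rest on explicit but messy Gaussian sector integrals with boundaries $\varepsilon x=\pm Ax^2/2$. The plan is to reduce each to functions of the angle $\arctan(A/2)$ and certify the required signs by monotonicity, checking numerically where analytic bounds are tight.
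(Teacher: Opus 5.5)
\emph{Gap: the shape of the null outer drift $F$ is misdescribed, and the proposed uniqueness argument fails.}

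Otherwise your route is the paper's. Its integrals use the $x^2$-tilted law of $\varepsilon/x$, with density $g(r)=2/\{\pi(1+r^2)^2\}$. This is what your polar coordinates give after integrating out the radius, so $F(A)=\int_{A/2}^\infty(v-A)g(v)\,dv$.

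Claim (b) is false. With $u=A/2$,
\[
F(A)=\frac1\pi\Bigl[\frac{1+2u^2}{1+u^2}-2u\arctan(1/u)\Bigr]=-\frac{1}{3\pi u^2}+O(u^{-4}),
\]
so $F(A)\to0^-$, not $-\infty$. This is the $K=2$ intuition carried over incorrectly. For $K=3$, once $A$ is large the middle piece wins almost every observation. The outer cell $\{\varepsilon/x>A/2\}$ then has tilted mass of order $u^{-3}$, so the outer piece gets almost no updates.

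The concavity route also fails. One computes $F''(A)=(3-u^2)/\{2\pi(1+u^2)^3\}$, so $F$ is strictly convex on $(0,2\sqrt3)$, which contains $c_3$. Concavity is in fact impossible: with $F(0^+)=1/\pi$ and $F'(0^+)=-1/2$, a concave $F$ would vanish before $A=2/\pi<c_3$. $F'$ also changes sign, so a plain monotonicity argument fails too.

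\emph{The fix: use the actual shape of $F$.}
\begin{itemize}
\item Because $F''$ changes sign exactly once, $F'$ increases from $-1/2$ on $(0,2\sqrt3)$. It then decreases to $0^+$, since $F'(A)\sim 1/(3\pi u^3)$.
\item Hence $F'$ has a single zero $A_0$ (numerically about $2.4$). $F$ decreases on $(0,A_0)$ and then increases toward $0^-$.
\item So $F$ has exactly one positive zero $c_3\in(0,A_0)$, with $F>0$ on $(0,c_3)$ and $F<0$ on $(c_3,\infty)$.
\item Also $F'(c_3)<0$; numerically $F'(c_3)\approx-0.066$, the paper's $Q_A$.
\end{itemize}
This sign pattern is what the paper uses ("outward below the root, inward above") to get convergence of every nonzero symmetric path. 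It also supplies the nondegeneracy your implicit-function step needs.

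With this replacement, the rest of your plan goes through as in the paper: the transverse Jacobian block, the second-order coefficient (which equals $g(b)+bg'(b)$ at $b=c_3/2$ in tilted form), Hurwitz persistence by continuity, and the continuous-mapping blindness argument.
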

\textit{Proof of \Cref{prop:K3-scalar}.} 
To prove the result, it suffices to reduce the three-candidate dynamics to a population three-means problem for a tilted ratio law, establish the null equilibrium and its stability, and then expand the outer equilibrium under the balanced alternative.  For notational convenience, set $s_0:=\sigma_\varepsilon/\sigma_x$, write the three ordered candidate slopes as $q_1<q_2<q_3$, and let $W=y/x$.

We first derive the reduced ODE.  Except on the null event $x=0$, the winning candidate is the one closest to $W$, so its population ODE can be written as
\[
\dot q_k
=\E\!\left[x^2(W-q_k)\ind\{W\in C_k(q)\}\right],
\]
where $C_k(q)$ is the Voronoi cell of $q_k$.  Under the probability measure obtained by tilting the data law by $x^2/\E[x^2]$, the normalized ratio
$R=\varepsilon/(s_0x)$ has density
$g(r)=2/\{\pi(1+r^2)^2\}$.  For notational convenience, let $G(u)=\int_{-\infty}^u g(v)\,dv$ and $M(u)=\int_{-\infty}^u v g(v)\,dv$.  Hence, after centering and dividing slopes by $s_0$, the ODE equilibria are exactly the population three-means centroids of a shifted copy, or a mixture of shifted copies, of $g$.

We next prove part~(i).  For a symmetric candidate profile $(-A,0,A)$, the right-cell boundary is $A/2$, and a component shifted by $s$ contributes
\begin{equation}\label{eq:K3-right-drift}
	\mathcal R(A,s)
	=(s-A)\{1-G(A/2-s)\}-M(A/2-s)
\end{equation}
to the right-candidate drift.  Under $H_0$, the symmetric equilibrium equation is
$\mathcal R(A,0)=0$, or
$A\{1-G(A/2)\}+M(A/2)=0$, whose unique positive solution is
$A=c_3\approx1.32014$.  Note that the right drift points outward below this root and inward above it, proving convergence of every nondegenerate symmetric trajectory.  By directly differentiating the three centroid drifts at $(-c_3,0,c_3)$, we obtain
\[
J_0=
\begin{pmatrix}
	\tfrac{c_3g(c_3/2)}{4}-\{1-G(c_3/2)\} & \tfrac{c_3g(c_3/2)}{4} & 0\\[2pt]
	\tfrac{c_3g(c_3/2)}{4} & \tfrac{c_3g(c_3/2)}{2}-\{2G(c_3/2)-1\} & \tfrac{c_3g(c_3/2)}{4}\\[2pt]
	0 & \tfrac{c_3g(c_3/2)}{4} & \tfrac{c_3g(c_3/2)}{4}-\{1-G(c_3/2)\}
\end{pmatrix}.
\]
Its eigenvalues are approximately
$-0.50733$, $-0.06601$, and $-0.01892$, so the equilibrium is locally attracting in the full ordered three-candidate ODE.  Rescaling to the original slopes gives part~(i).

We then expand the outer equilibrium under the balanced alternative.  The normalized component shifts are
$s=\pm\lambda$, where $\lambda=a/s_0=\delta_*/2$.  Symmetry keeps the middle candidate at zero, and the outer equilibrium solves
\begin{equation}\label{eq:K3-H1-equilibrium}
	Q(A,\lambda)
	:=\frac{\mathcal R(A,\lambda)+\mathcal R(A,-\lambda)}2=0.
\end{equation}
For notational convenience, let $b=c_3/2$.  At $(A,\lambda)=(c_3,0)$,
\[
Q_A=-\{1-G(b)\}+\frac{c_3g(b)}4
\approx-0.066013,
\qquad
Q_{\lambda\lambda}=g(b)+b g'(b)
\approx-0.066060.
\]
By applying the implicit-function theorem together with the symmetry in $\lambda$, we obtain
\[
A(\lambda)
=c_3-\frac{Q_{\lambda\lambda}}{2Q_A}\lambda^2+O(\lambda^4)
=c_3-0.50036\,\lambda^2+O(\lambda^4).
\]
Moreover, because the three eigenvalues of $J_0$ are strictly negative and the drift Jacobian varies continuously with $\lambda$, this equilibrium remains locally attracting for all sufficiently small $\lambda$.

We finally prove the blindness conclusion.  Substituting $\lambda=\delta_*/2$ yields
\[
2A(\delta_*)
=2c_3-0.25018\,\delta_*^2+O(\delta_*^4).
\]
For each sufficiently small fixed $\delta_*>0$, this limiting outer gap lies a fixed distance below the null center $2c_3$.  Under the concentration and vanishing calibration-error conditions in the proposition, the statistic converges to $2A(\delta_*)$ while the upper-tail critical value converges to $2c_3$.  Therefore, $\Pr(\textup{reject}\mid H_1)\to0$.  Note that numerically solving the exact equation~\eqref{eq:K3-H1-equilibrium} subject to $A=c_3$ gives the nonzero crossing
$\lambda\approx0.67069$, or $\delta_*\approx1.34138$; the full drift Jacobian there has eigenvalues approximately
$-0.35142$, $-0.16080$, and $-0.01294$, confirming that this blind-point equilibrium remains locally attracting, which completes the proof.
\hfill$\square$

\begin{figure}[h!]
	\centering
	\begin{subfigure}{0.35\textwidth}
		\centering
		\includegraphics[width=\linewidth]{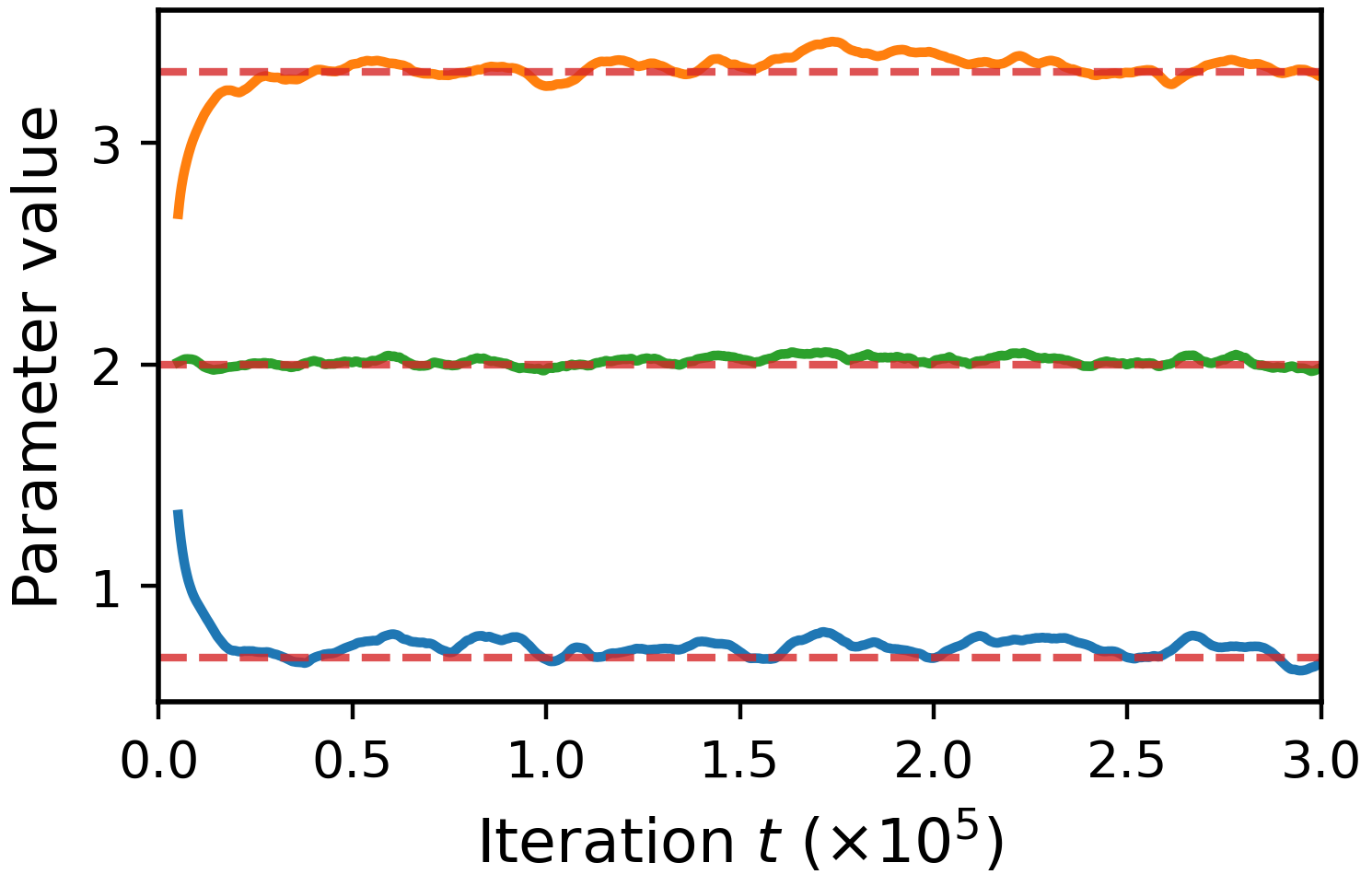}
		\caption{$H_0$: trajectories}
		\label{fig:K3-H0-traj}
	\end{subfigure}
	\qquad
	\begin{subfigure}{0.35\textwidth}
		\centering
		\includegraphics[width=\linewidth]{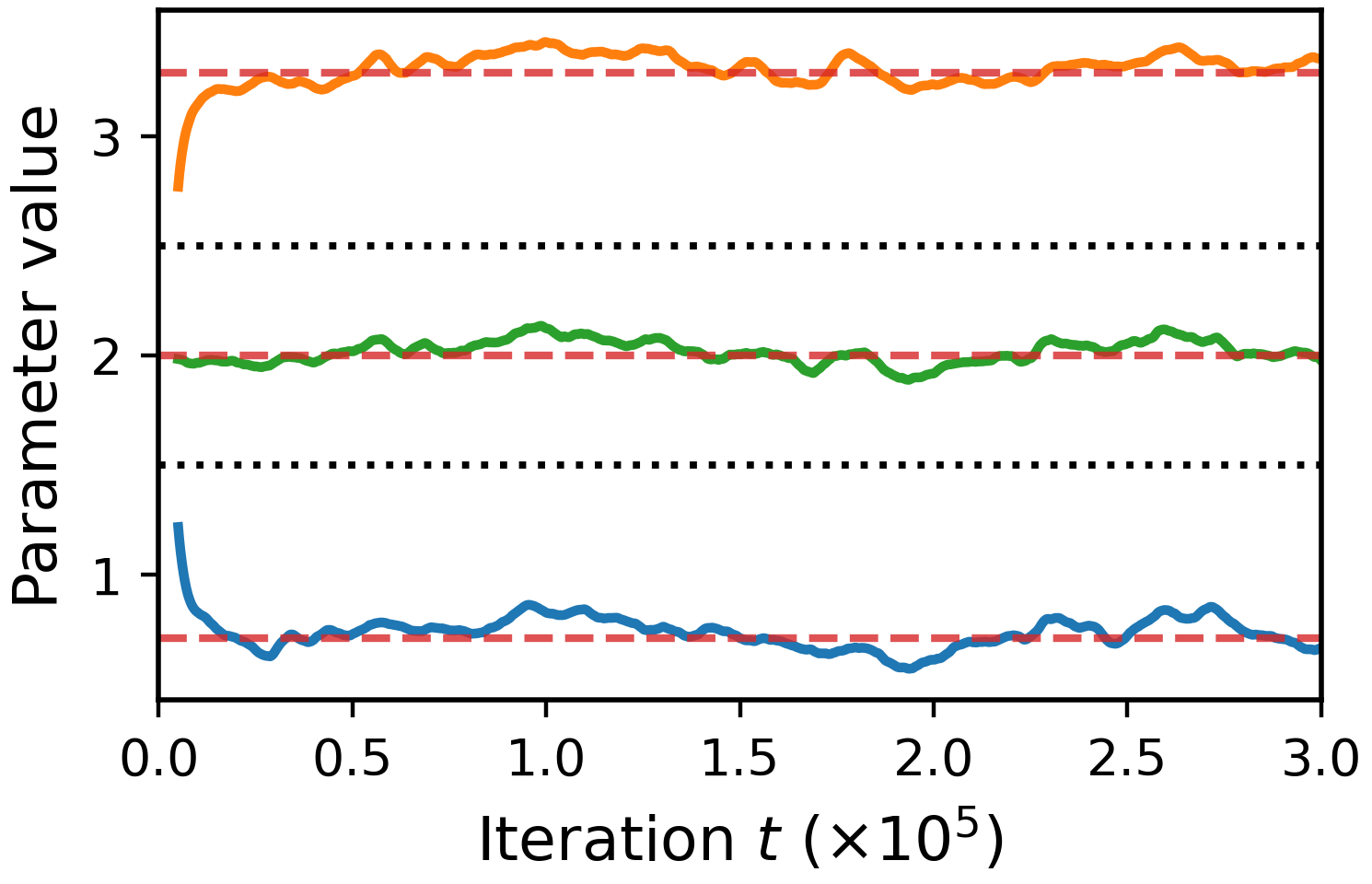}
		\caption{$H_1$: trajectories}
		\label{fig:K3-H1-traj}
	\end{subfigure}
	\par\medskip
	\begin{subfigure}{0.35\textwidth}
		\centering
		\includegraphics[width=\linewidth]{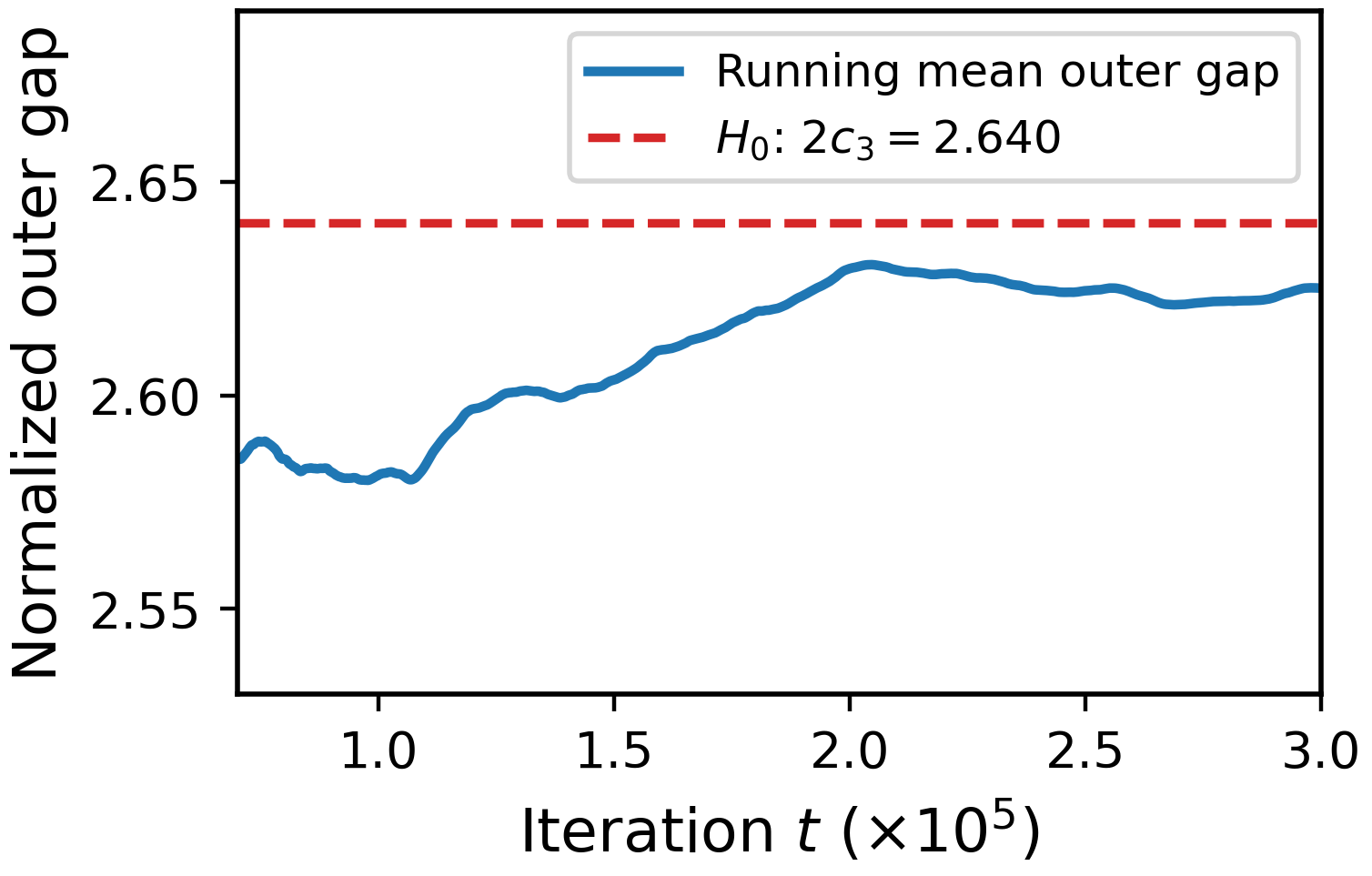}
		\caption{$H_0$: outer gap}
		\label{fig:K3-H0-gap}
	\end{subfigure}
	\qquad
	\begin{subfigure}{0.35\textwidth}
		\centering
		\includegraphics[width=\linewidth]{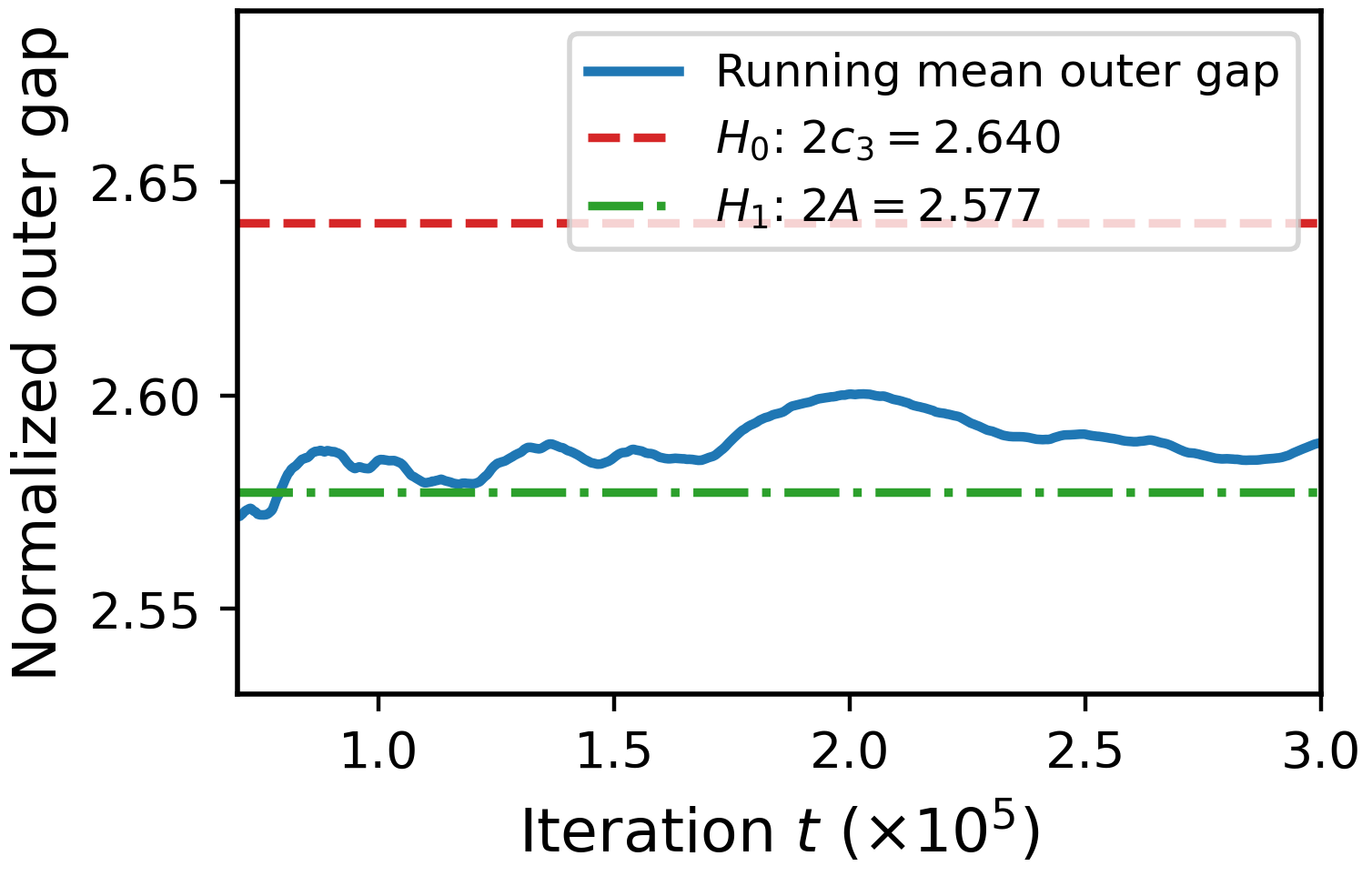}
		\caption{$H_1$: outer gap}
		\label{fig:K3-H1-gap}
	\end{subfigure}
	\caption{$K=3$ SGD under $H_0$ and $H_1$ ($\theta_0=2$, $\sigma_\varepsilon=\sigma_x=1$, $\eta=0.002$, $T=300{,}000$).  Under $H_0$, panels~(a) and~(c) show convergence toward $\theta_0+(-c_3,0,c_3)$ and outer gap $2c_3\approx2.640$.  Under the balanced alternative with true slopes $1.5$ and $2.5$ ($\delta_*=1$), panels~(b) and~(d) show convergence toward $\theta_0+(-A,0,A)$ and outer gap $2A\approx2.577<2c_3$.  Red dashed lines mark the ODE equilibria in panels~(a)--(b) and the null gap in panels~(c)--(d); black dotted lines in panel~(b) mark the true slopes, and the green dash-dotted line in panel~(d) marks the alternative gap.  Gap curves are post-burn-in running means.}
	\label{fig:K3-H01}
\end{figure}

\Cref{fig:K3-H01} illustrates the contrast in \Cref{prop:K3-scalar}: the $K=3$ outer gap is smaller under the displayed alternative than under the null.  This is a structural failure rather than merely a loss of finite-sample precision.  Under the balanced alternative, both adjacent gaps equal $A(\delta_*)$ and the outer gap equals $2A(\delta_*)$, so every natural one-sided summary of the pairwise gaps initially moves in the wrong direction.  The exact equilibrium equation also has a nonzero numerical crossing at $\delta_*\approx1.341$, where the locally attracting alternative equilibrium has the same leading gap pattern $\{c_3,c_3,2c_3\}$ as $H_0$.  Thus neither reversing the tail nor using another pairwise-gap summary yields an omnibus $K=3$ test.  This failure holds with known $\sigma_\varepsilon$ and therefore cannot be repaired by changing the scale estimator.

The third competitor truncates the winner regions of the outer pieces, so the $K=3$ alternative outer gap need not exceed its null value.  Consequently, a one-sided test based on the $K=3$ outer gap is not an omnibus detector.

\Cref{fig:K4-H01} provides an exploratory numerical comparison for $K=4$.  Its four pieces generate several pairwise-gap levels that move in different directions under $H_1$, illustrating that the geometry becomes still richer as $K$ increases; formal analysis of the higher-order loss is limited to $K=3$.

\paragraph*{Numerical $K=4$ Comparison.}

\begin{figure}[H]
	\centering
	\begin{subfigure}{0.35\textwidth}
		\centering
		\includegraphics[width=\linewidth]{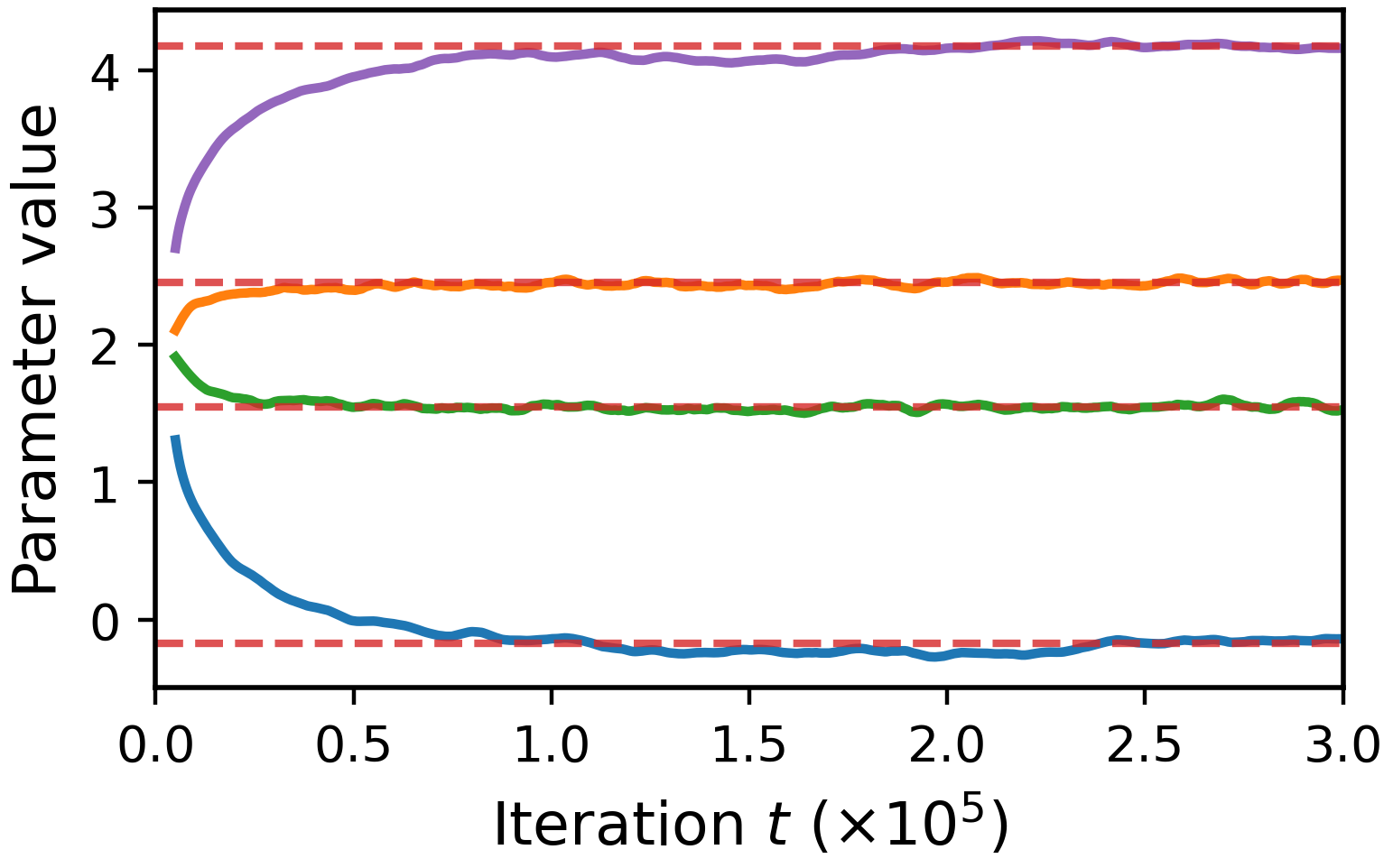}
		\caption{$H_0$: trajectories}
		\label{fig:K4-H0-traj}
	\end{subfigure}
	\qquad
	\begin{subfigure}{0.35\textwidth}
		\centering
		\includegraphics[width=\linewidth]{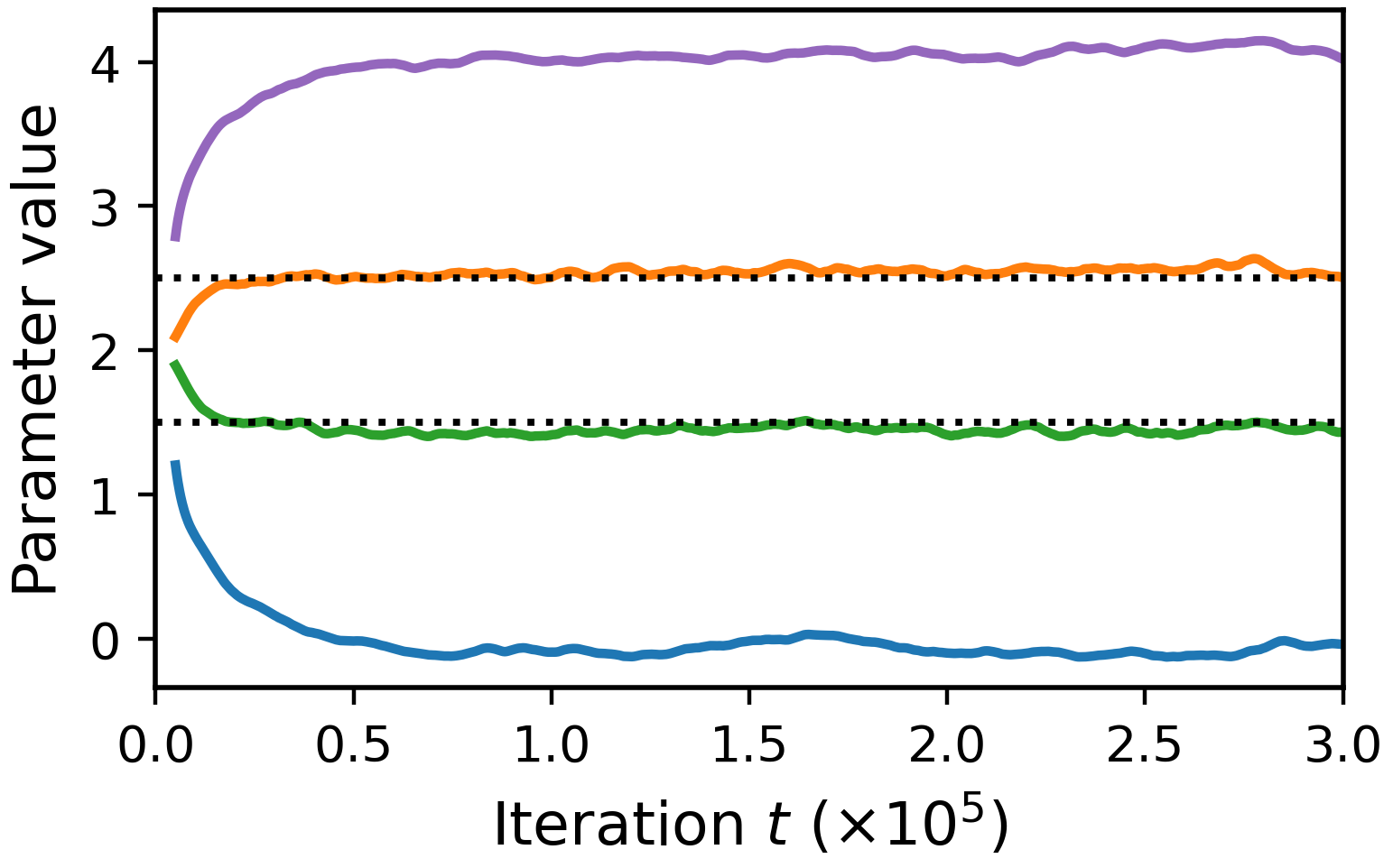}
		\caption{$H_1$: trajectories}
		\label{fig:K4-H1-traj}
	\end{subfigure}
	\par\medskip
	\begin{subfigure}{0.35\textwidth}
		\centering
		\includegraphics[width=\linewidth]{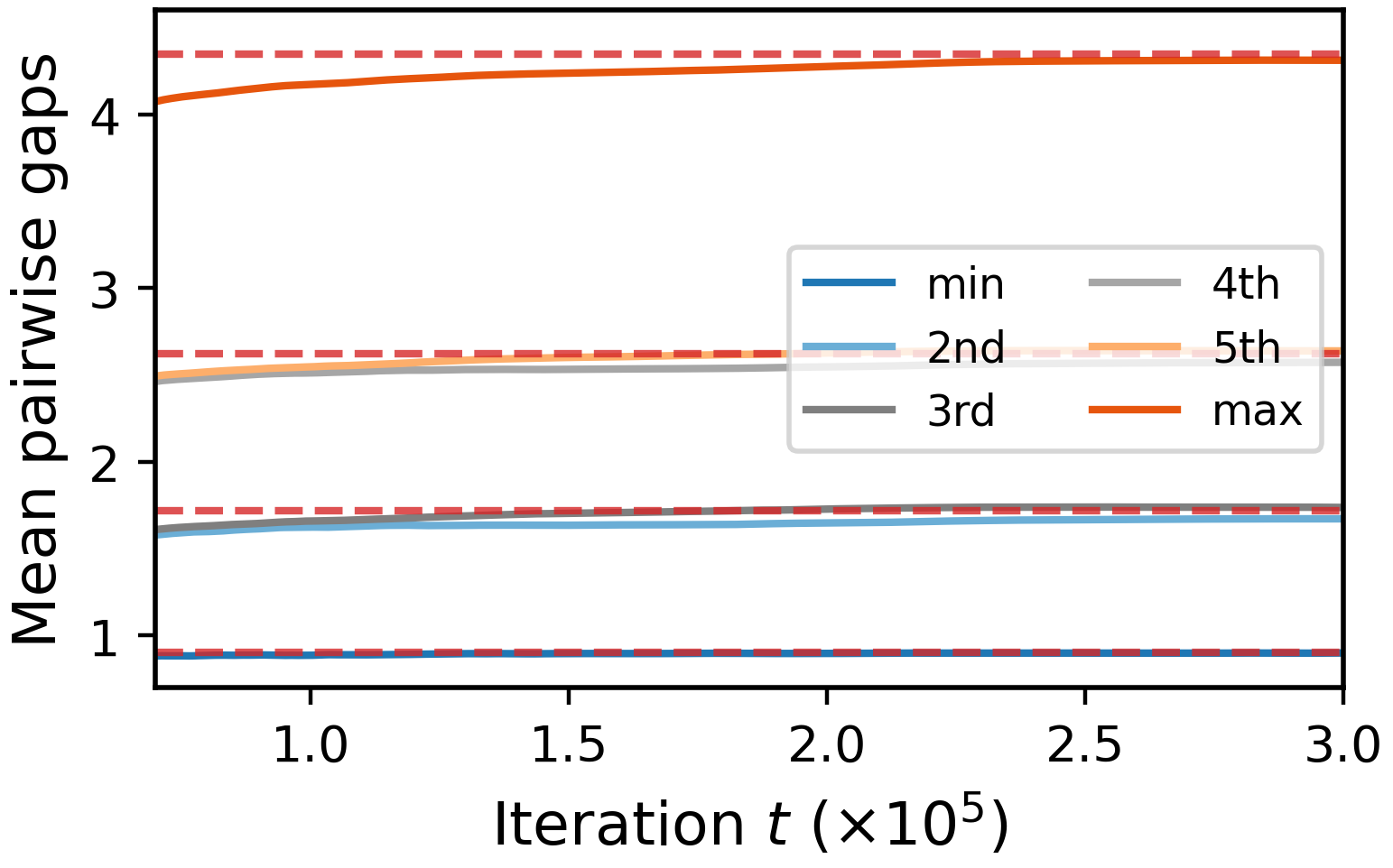}
		\caption{$H_0$: pairwise gaps}
		\label{fig:K4-H0-gaps}
	\end{subfigure}
	\qquad
	\begin{subfigure}{0.35\textwidth}
		\centering
		\includegraphics[width=\linewidth]{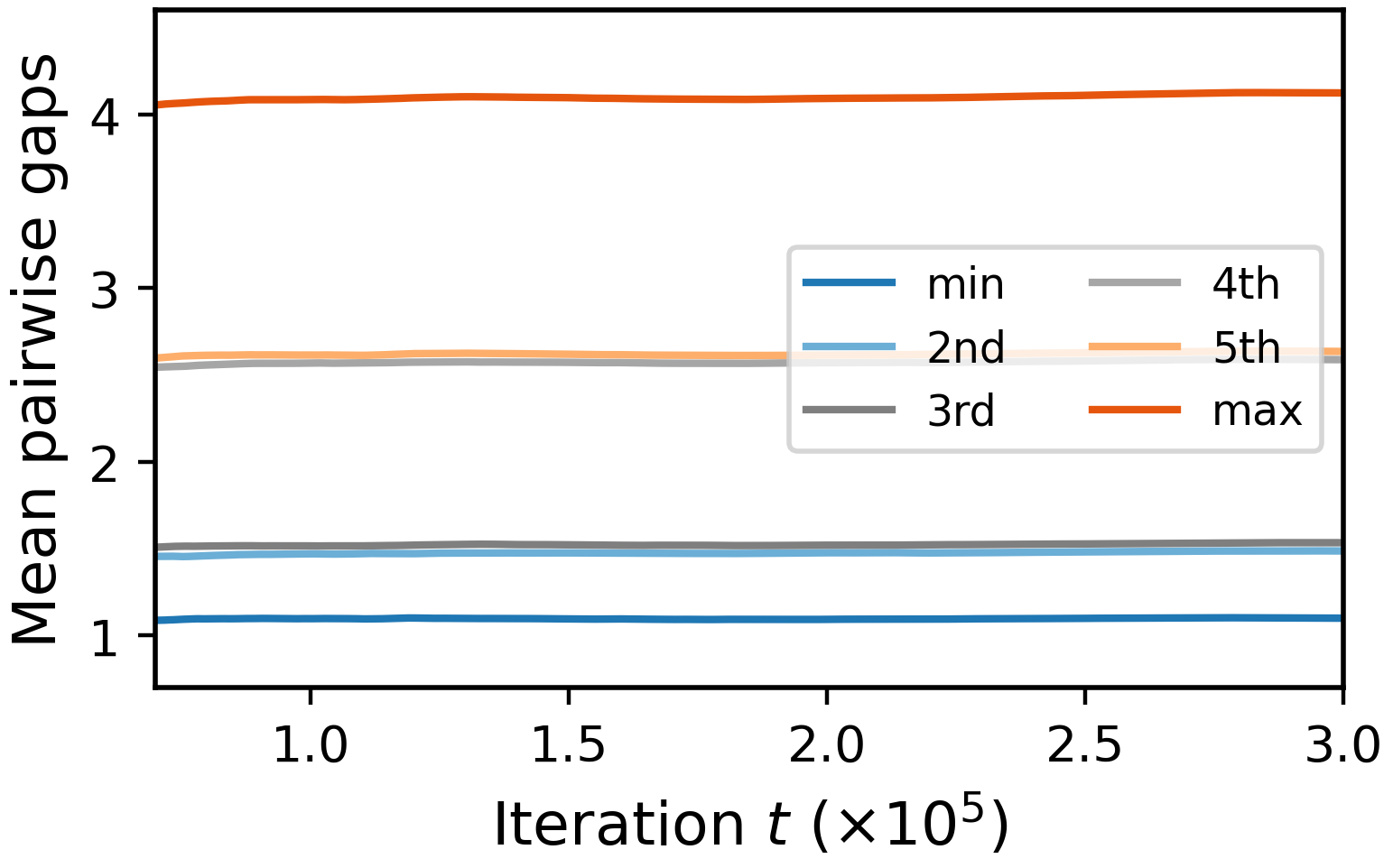}
		\caption{$H_1$: pairwise gaps}
		\label{fig:K4-H1-gaps}
	\end{subfigure}
	\caption{Exploratory scalar $K=4$ SGD under $H_0$ and balanced $H_1$, using the settings of \Cref{fig:K3-H01}.  Red dashed lines in panels~(a) and~(c) mark the numerical $H_0$ equilibrium levels; black dotted lines in panel~(b) mark the two true slopes under $H_1$.  Panels~(c)--(d) show post-burn-in running means of the six ordered pairwise gaps.  Some gap ranks increase under $H_1$ while others decrease, so no single directional comparison captures the change.}
	\label{fig:K4-H01}
\end{figure}

\Cref{fig:K4-H01} uses the same scalar Gaussian null and balanced alternative as \Cref{fig:K3-H01}, but fits four competing pieces.  Under $H_0$, the four estimates settle near two radii and produce four pairwise-gap levels.  Under $H_1$, the smallest gaps increase while the largest gap decreases, so the gap ranks do not share a common direction of departure from their null values.

\section{Additional Synthetic Robustness Results}\label{app:robustness-experiments}

\subsection{Finite-Sample Audit of the Repeated-Data Protocol}

The asymptotic size result in \Cref{cor:CI} uses an independent preliminary sample of size $n_0\asymp T$ and a fresh detection stream.  The real-data implementation instead has only $n$ distinct observations and recycles them for 1{,}000 passes.  Increasing the number of updates in this way does not increase the effective sample size of the preliminary regression or scale estimator.  To measure the resulting finite-sample discrepancy directly, we generated 300 independent datasets under the scalar Gaussian null at four sample sizes spanning the smaller real applications.  For each dataset we applied the literal real-data settings: $\eta=0.0005$, 1{,}000 passes, two-fold cross-fitted component-scale estimation, the Gaussian analytic scalar critical value, and averaging over 10 SGD runs.  We also evaluated the matched pooled-residual, distribution-adaptive calibration under a Taylor-style binary design with $\Pr(X=1)=0.484$.

\begin{table}[ht]
\centering
\caption{Finite-sample null rejection rates under the repeated-data real-application protocol.}
\label{tab:finite-n-reuse}
\footnotesize
\setlength{\tabcolsep}{4pt}
\begin{tabular}{@{}l r l r r r@{}}
\toprule
Design & $n$ & Calibration & Rejections & Rate & MC s.e. \\
\midrule
Gaussian scalar & 150   & component/Gaussian analytic & 176/300 & 0.587 & 0.028 \\
Gaussian scalar & 333   & component/Gaussian analytic & 166/300 & 0.553 & 0.029 \\
Gaussian scalar & 777   & component/Gaussian analytic & 138/300 & 0.460 & 0.029 \\
Gaussian scalar & 1{,}840 & component/Gaussian analytic &  91/300 & 0.303 & 0.027 \\
Binary, $\Pr(X=1)=0.484$ & 777 & pooled/distribution-adaptive & 8/300 & 0.027 & 0.009 \\
\bottomrule
\end{tabular}
\end{table}

The nominal level is 0.05.  The continuous-scalar analytic rule therefore over-rejects severely at all four audited sample sizes, while the binary matched rule is conservative in the audited design.  These are protocol-level null experiments rather than dataset-specific nulls, so they do not by themselves supply corrected $p$-values for \Cref{tab:confounder-all}.  They do show that the fresh-stream experiment cannot validate inference after aggressive reuse of a finite dataset.  Until a full-pipeline finite-sample calibration (for example, a design-matched parametric bootstrap that repeats nuisance estimation, data reuse, averaging, and statistic selection) is completed for each application, the affected Stage~1 analytic $p$-values should be treated as provisional.

We first assess rejection power using the one-sided rule rather than the null center alone.  For $\Delta=0.5$, each of 13 dimension--endogeneity settings uses 500 independent $H_0$ runs to estimate its upper critical value and 100 independent $H_1$ runs to estimate power, with the component-scale estimator recomputed in every run.

The complementary signal-strength experiment fixes $K=K^*=2$, uses balanced slopes $\pm(\Delta,0,\ldots,0)$, and varies $d\in\{1,2,5,10\}$.  \Cref{fig:signal-strength} shows that the component-normalized mean gap remains above $4/\pi$ and nearly invariant to $d$: it ranges from $1.378$ to $1.405$ for $\Delta=0.3$, from $1.572$ to $1.597$ for $\Delta=0.5$, and from $2.273$ to $2.285$ for $\Delta=1$.

\begin{figure}[ht]
	\centering
	\includegraphics[width=0.62\textwidth]{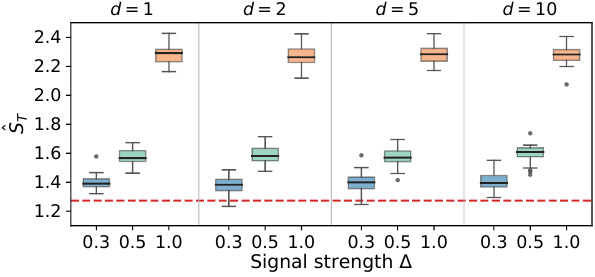}
	\caption{Alternative separation as signal strength varies.  Boxplots show seed-level distributions of $\hat S_T$ for $K=K^*=2$, and the dashed line marks the Gaussian null center $4/\pi\approx1.273$.}
	\label{fig:signal-strength}
\end{figure}

To assess robustness to endogeneity (\Cref{prop:endogeneity}), we draw $(x_t,\varepsilon_t)$ jointly Gaussian with $\mathrm{Cov}(x_t,\varepsilon_t)=\rho\,\sigma_x\sigma_\varepsilon$ and vary~$\rho$, using the default synthetic setup in the main text.  \Cref{fig:univ-endo} shows the resulting seed-level distributions after projected-noise calibration.

\begin{figure}[ht]
	\centering
	\includegraphics[width=0.58\textwidth]{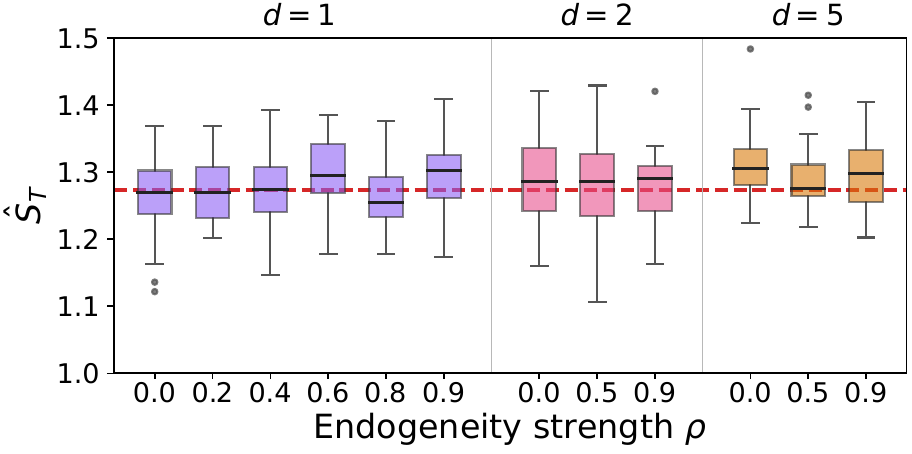}
	\caption{Null calibration under joint Gaussian endogeneity.  Boxplots show seed-level distributions of $\hat S_T$, and the dashed line marks the Gaussian null center $4/\pi\approx1.273$.}
	\label{fig:univ-endo}
\end{figure}

We next explore the impact of heteroskedasticity and quadratic mean misspecification using a scalar setup with $x\sim N(0,1)$, $\sigma_\varepsilon=1$, step size $\eta=0.005$, and $T=200{,}000$.  Each setting uses 200 independent trials.  Under $H_0$, the heteroskedastic designs set $\Var(\varepsilon\mid x)$ to $1+\lambda_{\mathrm{het}}|x|$ or $1+\lambda_{\mathrm{het}}x^2$ for $\lambda_{\mathrm{het}}\in\{0.5,1,2\}$, $1+2/(1+x^2)$ for the inverse-quadratic design, and $x^2$ for the multiplicative design.  For quadratic mean misspecification, the data follow $y=x^\top\theta^{\mathrm{true}}+\beta x^2+\varepsilon$ with $\beta\in\{0.2,0.5,1.0\}$, while the detector fits only a linear mean.  The $H_1$ checks use balanced slopes $1\pm\Delta/2$: $\Delta\in\{0.5,1,2\}$ under homoskedasticity and $\Delta\in\{1,2\}$ for the linear and quadratic variance designs with $\lambda_{\mathrm{het}}=1$.  The principal numerical findings and rejection-rate panels appear in \Cref{fig:synthetic-robustness} of the main text.

\section{Supplementary Results for Real-World Datasets}\label{app:realworld-details}

 \subsection{Implementation Details for the DSV Toolkit and Benchmarks}\label{app:implementation}

For Stage~1, we use constant step size $\eta = 0.0005$ with 1{,}000 passes and average $\hat S_T$ over 10 independent runs.  For continuous focal covariates, the component-noise scale is estimated by the two-fold cross-fitted variance function regression in \Cref{sec:scale-estimation}; scalar critical values use the Gaussian-component analytic approximation specified there and based on \Cref{prop:null-distribution}(i), while 499 fitted Gaussian-component bootstrap runs resample centered observed covariates and rerun the complete scale-estimation and SGD pipeline when $d\ge2$.  The originally reported rule declares rejection when the selected statistic exceeds its fitted $\hat q_{0.95}$.  Taylor and Atlanta, whose focal covariates are binary, instead use the pooled-residual statistic and its matched distribution-adaptive scalar null calibration.  Their group-assignment probabilities vary with the focal covariate, a setting covered by the generalized alternative in \eqref{eq:dgp}.  More generally, the use of 1{,}000 passes recycles the same $n$ observations and does not satisfy the independent preliminary-sample and fresh-stream protocol of \Cref{cor:CI}.  The finite-sample null audit in \Cref{tab:finite-n-reuse} finds severe over-rejection for the continuous scalar analytic implementation at $n=150$, 333, 777, and 1{,}840.  Accordingly, the Stage~1 $p$-values and decisions in the real-data table are retained as reproducible outputs of the original implementation but are provisional rather than validated 5\%-level inference.
For Stage~2, we run two-piece SGD on the original data with an intercept, using constant step size $\eta = 0.002$ for 100 passes over 10 independent runs.

For the SP-discovery comparison, we supply the nominated focal response--covariate relationship and the same set of observed candidate covariates to all procedures, including the reference confounder without identifying it during fitting.  In the Xu-style implementation of \citet{xu2018detecting}, each candidate is considered as a possible stratification, continuous candidates are discretized into quartiles, and the within-stratum slopes are compared with the pooled slope.  Motivated by the practical relaxation of the all-strata condition in \citet{alipourfard2018simpsons}, the reference reversal is counted as surfaced when a strict majority of its eligible strata reverse the pooled sign.  Eligible strata contain at least five observations, except that the application-specific verification thresholds are 200 for Power, 100 for Nitrogen deposition, and 30 for Nematodes.

For the tree benchmark, we implement the X-terminal ordering rule of \citet{shmueli2018forest} with CART.  We one-hot encode categorical candidates and fit regression trees of depths two through five using the focal covariate and all candidates, with minimum leaf size $\max\{10,\lfloor n/100\rfloor\}$.  Each path is stopped at its first split on the focal covariate.  After fitting, the reference reversal is counted as surfaced only if the reference candidate occurs above that focal split and the slope in the corresponding node reverses the pooled sign.  Thus, the reference identity is used to score the output, not to construct the tree.

For the likelihood-based detection benchmark, we implement the $m=1$ versus $m=2$ modified penalized-EM test of \citet{kasahara2015testing}.  The detector receives the same response and full Stage~1 predictor design as SGD, without candidate confounders.  We use the recommended initial split $\mathcal T=\{0.5\}$, $K=2$ modified-EM statistic, penalty constants $a_n=2.2$ for one-dimensional designs and $a_n=8.3$ for the four-dimensional California Housing design, and the restriction $\sigma_j\geq0.01\widehat\sigma$.  The restricted penalized MLE is computed from four structured EM starts followed by analytic-gradient refinement.  We calibrate each test by 199 fixed-design parametric bootstrap samples from the fitted homogeneous regression and use the plus-one bootstrap $p$-value.

As a computationally lighter EM benchmark, we also fit homogeneous and homoscedastic two-component normal mixture regressions by conventional EM~\citep{dempster1977,mclachlan2000finite}, using the same standardized response and Stage~1 predictor design.  The two-component fit uses a variance shared across components, 10 starts, and at most 250 iterations.  EM--BIC selects $m=2$ when $2(\ell_2-\ell_1)>(p+1)\log n$, where $p$ is the number of regression coefficients, including the intercept.  This comparison requires only the observed-data fits and no bootstrap calibration.

\begin{table}[!t]
	\centering
	\caption{Mixture detection and Simpson's-paradox discovery across eight datasets.}
	\label{tab:confounder-all}
	\footnotesize
	\begin{tabular}{@{}l cc@{\hspace{3pt}}c@{\hspace{3pt}}cc@{\hspace{7pt}}c ccc@{}}
		\toprule
		& \multicolumn{5}{c}{\textbf{Mixture detection}} & \multicolumn{4}{c}{\textbf{SP discovery}} \\
		\cmidrule(lr){2-6}\cmidrule(lr){7-10}
		Dataset & $\hat S_T$ & $\hat q_{0.95}$ & $p$ & KS & EM--BIC & Ref.~$Z$ (top/10) & DSV & Xu-style$^\ddagger$ & X-term.$^\ddagger$ \\
		\midrule
		Taylor$^\dagger$   & $1.880$ & $1.707$ & ${<}10^{-15}$  & \checkmark & \checkmark & Age (10/10)  & \checkmark & \checkmark & \checkmark \\
		Iris$^\dagger$     & $1.444$ & $1.356$ & ${<}10^{-12}$  & $\times$ & $\times$ & species (10/10) & \checkmark & \checkmark & $\times$ \\
		Penguins$^\dagger$ & $1.643$ & $1.371$ & ${<}10^{-15}$  & \checkmark & \checkmark & species (10/10) & \checkmark & \checkmark & \checkmark \\
		Atlanta$^\dagger$  & $1.357$ & $1.305$ & $3.7\times10^{-6}$  & \checkmark & \checkmark & org.\ (10/10)   & \checkmark & \checkmark & $\times$ \\
		Power$^\star$      & $2.110$ & $0.838$ & ${<}10^{-15}$ & \checkmark & \checkmark & country (10/10) & \checkmark & \checkmark & \checkmark \\
		CA Hous.$^\star$   & $1.683$ & $1.532$ & $0.002$  & \checkmark & \checkmark & MedInc (10/10)  & \checkmark & \checkmark & \checkmark \\
		Nematodes$^\star$  & $1.419$ & $1.335$ & ${<}10^{-10}$ & \checkmark & \checkmark & biome (10/10) & \checkmark & \checkmark & $\times$ \\
		\shortstack[l]{Nitrogen\\deposition}$^\star$ & $1.505$ & $1.358$ & ${<}10^{-15}$ & \checkmark & \checkmark & project (10/10) & \checkmark & \checkmark & \checkmark \\
		\bottomrule
	\end{tabular}
	\\[3pt]
	\begin{minipage}{\textwidth}
		\footnotesize\raggedright
		$\dagger$: established SP benchmark; $\star$: SP not previously documented to our knowledge.  The first three columns report SGD detection outputs; the finite-sample audit in \Cref{tab:finite-n-reuse} makes these decisions provisional pending full-pipeline calibration.  KS reports the bootstrap-calibrated $m=1$ versus $m=2$ homogeneity decision, while EM--BIC reports whether conventional EM selects $m=2$; neither denotes SP discovery.  ``Ref.~$Z$'' is the scientifically interpretable reference grouping, and ``top/10'' is the number of DSV Screen runs in which it is top-ranked.  A DSV checkmark requires the computed Stage~1 rejection, recovery of a candidate grouping, and verification of its reversal.  $\ddagger$: Xu-style enumeration and X-terminal CART require the focal relationship and observed candidate confounders from the outset; they are SP-discovery procedures, not direct benchmarks for candidate-free mixture detection.\par
	\end{minipage}
\end{table}

\paragraph*{Comparison results.}
The originally used SGD detection rule exceeds its fitted critical value on all eight datasets, comprising seven scalar and one multivariate design.  In light of the finite-sample audit, this statement describes the computed outputs rather than eight validated 5\%-level rejections.  KS rejects homogeneity on seven datasets and does not reject on Iris ($p=0.210$); the recommended implementation with $K=3$ modified-EM iterations gives the same decisions.  EM--BIC also selects two components on seven datasets, with Iris remaining homogeneous by BIC.  Turning to SP discovery, DSV Screen top-ranks the reference grouping on all eight datasets, and the complete procedure yields eight computed end-to-end discoveries.  Xu-style enumeration also surfaces all eight reference reversals, whereas X-terminal CART does so in five.  Iris is counted as a computed DSV discovery because species is recovered and its reversal is verified, although its SGD detection decision remains subject to the finite-sample calibration caveat.

The two blocks of \Cref{tab:confounder-all} answer different questions.  KS supplies calibrated likelihood-based evidence for two-component normal mixture regression, and EM--BIC supplies likelihood-based component selection without bootstrap calibration.  By contrast, the SP-discovery procedures use observed candidates: DSV continues from detection to an interpretable grouping and direct verification, Xu-style enumeration examines each supplied stratification, and X-terminal CART searches for candidates that precede the focal variable in a reversed tree path.  The latter two therefore are not direct comparators for Stage~1, which operates without candidate confounders.  Within the discovery block, the equal 8/8 totals for DSV and Xu-style enumeration should also be interpreted in light of their different outputs: DSV ranks a candidate linked to a detected partition, whereas enumeration reports qualifying stratifications.  The tree result depends on which variables are selected before the focal split.  Accordingly, the block records whether each procedure surfaces the reference reversal under its stated inputs, rather than comparing like-for-like mixture detectors.


 \subsection{Real-World Dataset Details and Confounding Mechanisms}\label{app:realworld-datasets}
Here, we provide the dataset sources and dataset-level context omitted from the compact main-text comparison.  We describe all eight datasets, their focal regressions, and the mechanisms represented by the reference groupings.
 
\paragraph*{Taylor healthcare expenditures.}
 The Taylor dataset~\citep{taylor2014simpson} contains 777 observations; the focal covariate is a Hispanic indicator and the outcome is annual healthcare expenditure.  The originally used pooled-residual statistic exceeds its fitted threshold ($\hat S_T^{\mathrm{pool}}=1.880$ and $\hat q_{0.95}^{\mathrm{pool}}=1.707$).  The pooled slope is negative, whereas the ethnicity slope is positive within five of six age cohorts.  Age is the confounder because the Hispanic sample is younger on average and younger individuals have lower expenditures, so differences in age composition dominate the within-cohort association.
 
\paragraph*{Iris.}
 Fisher's Iris dataset~\citep{fisher1936use} contains 150 flowers; the focal regression relates sepal width to sepal length, with species as the reference grouping.  The originally used statistic exceeds its fitted threshold ($\hat S_T=1.444$ and $\hat q_{0.95}=1.356$).  Iris species have distinct centers in both sepal dimensions, so the pooled association combines between-species differences with the within-species relationship and can have the opposite sign from the species-specific slopes.
 
\paragraph*{Palmer Penguins.}
 The Palmer Penguins dataset~\citep{gorman2014penguins} contains 333 complete observations; the focal regression relates bill length to bill depth, with species as the reference grouping.  The originally used statistic exceeds its fitted threshold ($\hat S_T=1.643$ and $\hat q_{0.95}=1.371$).  Adelie, Chinstrap, and Gentoo penguins occupy distinct regions of the bill-dimension plane, so pooling the species mixes morphological differences between species with the association within each species.
 
\paragraph*{Atlanta CES payroll.}
 The Atlanta payroll benchmark used by \citet{wang2023simnet} contains 8{,}246 employees; the focal covariate is gender and the outcome is salary.  The originally used pooled-residual statistic exceeds its fitted threshold ($\hat S_T^{\mathrm{pool}}=1.357$ and $\hat q_{0.95}^{\mathrm{pool}}=1.305$).  Organization or job category is the confounder because units differ in both gender composition and salary scale.  The pooled gender--salary association therefore reflects the allocation of employees across units as well as within-unit salary differences.
 
\paragraph*{Power Generation.}
 The CarbonMonitor-Power data~\citep{zhu2023carbonmonitor} record daily renewable and fossil generation by country; the focal regression relates fossil generation to renewable generation, with country as the reference grouping.  The originally used statistic exceeds its fitted threshold ($\hat S_T=2.110$ and $\hat q_{0.95}=0.838$).  The pooled slope is positive ($+2.06$), whereas the within-country slope is negative in 33 of 45 countries with near-complete daily series.  Country captures power-system size: large systems produce more of both sources, and this between-country scale effect overwhelms their within-country substitution.
 
\paragraph*{California Housing.}
 The California Housing dataset~\citep{pace1997sparse} contains 20{,}640 census-block groups; the regression uses house age, average rooms, latitude, and longitude to explain median house value, with median income as the reference grouping.  The originally used statistic exceeds its fitted threshold ($\hat S_T=1.683$ and $\hat q_{0.95}=1.532$).  In the focal room-count comparison, the pooled association is positive, but the within-income slope is negative in 9 of 10 income deciles.  Across income groups, wealthier neighborhoods tend to have both larger homes and higher prices; within an income group, high room counts often identify inland or rural locations where land is cheaper.

\paragraph*{Soil nematodes.}
 The source study of \citet{vandenhoogen2019soil} assembled 6{,}759 georeferenced samples to quantify and map the global abundance and functional-group composition of soil nematodes and to characterize the environmental drivers of their biogeography.  Our focal regression relates log bacterivore abundance to the 2009 human-footprint index, with WWF biome as the reference grouping.  The originally used statistic exceeds its fitted threshold ($\hat S_T=1.419$ and $\hat q_{0.95}=1.335$).  The pooled slope is positive ($+0.0064$), whereas the within-biome slope is negative in 7 of 9 biomes.  Biomes differ in both baseline abundance and typical human pressure, so between-biome variation overwhelms the negative within-biome association.
 
\paragraph*{Nitrogen deposition and plant richness.}
 The nitrogen deposition data~\citep{simkin2016conditional} contain 15{,}136 vegetation plots from multiple source projects; the focal regression relates log species richness to critical-load exceedance, with source project as the reference grouping.  The originally used statistic exceeds its fitted threshold ($\hat S_T=1.505$ and $\hat q_{0.95}=1.358$).  The pooled slope is positive ($+0.014$), whereas the within-project slope is negative in 7 of 9 projects, with six bootstrap intervals entirely below zero.  Source project captures geographic composition: projects sample regions with different baseline species pools and nitrogen exposures, so the national pooled regression combines between-region differences with the within-region response.

\end{document}